\theoremstyle{thmstyleone}%
\newtheorem{theorem}{Theorem}
\newtheorem{lemma}[theorem]{Lemma}
\newtheorem{corollary}[theorem]{Corollary}
\theoremstyle{thmstyletwo}%
\theoremstyle{thmstylethree}%
\newcommand{\NP}{{\sf NP}}
\newcommand{\imgw}[2]{
	\begin{center}
		\includegraphics[width=#2\textwidth]{#1}
	\end{center}
}
\newcommand\circlenum[1]{\tikz[baseline=(char.base)]{
		\node[shape=circle,draw,inner sep=2pt] (char) {#1};}}
\begin{document}

\title{The Complexity of $L(p,q)$-Edge-Labelling} 
 

\author[1]{Ga\'etan Berthe}
\author*[2]{Barnaby Martin}\email{barnabymartin@gmail.com}
\author[2]{Dani\"el Paulusma}
\author[2]{Siani Smith}
%

\affil[1]{ENS de Lyon, Lyon, France}
\affil[2]{Department of Computer Science, Durham University, U.K.}




\abstract{
The {\sc $L(p,q)$-Edge-Labelling} problem is the edge variant of the well-known {\sc $L(p,q)$-Labelling} problem. It is equivalent to the {\sc $L(p,q)$-Labelling} problem itself if we restrict the input of the latter problem to line graphs.
So far, the complexity of  {\sc $L(p,q)$-Edge-Labelling} was only partially classified in the literature.
We complete this study for all $p,q\geq 0$ by showing that whenever $(p,q)\neq (0,0)$, the {\sc $L(p,q)$-Edge-Labelling} problem is \NP-complete. We do this by proving that for all $p,q\geq 0$ except $p=q=0$, there is an integer~$k$ so that {\sc $L(p,q)$-Edge-$k$-Labelling} is \NP-complete.
}

\keywords{$L(p,q)$-labeling, colouring, dichotomy, computational complexity, NP-hard}



\maketitle

\section{Introduction}\label{s-intro}

This paper studies a problem that falls under the distance-constrained labelling framework.
Given any fixed nonnegative integer values $p$ and $q$,  an {\it $L(p,q)$-$k$-labelling} is an assignment of {\it labels} from $\{0,\ldots,k-1\}$ to the vertices of a graph such that adjacent
vertices receive labels that differ by at least $p$, and vertices connected by a path of  length~$2$ receive labels that differ by at least $q$~\cite{Ca11}. Some authors instead define the latter condition as being vertices at distance~$2$ receive labels which differ by at least $q$ (e.g. \cite{FKK01}). These definitions are the same so long as $p\geq q$ and much of the literature considers only this case (e.g. \cite{JKM09}). If $q>p$, the definitions diverge. For example, in an $L(1,2)$-labelling, the vertices of a triangle $K_3$ \textcolor{black}{can take} labels $\{0,1,2\}$ in the second definition but \textcolor{black}{need} $\{0,2,4\}$ in the first. We use the {\it first} definition, in line with \cite{Ca11}.
The decision problem of testing if for a given integer $k$, a given graph $G$ admits an $L(p,q)$-$k$-labelling is known as {\sc $L(p,q)$-Labelling}. If $k$ is {\it fixed}, that is, not part of the input, we denote the problem as {\sc $L(p,q)$-$k$-Labelling}.

The {\sc $L(p,q)$-labelling} problem
has been heavily studied, both from the combinatorial and computational complexity perspectives. For a starting point, we refer the reader to the 
comprehensive survey of Calamoneri~\cite{Ca11}.\footnote{See \texttt{http://wwwusers.di.uniroma1.it/\textasciitilde calamo/survey.html} for later results.} 
The {\sc $L(1,0)$-Labelling} is the traditional {\sc Graph Colouring} problem (COL), whereas {\sc $L(1,1)$-Labelling} is known as {\sc (Proper) Injective Colouring}~\cite{BJMPS20,BJMPS21,HKSS02} and {\sc Distance $2$ Colouring}~\cite{LR92,M83}. The latter problem is studied explicitly in many papers (see~\cite{Ca11}), just as is {\sc $L(2,1)$-Labelling}~\cite{GY92,JKM09,KM18} (see also~\cite{Ca11}). 
The {\sc $L(p,q)$-labelling} problem is also studied for special graph classes, see in particular~\cite{FGK08} for a complexity dichotomy for trees. 
Janczewski et al.~\cite{JKM09} proved that if $p>q$, then \textsc{$L(p,q)$-Labelling} is \NP-complete for planar bipartite graphs.

We consider the edge version of the problem. The {\it distance} between two edges $e_1$ and $e_2$ is the length of a shortest path that has $e_1$ as its first edge and $e_2$ as its last edge minus~$1$ (we say that $e_1$ and $e_2$ are {\it adjacent} if they share an end-vertex or equivalently, are of distance~$1$ from each other).
The {\sc $L(p,q)$-Edge-Labelling} problem considers an assignment of the labels to the edges instead of the vertices, and now the corresponding distance constraints are placed instead on the edges. 

In \cite{KM18}, the complexity of  {\sc $L(2,1)$-Edge-$k$-Labelling} is classified. It is in P for $k<6$ and is \NP-complete for $k\geq 6$. In \cite{Ma02}, the complexity of  {\sc $L(1,1)$-Edge-$k$-Labelling} is classified. It is in P for $k<4$ and is \NP-complete for $k\geq 4$. In this paper we complete the classification of the complexity of  {\sc $L(p,q)$-Edge-$k$-Labelling}  in the sense that, for all $p,q \geq 0$ except $p=q=0$, we exhibit $k$ so we can show {\sc $L(p,q)$-Edge-$k$-Labelling} is \NP-complete. That is, we do not exhibit the border for $k$ where the problem transitions from P to \NP-complete (indeed, we do not even prove the existence of such a border). The authors of \cite{KM18} were looking for a more general result, similar to ours, but found the case $(p,q)=(2,1)$ laborious enough to fill one paper \cite{Ma20}. In fact, their proof settles for us all cases where $p\geq 2q$. We now give our main result. 
 
\begin{theorem}
For all $p,q\geq 0$ except if $p=q=0$, there exists an integer~$k$ so that {\sc $L(p,q)$-Edge-$k$-Labelling} is \NP-complete.\label{thm:hauptsatz}
\end{theorem}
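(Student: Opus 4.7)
I would partition the parameter region $\{(p,q):p,q\geq 0\}\setminus\{(0,0)\}$ into four pieces and handle each with a suitable choice of $k$. The case $q=0$, $p\geq 1$ reduces to classical edge colouring: taking $k=2p+1$ forces the three edges meeting at a common vertex of a cubic graph to use the unique triple of pairwise $p$-distant labels in $\{0,\dots,2p\}$, namely $\{0,p,2p\}$. Hence $L(p,0)$-Edge-$(2p+1)$-Labelling is polynomially equivalent to $3$-Edge-Colouring of cubic graphs, which is \NP-complete by Holyer's theorem. The cases $q\geq 1$ with $p\geq 2q$ are already settled by the construction of \cite{Ma20}, and $(p,q)=(1,1)$ by \cite{Ma02}, as observed in the introduction.

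The remaining region is $\{(p,q):q\geq 1,\ 0\leq p<2q\}\setminus\{(1,1)\}$. For the sub-case $p=0$ I would attempt a reduction from $3$-Colouring of graphs: represent each vertex $v$ of the input graph $H$ by a ``token'' edge $e_v$ in $G$, and each edge $uv$ of $H$ by a short connector gadget that places $e_u$ and $e_v$ at edge-distance exactly two. Taking $k=Cq+1$ for a small constant $C$ restricts the tokens to a three-element palette while leaving enough slack to label the connector edges, so that a valid $L(0,q)$-edge-labelling of $G$ projects to a proper $3$-colouring of $H$. The technical point is to check that all distance-two interactions among connectors, and between connectors and token edges, can be satisfied locally.

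For the richer sub-case $1\leq p<2q$ I would follow the gadget template of \cite{Ma20} and reduce from $3$-SAT or from $3$-Edge-Colouring using three kinds of gadget: (i) an \emph{anchor} gadget that pins a few distinguished edges to fixed labels, (ii) a \emph{wire} gadget whose valid labellings come in exactly two types, encoding a Boolean value that propagates along a path, and (iii) a \emph{clause} gadget that is labellable iff its incoming wires carry a satisfying assignment. The number of labels $k$ depends only on $(p,q)$ and is chosen just large enough for the gadgets to be realisable but small enough that the wire remains rigid.

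The main obstacle lies in this last sub-case. When $p<2q$, the $p$-band of labels forbidden on adjacent edges overlaps the $q$-band forbidden on distance-two edges, so the rigidity of the wire gadget and the non-interference of separate gadgets via distance-two paths must both be verified by a careful finite case analysis on the local label patterns. I expect that a few narrow sub-regions, particularly small $p$ with large $q$, may require dedicated gadgets tailored to those parameters, since a single label choice there forbids comparatively few labels at distance one but a wide band at distance two, which weakens the generic wire construction.
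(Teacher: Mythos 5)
Your decomposition of the parameter space is sound and your treatment of the peripheral cases is essentially fine: the $q=0$ case via forcing $\{0,p,2p\}$ at cubic vertices with $k=2p+1$ is a correct (even slightly sharper) variant of the paper's reduction from \textsc{Edge-$3$-Colouring}, and citing the $(1,1)$ result and the $p\geq 2q$ generalization of \cite{KM18} matches what the paper does (though note that the $p\geq 2q$ construction is the Knop--Masar\'{\i}k one, not \cite{Ma20}, which is only a private communication; the paper still has to spell out the generalization in its Section~\ref{a-last}). However, the entire region $0\leq p<2q$ with $q\geq 1$ --- which is where almost all of the paper's content lives --- is left as a plan rather than a proof. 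You give no concrete gadgets, no verification of rigidity, and no completeness argument for any parameter pair there, and your closing caveat that ``a few narrow sub-regions \dots may require dedicated gadgets'' is precisely the difficulty: the paper needs \emph{five} distinct constructions in that region, for $\nicefrac{q}{p}>2$, $1<\nicefrac{q}{p}\leq 2$, $\nicefrac23<\nicefrac{q}{p}<1$, $\nicefrac{q}{p}=\nicefrac23$ and $\nicefrac12<\nicefrac{q}{p}<\nicefrac23$, reducing variously from NAE-$3$-SAT, $3$-COL, $1$-in-$3$-SAT and $2$-in-$4$-SAT. A single generic anchor/wire/clause template does not suffice; the very choice of source problem changes with the regime (e.g.\ the extended $4$-star's pendant edges fall into exactly three interchangeable regimes when $\nicefrac23<\nicefrac{q}{p}<1$, which forces a $3$-COL reduction, but into a true/false pair with a $2$:$1$ clause constraint when $\nicefrac{q}{p}=\nicefrac23$, which forces $1$-in-$3$-SAT), and one regime required a computer-assisted case check while another needs end gadgets, $10$-cycles and triply extended stars. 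None of this is recoverable from your sketch.

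Two smaller omissions: you never invoke the $\gcd$ reduction (Lemma~\ref{lem:gcd}), so pairs such as $(2,2)$ or $(0,q)$ for $q>1$ are not actually reduced to the base cases you cite --- $(2,2)$ silently falls into your unproved ``$1\leq p<2q$'' bucket even though it should go to \cite{Ma02} via $\gcd$; and for $p=0<q$ your ``token edges at distance exactly two'' idea is in the right spirit but omits the one nontrivial point of the paper's Section~\ref{sec:0-1}, namely that with $a=0$ adjacency imposes no constraint at all, so the variable and clause gadgets must create equality and disequality purely through distance-two paths (triangles with subdivided pendants), which your connector description does not establish. As it stands the proposal identifies the correct high-level strategy but proves the theorem only on the boundary of the parameter space.
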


\begin{center}
\begin{table}[t]
\begin{tabular}{|l|l|l|l|}
\hline
Regime & Reduction from & Place in article & $k$ at least \\
\hline
$p=0$ and $q>0$ & $3$-COL & Section~\ref{sec:0-1} & $3q$ \\
$2\textcolor{black}{<} \nicefrac{q}{p}$ & NAE-$3$-SAT &  Section~\ref{a-2} & $(n-1)p+q+1$ \\
$1<\nicefrac{q}{p} \leq 2$ & NAE-$3$-SAT &  Section~\ref{sec:1-to-2}	& $5p+1$ \\
$\nicefrac{q}{p}= 1$ & $3$-COL & \cite{Ma02} & $4p$\\
$\nicefrac{2}{3}<\nicefrac{q}{p} \leq 1$ & $3$-COL &  Section~\ref{sec:3-col} & $3p+q+1$ \\
$\nicefrac{q}{p}=\nicefrac{2}{3}$ & $1$-in-$3$-SAT &  Section~\ref{sec:1-in-3} & $4p$ \\
$\nicefrac{1}{2}<\nicefrac{q}{p} < \nicefrac{2}{3}$ & $2$-in-$4$-SAT & Section~\ref{sec:2-in-4} & $p+4q+1$ \\
$0<\nicefrac{q}{p} \leq \nicefrac{1}{2}$ & NAE-$3$-SAT & Section~\ref{a-last} \cite{KM18} & $3p+1$ \\
$p>0$ and $q=0$ & $3$-COL & Section~\ref{s-pre} & $3p$ \\
\hline
\end{tabular}
\vspace*{0.1cm}
\caption{Table of results.
The fourth row follows from~\cite{Ma02} (which proves the case $p=q=1$) and applying Lemma~\ref{lem:gcd}. The eighth row is obtained from a straightforward generalization of the result in~\cite{KM18} for the case where $p=2$ and $q=1$. The fourth column gives the minimal $k$ for which we prove \NP-completeness. In the second row choose minimal $n \geq 4$ so that $(n-3)p\geq q$.}\label{t-thetable}
\vspace*{-0.7cm}
\label{tbl:main}
\end{table}
\end{center}

\vspace*{-0.7cm}
\noindent
The proof follows by case analysis as per Table~\ref{tbl:main}, where the corresponding section for each of the subresults is specified. We are able to reduce to the case that $\mathrm{gcd}(p,q)=1$, due to the forthcoming Lemma~\ref{lem:gcd}. We prove \NP-hardness by reduction from graph $3$-colouring and several satisfiability variants. These latter are known to be \NP-hard from Schaefer's classification \cite{Sc78}.
Each section begins with a theorem detailing the relevant \NP-completeness.
The case $p=q=0$ is trivial (never use more than one colour) and is therefore omitted. 
Our hardness proofs involve gadgets that have certain common features, for example, the vertex-variable gadgets are generally star-like. For one case, we have a computer-assisted proof (as we will explain in detail). 

By
Theorem~\ref{thm:hauptsatz} we obtain a complete classification of {\sc $L(p,q)$-Edge-Labelling}.

\begin{corollary}
For all $p,q\geq 0$ except $p=q=0$, {\sc $L(p,q)$-Edge-Labelling} is \NP-complete.
\end{corollary}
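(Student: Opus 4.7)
The plan is to derive the corollary directly from Theorem~\ref{thm:hauptsatz} by noting that the fixed-$k$ problem trivially reduces to the parameterised one and by checking membership in \NP{} separately.

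For membership, I would argue as follows. An instance of {\sc $L(p,q)$-Edge-Labelling} is a pair $(G,k)$. A candidate certificate is simply a function $f\colon E(G)\to\{0,\ldots,k-1\}$. Each label is an integer of at most $\lceil\log_2 k\rceil$ bits, so the certificate has size $O(|E(G)|\log k)$, which is polynomial in the size of the input (where $k$ is given in binary). Verification consists of iterating over all pairs of edges $(e_1,e_2)$ at distance $1$ (respectively distance~$2$, where distance is measured as specified in the paper) and checking $|f(e_1)-f(e_2)|\geq p$ (respectively $\geq q$); both conditions are checkable in polynomial time, for example by computing the edge graph of $G$ and its square, then iterating. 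Hence {\sc $L(p,q)$-Edge-Labelling} is in \NP.

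For \NP-hardness, fix any $(p,q)\neq(0,0)$. By Theorem~\ref{thm:hauptsatz}, there exists an integer $k=k(p,q)$ such that {\sc $L(p,q)$-Edge-$k$-Labelling} is \NP-complete. Since $p$ and $q$ are constants of the problem and $k$ is a fixed integer, the mapping $G\mapsto (G,k)$ is a polynomial-time many-one reduction from {\sc $L(p,q)$-Edge-$k$-Labelling} to {\sc $L(p,q)$-Edge-Labelling}: $G$ admits an $L(p,q)$-$k$-labelling of its edges if and only if the pair $(G,k)$ is a yes-instance of {\sc $L(p,q)$-Edge-Labelling}. This transfers \NP-hardness to the parameterised problem.

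There is no real obstacle here; the only thing to double-check is that the input encoding convention for $k$ is consistent (binary versus unary) so that the \NP-membership certificate is legitimately polynomial, but since $k$ is bounded in all the reductions of Theorem~\ref{thm:hauptsatz} by a polynomial (indeed a linear) function of the gadget size and thus of the input, this distinction is immaterial for the hardness direction. Combining the two directions yields the corollary.
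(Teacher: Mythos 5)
Your proposal is correct and matches the paper's (implicit) argument: the paper states the corollary as an immediate consequence of Theorem~\ref{thm:hauptsatz}, relying on exactly the trivial reduction $G\mapsto(G,k)$ for hardness and the standard certificate argument for membership in \NP{} that you spell out. Nothing further is needed.
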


\noindent
Note that {\sc $L(p,q)$-Edge-Labelling} is equivalent to {\sc $L(p,q)$-Labelling} for line graphs (the line graph of a graph $G$ has vertex set $E(G)$ and two vertices $e$ and~$f$ in it are adjacent if and only if $e$ and $f$ are adjacent edges in $G$). 
Hence, we obtain another dichotomy for {\sc $L(p,q)$-Labelling} under input restrictions, besides the ones for trees~\cite{FGK08}
and if $p>q$, (planar) bipartite graphs~\cite{JKM09}.

\begin{corollary}\label{c-line}
For all $p,q\geq0$ except $p=q=0$, {\sc $L(p,q)$-Labelling} is \NP-complete for the class of line graphs.
\end{corollary}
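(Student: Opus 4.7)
The plan is to derive Corollary~\ref{c-line} directly from the NP-completeness of \textsc{$L(p,q)$-Edge-Labelling} (the corollary just proved from Theorem~\ref{thm:hauptsatz}) by exhibiting a polynomial-time reduction from \textsc{$L(p,q)$-Edge-Labelling} to \textsc{$L(p,q)$-Labelling} restricted to line graphs. The reduction is simply $G \mapsto L(G)$: an arbitrary input graph $G$ is mapped to its line graph, which is by definition a line graph and can be constructed in polynomial time.

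The work lies in checking that a labelling of the edges of $G$ is an $L(p,q)$-labelling precisely when the corresponding labelling of the vertices of $L(G)$ is an $L(p,q)$-labelling. To see this, I would unwind the definitions. A walk in $L(G)$ of the form $v_{e_1} = v_{f_0}, v_{f_1}, \ldots, v_{f_m} = v_{e_2}$ corresponds bijectively to a walk in $G$ whose consecutive edges are $f_0, f_1, \ldots, f_m$, where each pair $f_i, f_{i+1}$ shares an endpoint. Such a walk has $m+1$ edges in $G$, so its edge-distance contribution is $m$ by the paper's definition, which is exactly the length of the path in $L(G)$. Hence $d_{L(G)}(v_{e_1}, v_{e_2})$ equals the edge-distance between $e_1$ and $e_2$ in $G$. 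Consequently the distance-$1$ constraint (labels differing by at least $p$) and the distance-$2$ constraint (labels differing by at least $q$) transfer verbatim between the two problems.

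Putting the two ingredients together: given an instance $G$ (with parameter $k$) of \textsc{$L(p,q)$-Edge-Labelling}, the image $(L(G), k)$ is a yes-instance of \textsc{$L(p,q)$-Labelling} on line graphs iff $(G,k)$ is a yes-instance. Since \textsc{$L(p,q)$-Edge-Labelling} is NP-complete for every $(p,q) \neq (0,0)$ by the preceding corollary, so is \textsc{$L(p,q)$-Labelling} restricted to line graphs. Membership in NP is immediate: a candidate labelling is verifiable in polynomial time by checking the pairwise constraints.

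There is no real obstacle; the only point requiring care is the verification that edge-distance in $G$ (as defined in the paper, namely length of a shortest walk realising the two edges as first and last, minus one) matches ordinary graph distance in $L(G)$. This is folklore but worth recording explicitly because the paper's edge-distance convention differs slightly from taking distances between edges as subsets of vertices.
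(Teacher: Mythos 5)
Your proposal is correct and matches the paper's approach: the paper derives Corollary~\ref{c-line} by simply noting that {\sc $L(p,q)$-Edge-Labelling} on $G$ is equivalent to {\sc $L(p,q)$-Labelling} on the line graph $L(G)$, which is exactly your reduction $G \mapsto L(G)$. The only difference is that you spell out the distance/adjacency correspondence that the paper leaves implicit.
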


{\color{black}
\paragraph{Related work}
An extended abstract of this paper, omitting numerous proofs, appeared at The 16th International Conference and Workshops on Algorithms and Computation (WALCOM) 2022 \cite{WALCOM}.
}

\section{Preliminaries}\label{s-pre}

We use the terms colouring and labelling interchangeably. A special role will be played by 
the \emph{extended $n$-star} (especially for $n=4$). This is a graph built from an $n$-star $K_{1,n}$ by subdividing each edge (so it becomes a path of length $2$). 
Instead of referring to the problem as {\sc $L(p,q)$-Labelling} (or {\sc $L(h,k)$-Labelling}) 
we will use {\sc $L(a,b)$-Labelling} to free these other letters for alternative uses.

The following lemma is folklore and applies equally to the vertex- or edge-labelling problem. Note that $\gcd(0,b)=b$.
\begin{lemma}
Let $\gcd(a,b)=d>1$. Then the identity is a polynomial time reduction from 
{\sc $L(\nicefrac{a}{d},\nicefrac{b}{d})$-(Edge)-$k$-Labelling} to {\sc $L(a,b)$-(Edge)-$kd$-Labelling}.
\label{lem:gcd}
\end{lemma}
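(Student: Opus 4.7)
The plan is to establish the biconditional that a graph $G$ admits an $L(a/d,b/d)$-$k$-labelling if and only if $G$ admits an $L(a,b)$-$kd$-labelling; because the reduction is the identity on the input graph, this equivalence is precisely what is required.

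For the forward implication, given an $L(a/d, b/d)$-$k$-labelling $f$ into $\{0,\ldots,k-1\}$, I would define $f'(v) := d \cdot f(v)$, whose image lies in $\{0, d, 2d, \ldots, (k-1)d\} \subseteq \{0, \ldots, kd-1\}$. The constraints then scale cleanly: for adjacent $u,v$ we get $|f'(u) - f'(v)| = d \cdot |f(u)-f(v)| \geq d \cdot (a/d) = a$, and the analogous inequality $\geq b$ holds for vertices (or edges) at distance two. This direction is essentially bookkeeping.

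The more delicate direction is the reverse. Starting from an $L(a,b)$-$kd$-labelling $g$, I would set $g'(v) := \lfloor g(v)/d \rfloor$, which lands in $\{0,\ldots,k-1\}$ since $g(v)\leq kd-1$. The substantive step is a small arithmetic fact: if $x,y$ are nonnegative integers with $|x-y| \geq dm$ for a nonnegative integer $m$, then $|\lfloor x/d \rfloor - \lfloor y/d \rfloor| \geq m$. Writing $x = qd+r$ and $y = q'd+r'$ with remainders in $\{0,\ldots,d-1\}$, and assuming WLOG $x \geq y$, one has $(q-q')d = (x-y) - (r-r') \geq dm - (d-1)$, so the integer $q-q'$ must be at least $m$. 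Applying this with $m = a/d$ to adjacent pairs and $m = b/d$ to pairs at distance two shows that $g'$ is an $L(a/d, b/d)$-$k$-labelling, completing the equivalence.

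The edge-labelling version is handled by the identical argument, since distance between edges is defined in perfect parallel to distance between vertices and the labelling is simply applied to $E(G)$ instead of $V(G)$. The only genuine obstacle is the floor calculation in the reverse direction, which is essentially a one-line estimate; the boundary cases $a=0$ or $b=0$ impose only vacuous constraints on both sides and hence need no separate treatment.
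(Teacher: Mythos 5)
Your proposal is correct and complete: the forward direction by scaling labels by $d$ and the reverse direction via the floor estimate $|x-y|\geq dm \Rightarrow |\lfloor x/d\rfloor - \lfloor y/d\rfloor|\geq m$ together establish exactly the equivalence the identity reduction requires. The paper states this lemma as folklore and gives no proof at all, so there is nothing to compare against, but your argument is the standard one and it is sound (including the observation that $d$ divides both $a$ and $b$, so $m=a/d$ and $m=b/d$ are integers, which is what makes the rounding step work).
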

%
This result and the known \NP-completeness of  {\sc Edge-$3$-Colouring}~\cite{Ho81} imply:

\begin{corollary}
For all $a>0$, {\sc $L(a,0)$-Edge-$3a$-Labelling} is \NP-complete.
\end{corollary}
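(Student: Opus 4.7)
The plan is to observe that the corollary is essentially immediate from combining two ingredients: the \NP-completeness of \textsc{Edge-$3$-Colouring} due to Holyer~\cite{Ho81} and the gcd reduction from Lemma~\ref{lem:gcd}. Specifically, \textsc{Edge-$3$-Colouring} is exactly the problem of assigning labels from $\{0,1,2\}$ to the edges of a graph so that adjacent edges receive distinct (hence differ by at least $1$) labels, with no constraint on edges at distance $2$. This is literally \textsc{$L(1,0)$-Edge-$3$-Labelling}, so Holyer's theorem can be restated as: \textsc{$L(1,0)$-Edge-$3$-Labelling} is \NP-complete.

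Next I would handle the case $a=1$ trivially: the corollary's statement for $a=1$ asserts \NP-completeness of \textsc{$L(1,0)$-Edge-$3$-Labelling}, which is exactly Holyer's theorem as just rephrased. For $a>1$, I would apply Lemma~\ref{lem:gcd} with the pair $(a,b)=(a,0)$. Since $\gcd(a,0)=a$ and $a>1$, the hypothesis of the lemma is satisfied with $d=a$. The lemma then provides a polynomial-time reduction (in fact, the identity) from \textsc{$L(a/d,0/d)$-Edge-$k$-Labelling}, i.e.\ \textsc{$L(1,0)$-Edge-$k$-Labelling}, to \textsc{$L(a,0)$-Edge-$ka$-Labelling}. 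Setting $k=3$ yields a reduction from the \NP-complete problem \textsc{Edge-$3$-Colouring} to \textsc{$L(a,0)$-Edge-$3a$-Labelling}, establishing \NP-hardness.

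For membership in \NP, I would note that a candidate labelling is of polynomial size (the label set has $3a$ elements, each writable in $O(\log a)$ bits) and the constraint that adjacent edges receive labels differing by at least $a$ is checkable in polynomial time by iterating over pairs of adjacent edges. This completes the two cases $a=1$ and $a>1$, and hence the corollary for all $a>0$.

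There is no real obstacle here: the only mild point to verify is that $\gcd(a,0)=a$ (as explicitly noted just before Lemma~\ref{lem:gcd}), so that the gcd reduction applies with the correct value of $d$, and that the reduction target parameter $kd$ matches the claimed $3a$ when $k=3$ and $d=a$.
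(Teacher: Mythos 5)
Your proposal is correct and takes exactly the route the paper intends: the paper derives this corollary directly from Lemma~\ref{lem:gcd} (with $d=\gcd(a,0)=a$) together with Holyer's \NP-completeness of {\sc Edge-$3$-Colouring}, viewed as {\sc $L(1,0)$-Edge-$3$-Labelling}. Your separate handling of the $a=1$ case (where the lemma's hypothesis $d>1$ fails) is a careful touch the paper leaves implicit.
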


\section{Case \boldmath{$a=0$ and $b>0$}}\label{sec:0-1}	

By Lemma~\ref{lem:gcd} we only have to consider $a=0$ and $b=1$.

	\begin{theorem}
		The problem {\sc $L(0,1)$-Edge-$3$-Labelling} is \NP-complete.
		\label{thm:0-1}
	\end{theorem}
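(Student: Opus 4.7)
My plan is to reduce from the \NP-complete problem of proper $3$-edge-colouring of cubic graphs (Holyer~\cite{Ho81}). Given a cubic graph $G$, I construct $G'$ by subdividing each edge of $G$ once: every $e=uv\in E(G)$ is replaced by a length-$2$ path through a fresh middle vertex $m_e$, so $E(G')=\{(u,m_e),(m_e,v):e=uv\in E(G)\}$. I then claim that $G$ admits a proper $3$-edge-colouring if and only if $G'$ admits an $L(0,1)$-edge-$3$-labelling, which together with the routine \NP-membership of the problem yields the theorem.

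The heart of the argument is a classification of which pairs of edges of $G'$ are linked by a length-$2$ path in the line graph of $G'$. Such a path $\varepsilon_1-\varepsilon-\varepsilon_2$ needs the middle edge $\varepsilon$ to share an endpoint with each of $\varepsilon_1,\varepsilon_2$. Since every middle vertex $m_e$ of $G'$ has degree exactly $2$ and since $G'$ contains neither any direct edge $(u,w)$ between two original vertices nor any direct edge $(m_e,m_{e'})$ between two middle vertices, the relevant length-$2$ paths fall into just two types: (a) two halves $(v,m_e),(v,m_{e'})$ at a common original vertex $v$ with $e\neq e'$ both incident to $v$, witnessed by any third half at $v$ (which exists because $G$ is cubic); and (b) a half $(u,m_e)$ and a half $(v,m_{e'})$ where $e=uv$ and $e'\neq e$ is incident to $v$, witnessed by the remaining half $(m_e,v)$ of $e$. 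The two halves of a single edge have no such witness and are unconstrained.

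For the forward direction, given a proper $3$-edge-colouring $c$ of $G$, I colour both halves of each edge $e$ with $c(e)$; both (a) and (b) constraints then reduce to $c(e)\neq c(e')$ for distinct edges $e,e'$ sharing a vertex of $G$, which holds by properness of $c$. For the backward direction, let $\chi$ be an $L(0,1)$-edge-$3$-labelling of $G'$. At every original vertex $v$ (of degree $3$ in $G'$, since $G$ is cubic), constraint (a) forces the three $v$-halves to take all three colours $\{0,1,2\}$. Then for each edge $e=uv$, constraint (b) forces the $u$-half of $e$ to differ from the $v$-halves of the two other edges at $v$, leaving it only the unique colour equal to the $v$-half of $e$ itself. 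Hence both halves of $e$ share a common colour $c(e)$, and (a) at every original vertex says exactly that $c$ is a proper $3$-edge-colouring of $G$.

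The step I expect to be the main obstacle is the classification of length-$2$ line-graph paths in the second paragraph, namely ruling out any type beyond (a) and (b). This amounts to checking that every candidate middle edge either reduces to the shape of a half that already puts the pair into type (a) or (b), or would require a non-existent direct edge between two original vertices or between two middle vertices. Once this short case analysis is pinned down, both directions of the equivalence drop out from the rigid forcing at each cubic vertex.
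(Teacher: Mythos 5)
Your proof is correct, but it takes a genuinely different route from the paper. The paper reduces from vertex $3$-colouring using explicit gadgets: a variable gadget (a triangle hung off a star, forcing all pendant edges to carry the same label) and a clause gadget (a bull with subdivided pendant edges, forcing its two pendant edges to carry distinct labels), in the same style as the other cases of the classification. You instead reduce from $3$-edge-colouring of cubic graphs~\cite{Ho81} by a single subdivision of every edge, and your key observation --- that in the subdivided graph the only pairs of edges sharing a common neighbour are (a) two halves at an original vertex and (b) a far half of $e=uv$ against a $v$-half of another edge at $v$, while the two halves of one edge are unconstrained because the middle vertex has degree $2$ and no edge joins two original or two middle vertices --- is exactly the case analysis needed, and your rigidity argument at each cubic vertex (three halves take all three colours, so constraint (b) forces the two halves of each edge to agree) is sound. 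Your approach buys a considerably shorter, gadget-free proof and in fact yields hardness already on subdivisions of cubic graphs; the paper's gadget approach buys uniformity with the template used throughout the rest of the article. One presentational point worth making explicit: a ``path of length two'' between edges $\varepsilon_1,\varepsilon_2$ means a third edge $\varepsilon\notin\{\varepsilon_1,\varepsilon_2\}$ adjacent to both (this is the line-graph reading the paper itself uses), which is what justifies declaring the two halves of a single edge unconstrained.
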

	Let us use colours $\{0,1,2\}$. Our \NP-hardness proof involves a reduction from 3-COL but we retain the nomenclature of variable gadget and clause gadget (instead of vertex gadget and edge gadget) in deference to the majority of our other sections. Our variable gadget consists of a triangle attached on one of its vertices to a leaf vertex of a star. Our clause gadget consists of a bull, each of whose pendant edges (vertices of degree $1$) has an additional pendant edge added (that is, they are subdivided). This is equivalent to a triangle with a path of length $2$ added to each of two of the three vertices. We draw our variable gadget in Figure~\ref{fig:0-1-variable} and our clause gadget in Figure~\ref{fig:0-1-clause}.
	\begin{lemma}
	In any valid $L(0,1)$-edge-$3$-labelling of the variable gadget, each of the pendant edges must be coloured the same.
	\label{lem:0-1-variable}
	\end{lemma}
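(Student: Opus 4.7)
My plan is to work directly from the distance-$2$ constraints of the $L(0,1)$-condition: with the palette $\{0,1,2\}$, adjacent edges are unconstrained, but every pair of edges at distance exactly two must receive different labels. I will label the triangle $abc$ with $a$ the vertex identified with a leaf of the star, write $s$ for the centre of the star and $x_1,\dots,x_m$ for its remaining leaves, so that the pendant edges of the gadget are exactly $sx_1,\dots,sx_m$.

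The first step is to inventory the edge-pairs at distance exactly two that involve any pendant edge $sx_i$. The length-three walks $x_i,s,a,b$ and $x_i,s,a,c$ show that $sx_i$ is at distance two from both triangle edges $ab$ and $ac$; every pendant is at distance $1$ from $sa$ and from every other pendant (through the shared vertex $s$) and at distance at least $3$ from $bc$. The conclusion I extract from this inventory is that each pendant edge must simultaneously avoid the labels of $ab$ and of $ac$.

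The crux of the argument, the second step, is to verify that $\mathrm{col}(ab)\neq \mathrm{col}(ac)$ in every valid $L(0,1)$-edge-$3$-labelling. I would argue by contradiction: assuming $\mathrm{col}(ab)=\mathrm{col}(ac)$, I would use the additional distance-$2$ constraint $\mathrm{col}(sa)\neq \mathrm{col}(bc)$ coming from the walk $s,a,b,c$, together with the forced label-avoidance on each pendant edge, and push the induced restrictions through the remaining distance-$2$ incidences of the gadget until the palette $\{0,1,2\}$ runs out of admissible colours.

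Once $\mathrm{col}(ab)$ and $\mathrm{col}(ac)$ are known to be distinct, the set $\{0,1,2\}\setminus\{\mathrm{col}(ab),\mathrm{col}(ac)\}$ is a singleton, and every pendant edge $sx_i$ is forced to carry this unique third colour, which is exactly the lemma. The main technical obstacle I anticipate is the second step, which is a small but careful finite case analysis of the possible labellings of $ab,ac,bc,sa$; everything else is a routine distance computation inside the gadget.
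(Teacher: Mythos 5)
Your overall architecture matches the paper's: each pendant edge $sx_i$ is at distance two from the two triangle edges at the apex $a$ (via the paths $x_i,s,a,b$ and $x_i,s,a,c$), so once $\mathrm{col}(ab)\neq\mathrm{col}(ac)$ is established every pendant is forced onto the unique remaining colour of $\{0,1,2\}$. Step 1 and the final step are fine.

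The gap is Step 2, which you rightly call the crux but only sketch --- and the sketch cannot be completed from the constraints you have inventoried. The constraints you list (each $sx_i$ avoids $\mathrm{col}(ab)$ and $\mathrm{col}(ac)$; $\mathrm{col}(sa)\neq\mathrm{col}(bc)$; pendants merely adjacent to one another and to $sa$) are all satisfied by, for instance, $\mathrm{col}(ab)=\mathrm{col}(ac)=\mathrm{col}(bc)=0$, $\mathrm{col}(sa)=1$, with the pendants coloured arbitrarily from $\{1,2\}$. So no case analysis of $ab,ac,bc,sa$ will make the palette ``run out'': the assumption $\mathrm{col}(ab)=\mathrm{col}(ac)$ is consistent with everything you wrote down, and under it the lemma's conclusion genuinely fails (one pendant could be $1$ and another $2$). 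What you are missing is the fact the paper opens its proof with: the three triangle edges are \emph{pairwise} subject to the distance-two constraint, because for any two of them the third triangle edge lies in between (e.g.\ the closed walk $a,b,c,a$ has $ab$ as first edge, $bc$ in between, and $ca$ as last edge); the paper's stated convention is precisely that such pairs incur the $q$-constraint even though they are also adjacent. Hence $ab,ac,bc$ receive three distinct colours, $\mathrm{col}(ab)\neq\mathrm{col}(ac)$ is immediate with no contradiction argument needed, and each pendant is forced to equal $\mathrm{col}(bc)$ --- at which point your proof coincides with the paper's.
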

	\begin{proof}
	Each of the edges in the triangle must be coloured distinctly as there is a path of length two from each to any other (by this we mean with a single edge in between, though they are also adjacent). Suppose the triangle edge that has two nodes of degree $2$ in the variable gadget is coloured $i$. It is this colour that must be used for all of the pendant edges. The remaining edge may be coloured by anything from $\{0,1,2\}\setminus \{i\}$. However, we will always choose the option $i-1 \bmod 3$. 
	\end{proof}
		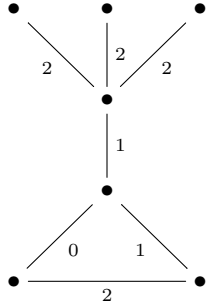
\begin{figure}
\vspace*{-0.3cm}
		\begin{center}
\[
\xymatrix{
\bullet & \bullet &  \bullet \\
&  \bullet  \ar@{-}[u]_{2}  \ar@{-}[ur]_{2}  \ar@{-}[ul]^{2} & &  \\
&  \bullet  \ar@{-}[u]_{1} & &  \\
\bullet \ar@{-}[ur]_{0} \ar@{-}[rr]_{2} & & \bullet \ar@{-}[ul]^{1} \\
}
\]		
		\end{center}
		\caption{The variable gadget for Theorem~\ref{thm:0-1}.}
		\label{fig:0-1-variable}
\vspace*{-0.3cm}
		\end{figure}

	\begin{figure}
		\begin{center}
$
\xymatrix{
& \bullet  \ar@{-}[dd]^{2}  \ar@{-}[ld]^{1} & \bullet  \ar@{-}[l]_{1} &  \bullet  \ar@{-}[l]_{0}  \\
\bullet & & \\
&\bullet  \ar@{-}[lu]_{0} & \bullet  \ar@{-}[l]_{2} &  \bullet  \ar@{-}[l]_{1}  \\
}
$
\hspace{1.5cm}
$
\xymatrix{
& \bullet  \ar@{-}[dd]^{2}  \ar@{-}[ld]^{1} & \bullet  \ar@{-}[l]_{1} &  \bullet  \ar@{-}[l]_{0}  &  \bullet  \ar@{--}[l]_{2} \\
\bullet & & & \\
&\bullet  \ar@{-}[lu]_{0} & \bullet  \ar@{-}[l]_{2} &  \bullet  \ar@{-}[l]_{1}  &  \bullet  \ar@{--}[l]_{0} \\
}
$
		\end{center}
		\caption{The clause gadget for Theorem~\ref{thm:0-1} (left) drawn also together with its interface with a variable gadget (right). The dashed line is an inner edge of the variable gadget.}
		\label{fig:0-1-clause}
		\end{figure}
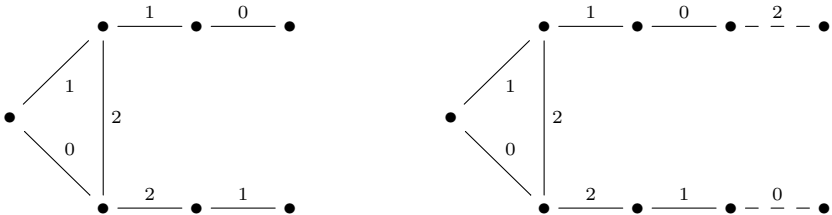
\begin{lemma}
	In any valid $L(0,1)$-edge-$3$-labelling of the clause gadget, the two pendant edges must be coloured distinctly.
	\label{lem:0-1-clause}
	\end{lemma}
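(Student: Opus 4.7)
The plan is to pin down the $L(0,1)$-edge constraints on the seven edges of the clause gadget and then read off the forced colours of the two pendant edges. Label the three triangle vertices of the bull by $A, D, E$, and attach the subdivided pendant paths $A{-}B{-}C$ and $E{-}F{-}G$ to $A$ and $E$ respectively, so that the two pendant edges referred to in the lemma are $BC$ and $FG$.

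First, exactly as in the proof of Lemma~\ref{lem:0-1-variable}, the three triangle edges $AD$, $AE$, $DE$ are pairwise forced to differ, since any two of them are connected by a length-two edge-path through the third triangle edge. With only three colours available, these three triangle edges must jointly use each colour in $\{0,1,2\}$ exactly once.

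Next, I would extract the constraints on the pendant edge $BC$. Although $BC$ is not adjacent to $AD$ or to $AE$, the intermediate edge $AB$ witnesses in each case a length-two edge-path (it joins one endpoint of $BC$, namely $B$, to one endpoint of each triangle edge, namely $A$), so $BC$ must differ from both $AD$ and $AE$. This forces $BC$ to take the unique remaining colour, namely that of $DE$. By the symmetric argument, using $EF$ as the witness, $FG$ is forced to take the colour of $AD$. Since $AD$ and $DE$ carry different colours (being two edges of the triangle), we conclude $BC\neq FG$, as required.

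The delicate point is keeping track of exactly which pairs of edges the $q$-constraint actually touches under the paper's convention. In particular, $BC$ and $DE$ are \emph{not} forced to differ---no edge of the gadget joins a vertex of $\{B,C\}$ to a vertex of $\{D,E\}$---and it is precisely this absence that leaves room for $BC$ to take $DE$'s colour in the first place; the symmetric absence between $\{F,G\}$ and $\{A,D\}$ plays the analogous role for $FG$.
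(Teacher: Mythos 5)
Your proof is correct and follows essentially the same route as the paper's: pairwise-distinct triangle edges, then propagation of the distance-$2$ constraints through $AB$ and $EF$ forces each pendant edge to take the colour of the opposite triangle edge, and these two colours differ. You are in fact more explicit than the paper about exactly which pairs carry a constraint (the paper just asserts that the triangle colouring ``determines the colours of the remaining edges''), which is a welcome precision but not a different argument.
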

	\begin{proof}
	Each of the edges in the triangle must be coloured distinctly as there is a path of length two from each to any other. Suppose the triangle edge that has two nodes of degree $3$ in the clause gadget is coloured (w.l.o.g.) $2$. The remaining edges in the triangle must be given $0$ and $1$, in some order. This then determines the colours of the remaining edges and enforces that the two pendant edges must be coloured distinctly. However, suppose we had started first by colouring distinctly the pendant edges. We could then choose a colouring of the remaining edges of the clause gadget so as to enforce the property that, if a pendant edge is coloured $i$, then its neighbour (in the clause gadget) is coloured $i+1 \bmod 3$. This is the colouring we will always choose. 
	\end{proof}
We are now ready to prove Theorem~\ref{thm:0-1}.
\begin{proof}[Proof of Theorem~\ref{thm:0-1}.]	
We reduce from 3-COL. Let $G$ be an instance of 3-COL involving $n$ vertices and $m$ edges. Let us explain how to build an instance $G'$ for {\sc $L(0,1)$-Edge-$3$-Labelling}. Each particular vertex may only appear in at most $m$ edges (its degree), so for each vertex we take a copy of the variable gadget which has $m$ pendant edges. For each edge of $G$ we use a clause gadget to unite an instance of these pendant edges from the corresponding two variable gadgets. We use each pendant edge from a variable gadget in at most one clause gadget. We identify the pendant edge of a variable gadget with a pendant edge from a clause gadget so as to form a path from one to the other. We claim that $G$ is a yes-instance of 3-COL iff $G'$ is a yes-instance of {\sc $L(0,1)$-Edge-$3$-Labelling}.

(Forwards.) Take a proper $3$-colouring of $G$ and induce these colours on the pendant edges of the corresponding variable gadgets. Distinct colours on pendant edges can be consistently united in a clause gadget since we choose, for a pendant edge coloured $i$: $i-1 \bmod 3$ for its neighbour in the variable gadget, and $i+1 \bmod 3$ for its neighbour in the clause gadget.

(Backwards.) From a valid $L(0,1)$-edge-$3$-labelling of $G'$, we infer a $3$-colouring of $G$ by reading the pendant edge labels from the variable gadget of the corresponding vertex. The consistent labelling of each vertex follows from Lemma~\ref{lem:0-1-variable} and the fact that it is proper follows from Lemma~\ref{lem:0-1-clause}. 
\end{proof}

\section{Case \boldmath{$2 \leq \frac{b}{a}$}}\label{a-2}

In the case $2 \leq \frac{b}{a}$, we can no longer get away with just an extended $4$-star on which to base our variable gadget (as we did in Section~\ref{sec:1-to-2}). We need to move to higher degree. On the other hand, we will be able to dispense with the pendant $5$-stars.
	\begin{theorem}
	If $2\leq \frac ba$, let $n\geq 4$ be such that $(n-3)a\geq b$  then problem 	
	{\sc $L(a,b)$-Edge-$((n-1)a+b+1)$-Labelling} is \NP-complete.
	\label{thm:beyond-2}
	\end{theorem}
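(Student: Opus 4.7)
The plan is to reduce from NAE-$3$-SAT, using gadgets that live in the palette $\{0,1,\ldots,k-1\}$ with $k=(n-1)a+b+1$. The variable gadget will be the extended $n$-star $S_n$, with centre $c$ joined to middle vertices $v_1,\ldots,v_n$, each joined to a leaf; call $f_i=cv_i$ the spoke edges and $e_i=v_iu_i$ the pendant edges. Note that any two spokes share $c$ and so are at edge-distance $1$; a spoke $f_i$ and the pendant $e_i$ meeting it are at distance $1$; a spoke $f_i$ and a non-incident pendant $e_j$ are at distance $2$; pendants are pairwise at distance $3$ and so carry no direct constraint between each other.

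The first task is to prove a Variable Lemma: in any valid $L(a,b)$-edge-$k$-labelling of $S_n$ the multiset of spoke labels is either $\{0,a,2a,\ldots,(n-1)a\}$ (the \emph{true} state) or $\{b,b+a,b+2a,\ldots,b+(n-1)a\}$ (the \emph{false} state); in the true state every pendant edge whose spoke is not labelled $(n-1)a$ is forced to the value $(n-1)a+b$, and symmetrically in the false state. To prove this I would fix a valid labelling, let $\ell_1<\cdots<\ell_n$ be the spoke labels, and examine the valid-label set of each pendant $e_i$: it must avoid $(\ell_j-b,\ell_j+b)$ for every $j\neq i$ and $(\ell_i-a,\ell_i+a)$. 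Because $b\geq 2a$, any two of these forbidden intervals at consecutive spokes overlap, so the forbidden zone is a single contiguous block. The total slack in the spoke placement, namely $\ell_1+\bigl((n-1)a+b-\ell_n\bigr)+\sum_{i=1}^{n-1}(\ell_{i+1}-\ell_i-a)$, equals $b$; any distribution of this slack other than ``all at the left end'' or ``all at the right end'' leaves an interior $e_i$ with an empty valid set, and here the hypothesis $(n-3)a\geq b$ is used to make the conclusion uniform for every possible interior index $i$. Once the spoke pattern is pinned to one of the two progressions, a direct check shows that the $n-1$ ``non-extreme'' pendants are each uniquely forced.

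The second task is to design a clause gadget $C$ with three distinguished input edges $I_1,I_2,I_3$, intended to be identified with uniquely-forced pendant edges of three variable gadgets, and prove a Clause Lemma: given $I_1,I_2,I_3\in\{0,(n-1)a+b\}$, the gadget $C$ admits a valid edge-labelling iff the three $I_j$ are not all equal. Because the labelling problem is invariant under the involution $c\mapsto k-1-c$, it suffices to rule out the all-$(n-1)a+b$ configuration and then invoke this symmetry for the all-$0$ case. The natural candidate is to let the three inputs emerge as pendant-like edges from a small ``core'' whose internal labels must span enough of the palette that both extremes must be supplied by the inputs: the tightness of the palette at $k=(n-1)a+b+1$ should make this achievable, and exploiting the same slack-at-one-end principle that drives the Variable Lemma will be the guiding idea.

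Given both lemmas the reduction is standard. For each variable $x$ place one variable gadget (chaining several, with one forced interior pendant identified between consecutive copies to enforce a common state, if $x$ has more than $n-2$ occurrences); for each clause attach a clause gadget with its three inputs identified with forced interior pendants of the three variable gadgets corresponding to its variables. A satisfying NAE-assignment induces true/false states on the variable gadgets, and each clause gadget can then be extended to a valid labelling by the Clause Lemma since its three inputs are not all equal; conversely any valid edge-labelling forces each variable gadget into a single state by the Variable Lemma, giving an assignment that the Clause Lemma verifies to satisfy each clause. The main obstacle throughout is the Clause Lemma: with the palette as tight as $(n-1)a+b+1$, engineering a small subgraph whose labellings forbid exactly the two all-equal configurations while admitting every mixed configuration will require a careful case-based construction, whereas the Variable Lemma reduces to a routine if tedious interval-covering argument driven by $b\geq 2a$ and $(n-3)a\geq b$.
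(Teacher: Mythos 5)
Your variable-gadget analysis is broadly in the right direction (and your claim that the spoke labels must form one of the two arithmetic progressions $\{0,a,\ldots,(n-1)a\}$ or $\{b,a+b,\ldots,(n-1)a+b\}$ is essentially what underlies the paper's Lemma~\ref{lem:beyond-2}, though the paper phrases its conclusion purely in terms of the pendant labels). But there is a genuine gap, and it is exactly where you yourself flag "the main obstacle": the clause gadget is never constructed, and the way you have set up the variable gadget makes the natural clause gadget unavailable. You propose to read the truth value off the \emph{forced} pendants, which in your true state all carry the single label $(n-1)a+b$ and in the false state all carry $0$. Three such inputs cannot simply be brought together, since two inputs representing the same truth value would be equal and hence violate any distance-$1$ or distance-$2$ constraint between them; so your clause gadget must keep the inputs pairwise at distance at least $3$ and still somehow forbid exactly the two all-equal patterns. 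That is a nontrivial NAE-detecting construction (comparable in difficulty to the paper's $\frac12<\frac ba<\frac23$ case), and "a small core whose internal labels must span enough of the palette that both extremes must be supplied by the inputs" is not an argument. The paper avoids this entirely by exploiting the one pendant you discard: the pendant adjacent to the extreme spoke $(n-1)a$ (resp.\ $b$) is \emph{not} forced but ranges over the whole width-$a$ interval $\{(n-2)a+b,\ldots,(n-1)a+b\}$ (resp.\ $\{0,\ldots,a\}$). Using these flexible pendants as the clause inputs, the clause gadget is just a claw: a width-$a$ interval accommodates exactly two labels pairwise at distance at least $a$, so a claw admits at most two inputs per regime, which with three inputs and two regimes is precisely not-all-equal. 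This pigeonhole on the claw is the whole point of the construction, and it is absent from your proposal.

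Two smaller remarks. First, you invoke the hypothesis $(n-3)a\geq b$ inside the Variable Lemma "to make the conclusion uniform for every interior index $i$"; in fact that hypothesis plays no role there (the interval-covering argument only needs $b\geq 2a>a$) --- it is needed later, to chain extended $n$-stars together and to verify the distance-$2$ constraints between a clause input and the spokes of the \emph{other} variable gadgets meeting it at the claw (one needs $(n-1)a\geq b$ there). Second, your chaining by identifying forced pendant edges of consecutive stars would need a separate verification of the distance-$2$ constraints it creates between the two stars' spokes; the paper instead chains via paths between designated inner edges, reserving one flexible pendant per star for the clauses.
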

We will need the following lemma.
	\begin{lemma}
	Let $2\leq \frac ba$ and let $n\geq 4$ be such that \textcolor{black}{$(n-3)a\geq b$}. In any valid $L(a,b)$-edge-$((n-1)a+b+1)$-labelling of the extended $n$-star, either all pendant edges are coloured in the interval 	$\{(n-2)a+b,\ldots,(n-1)a+b\}$ or all pendant edges are coloured in the interval $\{0,\ldots,a\}$.
	\label{lem:beyond-2}
	\end{lemma}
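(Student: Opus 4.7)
The plan is to first pin down the inner edge labels and then force the pendant labels into one of two narrow intervals. Denote the centre of the extended $n$-star by $c$, the $n$ inner edges by $e_1,\ldots,e_n$, and the pendant edges by $f_i$ attached to $e_i$ through the middle vertex $m_i$. Writing $x_i,y_i$ for the labels of $e_i,f_i$, the edge-distance relations yield $|x_i-x_j|\ge a$ for $i\neq j$, $|y_i-x_i|\ge a$, $|y_i-x_j|\ge b$ for $i\neq j$, while the $y_i$'s have no direct constraints among themselves (they lie at edge-distance~$3$). After re-indexing so that $x_1<\cdots<x_n$ inside $[0,(n-1)a+b]$, I obtain $x_1\le b$, $x_n\ge (n-1)a$, and each gap $g_m=x_{m+1}-x_m$ lies in $[a,a+b]$.

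The central step is to show that no pendant label $y_k$ can sit in a middle gap $(x_m,x_{m+1})$. If it does, both $y_k-x_m$ and $x_{m+1}-y_k$ must be at least $b$, with the only possible relaxation down to $a$ occurring at index $j=k$. Since $b\ge 2a$ gives $a+b<2b$ and $g_m\le a+b$, this forces $k\in\{m,m+1\}$ and $g_m=a+b$ tightly, pinning down the entire layout: $x_1=0$, $x_n=(n-1)a+b$, and every other gap exactly $a$. Because $n\ge 4$, at least one of $y_1$ (when $m\ge 2$) or $y_n$ (when $m\le n-2$) has index outside $\{m,m+1\}$. Taking $y_1$ (the other case is symmetric), the constraint $|y_1-x_l|\ge b$ must hold for every $l\ne 1$; iterating along $l=2,3,\ldots,n$ and using $b>a$ at each step to exclude the branch $y_1\le x_l-b$, the lower bound on $y_1$ grows by at least $a$ per step until it exceeds $(n-1)a+b$, contradicting the label range.

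With middle gaps ruled out, every $y_k$ lies in an extremal region $[0,x_1)$ or $(x_n,(n-1)a+b]$. Picking any index $k\in\{2,\ldots,n-1\}$ (which exists because $n\ge 4$), the left placement forces $x_1\ge b$ and hence $x_1=b$, while the right placement forces $x_n=(n-1)a$; these two cannot both happen since together they would require $b\le 0$. In either case every gap collapses to exactly $a$, the opposite extreme becomes out of range, and the remaining constraints squeeze every pendant into $[0,a]$ (using $\min(x_1-a,x_2-b)=a$ via $b\ge 2a$) or symmetrically into $[(n-2)a+b,(n-1)a+b]$.

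The main obstacle is the middle-gap step, where I must both establish the structural rigidity of the inner labelling when a pendant falls inside a gap and then execute the cumulative-constraint argument on an extremal pendant; both parts lean critically on $b\ge 2a$ at every stage, and the requirement $n\ge 4$ is precisely what guarantees the existence of an extremal pendant whose index lies outside $\{m,m+1\}$.
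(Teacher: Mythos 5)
Your proof is correct, and it takes a genuinely different route from the paper's. The paper partitions the \emph{label range} $\{0,\ldots,(n-1)a+b\}$ into low, middle and high zones: it excludes a pendant label in $\{a+1,\ldots,(n-2)a+b-1\}$ by a span count on the $n-1$ inner edges that must avoid a window of width $2b$ around it, and excludes simultaneous occupation of both extreme zones by squeezing the $n-2$ opposing inner edges into an interval of width at most $(n-3)a$. You instead classify each pendant by its \emph{position relative to the sorted inner labels} $x_1<\cdots<x_n$: gap rigidity (a pendant strictly inside a gap forces $k\in\{m,m+1\}$, $g_m=a+b$, $x_1=0$, $x_n=(n-1)a+b$ and all other gaps equal to $a$), followed by the iterated bound $y_j\ge x_l+b$ on an extremal pendant with $j\notin\{m,m+1\}$, and then the observation that the pendants indexed $2,\ldots,n-1$ must all sit on the same side, which pins $x_1=b$ or $x_n=(n-1)a$ and squeezes everything into the claimed interval; I checked each of these steps and they hold. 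What your version buys is robustness at the boundary $b=2a$: there, for instance, inner labels $a,2a,\ldots,na$ with pendants $0$ and $(n+1)a$ attached to $a$ and $na$ survive the paper's span counts (the opposing inner edges fit exactly into the width-$(n-3)a$ window, and the strict inequalities $2b>b+2a$ the paper invokes become equalities), and this configuration is only refuted because a \emph{further} pendant, attached to $2a$, has nowhere to go --- which is precisely the phenomenon your iteration step captures, so your argument is the more airtight of the two. Two minor remarks: $n\ge 3$ already yields an extremal index outside $\{m,m+1\}$, so $n\ge 4$ is not really the binding constraint at that point; and your proof does not exhibit the two achievable colourings, which the paper's final paragraph supplies in support of the ``moreover'' clause of Lemma~\ref{lem:beyond-2-variable}, though that is not required by the statement as given.
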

	\begin{proof}
	Suppose some pendant edge is coloured by $l'$ in $\{a+1,\ldots,(n-2)a+b-1\}$. Consider the $n-1$ inner edges at distance $2$ from it. Reading their labels in ascending order there must be a jump of at least \textcolor{black}{$2b\geq a+b+1$ at some point unless the lowest label is itself $a+b+1$}. But now we have run out of labels, because $(n-2)a+(a+b+1)>(n-1)a+b$ which is the last label.
	
	Suppose now that some pendant edge is coloured by $l'_1$ in $\{0,\ldots,a\}$ and another pendant edge is coloured by $l'_2$ in $\{(n-2)a+b,\ldots,(n-1)a+b\}$. It is now not possible to choose $n-2$ labels to complete the opposing inner edges, because $l_1$ and $l_2$ \textcolor{black}{(inner edges adjacent to outer edges with labels $l'_1$ and $l'_2$, respectively)} together must remove more than $b\geq 2a$  possibilities for labels at both the top and the bottom of the order. Using $2b> b+2a$, this leaves no more than $(n-3)a$ which is not enough space for $n-2$ labels spaced by $a$ in the $n-2$ inner edges.  
	
		Finally, we note a valid colouring of the form $0,\ldots,(n-1)a$ for the inner edges of the extended $n$-star, with 	$\{(n-2)a+b,\ldots,(n-1)a+b\}$ enforced on the pendant edges (and the whole range from $\{(n-2)a+b,\ldots,(n-1)a+b\}$ \textcolor{black}{is} possible adjacent to the label $(n-1)a$). The other regime comes from order-inverting the colours.  
	\end{proof}
The stipulation $(n-3)a\geq b$ plays no role in the previous lemma. It is needed in order to chain together extended $n$-stars to form the \emph{variable gadget} whose construction we now explain. The variable gadget is made from a series of extended $n$-stars joined in a chain. They can join to one another in a path running from one's inner star edge labelled $0$ to another's inner star edge labelled $(n-2)a$. In this fashion, the inner star edge labelled $(n-1)a$ is free for the (top) pendant edge that acts as the point of contact for clauses. This inner star edge may sometimes need to be labelled $(n-2)a$ (\mbox{cf.} Figure~\ref{fig:beyond-2-clause}) in which case the other inner star edge labelled $(n-3)a$ will be needed to perform the chaining.
	\begin{figure}
		\begin{center}
			\includegraphics[scale=.8]{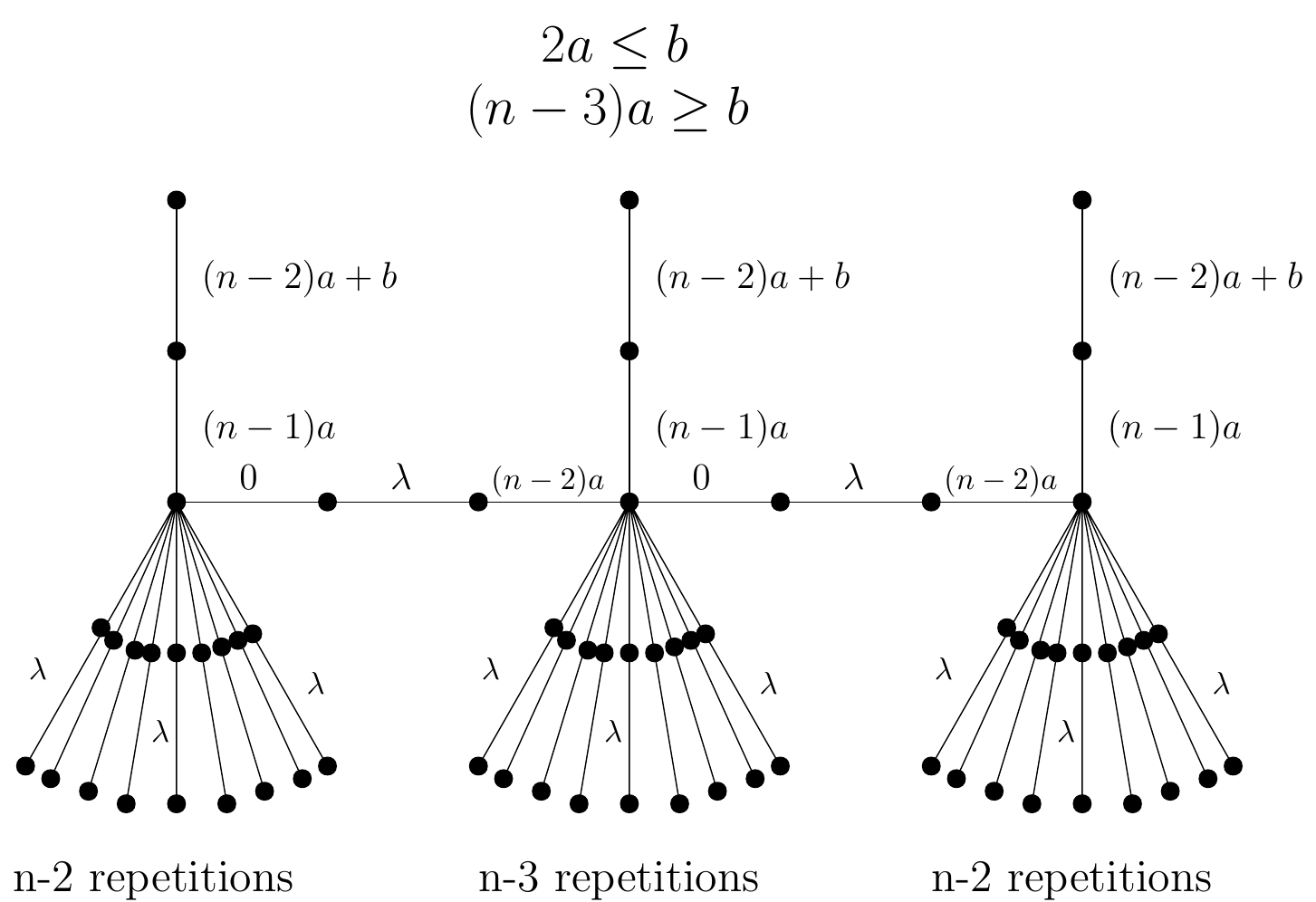}
		\end{center}
		\caption{The variable gadget for Theorem~\ref{thm:beyond-2}. The pendant edges drawn on the top will be involved in clauses gadget and each of these three edges can be coloured with anything from $\{(n-2)a+b,\ldots,(n-1)a+b\}$.}
		\label{fig:beyond-2}
		\end{figure}
In the following lemma, the designation \emph{top} is with reference to the drawing in Figure~\ref{fig:beyond-2}.
\begin{lemma}
\textcolor{black}{Let $n\geq 4$ be such that $(n-3)a\geq b$.} Any valid $L(a,b)$-edge-$(n-1)a+b+1$-labelling of a variable gadget is such that the top pendant edges are all coloured from precisely one of the sets $\{0,\ldots,a\}$ and $\{(n-2)a+b,\ldots,(n-1)a+b\}$. Moreover, any colouring of the top pendant edges from one of these sets is valid.
\label{lem:beyond-2-variable}
\end{lemma}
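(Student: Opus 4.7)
The plan is first to apply Lemma~\ref{lem:beyond-2} independently to each extended $n$-star $S_1,\ldots,S_r$ composing the chain of the variable gadget. This immediately forces each $S_i$ into one of two regimes: either all its pendant edges lie in the low interval $\{0,\ldots,a\}$ (call this Regime~L) or all in the high interval $\{(n-2)a+b,\ldots,(n-1)a+b\}$ (Regime~H). As made explicit in the proof of Lemma~\ref{lem:beyond-2}, under Regime~H the inner edges of $S_i$ must carry exactly the labels $\{0,a,\ldots,(n-1)a\}$, while under Regime~L they carry the order-inverted set $\{b,a+b,\ldots,(n-1)a+b\}$; the assignment of those labels to individual branches is unconstrained.

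The heart of the argument is to show that consecutive stars $S_i$ and $S_{i+1}$ must be in the \emph{same} regime. For this I would examine the short path linking the label-$0$ branch of $S_i$ to the label-$(n-2)a$ branch of $S_{i+1}$ (using Regime~H names; Regime~L names come from order-inversion). Tracking the distance-$1$ and distance-$2$ constraints imposed by the two endpoint inner edges on each edge of the linking path, a direct case analysis shows that the path can be labelled consistently with both endpoints only when $S_i$ and $S_{i+1}$ are in a common regime: the gap $(n-2)a$ between the two endpoint inner labels in a single regime leaves exactly enough room (thanks to $(n-3)a\geq b$), whereas mixing regimes pairs, for example, label $0$ at one end with $a+b$ at the other, leaving no admissible labels in between once the distance-$2$ forbidden zones from both stars are removed. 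Induction along the chain then yields that all the $S_i$ share one regime, so the top pendant edges all lie in precisely one of the two claimed intervals.

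For the \emph{moreover} part, given a prescribed colouring of the top pendant edges from the high interval, I would construct a valid global labelling as follows: on the inner edges of each $S_i$, place $\{0,a,\ldots,(n-1)a\}$ with $(n-1)a$ on the top branch and $0,(n-2)a$ on the two chaining branches; fill the remaining pendant edges of $S_i$ arbitrarily from the high interval; and interpolate labels along each linking path. Here the inequality $(n-3)a\geq b$ is precisely what ensures enough slack inside $[a,(n-3)a]$ to choose linking-path labels that dodge the distance-$2$ forbidden zones emanating from both endpoints. The low-regime version is then obtained by the same order-reversal symmetry already noted in Lemma~\ref{lem:beyond-2}.

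The main obstacle is the propagation step: the case analysis showing that the linking path cannot bridge a Regime~H star to a Regime~L star. This is where the hypothesis $(n-3)a\geq b$ does its real work, by simultaneously leaving just enough room inside a single regime and closing off the mixed-regime possibilities. The rest of the argument---chaining the per-star conclusions of Lemma~\ref{lem:beyond-2} together and exhibiting explicit global labellings---is routine bookkeeping.
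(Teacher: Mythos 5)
Your overall strategy is the intended one: the paper gives no written proof of this lemma, treating it as immediate from Lemma~\ref{lem:beyond-2} together with the chaining construction of Figure~\ref{fig:beyond-2}, and your reconstruction (per-star regimes, propagation along the chain, explicit labellings for the ``moreover'' part) fills that in correctly in outline. Two points in your account of the mechanics are off, though. First, the propagation step is much simpler than the case analysis you describe: in the chaining, consecutive extended $n$-stars share an edge that is a pendant edge of \emph{both} of them (compare Figure~\ref{fig:1-to-2} for the analogous construction), so Lemma~\ref{lem:beyond-2} applied to each star already places that one edge's label in both stars' pendant intervals, and since $\{0,\ldots,a\}$ and $\{(n-2)a+b,\ldots,(n-1)a+b\}$ are disjoint the two regimes must coincide. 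No analysis of intermediate path labels is required, and in particular $(n-3)a\geq b$ plays no role in ruling out mixed regimes. Second, your description of the linking edge as being ``interpolated'' inside $[a,(n-3)a]$ is wrong: being at distance at most $2$ from every inner edge of both stars, the shared edge is forced to the extreme label $(n-1)a+b$ in the high regime (respectively $0$ in the low regime), which is exactly what Figure~\ref{fig:beyond-2} shows. The hypothesis $(n-3)a\geq b$ earns its keep not here but at the clause-gadget interface (Figure~\ref{fig:beyond-2-clause}), where, for instance, the inner edge labelled $a+b$ beneath a pendant labelled $0$ must differ by at least $b$ from a clause-mate pendant labelled $(n-2)a+b$, which is precisely the inequality $(n-3)a\geq b$. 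Neither slip invalidates the conclusion of the lemma, but the stated role of the hypothesis should be corrected.
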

The clause gadget will be nothing more than a $3$-star (a claw) which is formed from a new vertex uniting three (top) pendant edges from their respective variable gadgets. The following is clear.
\begin{lemma}
\textcolor{black}{Let $n\geq 4$ be such that $(n-3)a\geq b$.} A clause gadget is in a valid $L(a,b)$-edge-$(n-1)a+b+1$-labelling in the case where two of its edges are coloured $0,a$ and the third $(n-1)a+b$; or two of its edges are coloured $(n-2)a+b,(n-1)a+b$ and the third $0$. If all three edges come from only one of the regimes $\{0,\ldots,a\}$ and $\{(n-2)a+b,\ldots,(n-1)a+b\}$, it cannot be in a valid $L(a,b)$-edge-$(n-1)a+b+1$-labelling.
\label{lem:beyond-2-clause}
\end{lemma}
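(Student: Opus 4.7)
The plan is to reduce everything to the observation that $K_{1,3}$ imposes only a pairwise distance-$1$ constraint on its three edges. Because the three edges of the claw all meet at the new central vertex, they are pairwise adjacent at edge-distance exactly $1$, and no two of them are at distance $2$ through any vertex contained in the clause gadget itself. Consequently a valid $L(a,b)$-edge-labelling forces nothing more local than the requirement that the three labels pairwise differ by at least $a$.

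For the positive half, I would simply verify arithmetically that the two triples named in the statement satisfy this. For $\{0,a,(n-1)a+b\}$ the pairwise differences are $a$, $(n-1)a+b$ and $(n-2)a+b$; for $\{0,(n-2)a+b,(n-1)a+b\}$ they are $(n-2)a+b$, $(n-1)a+b$ and $a$. Under the standing hypotheses $b\geq 2a$ and $n\geq 4$ these are all at least $a$, so the labels are admissible on the claw. Combined with Lemma~\ref{lem:beyond-2-variable}, which guarantees that any assignment of labels inside a single regime to the top pendant edges extends to a valid labelling of the incident variable gadget, this produces the desired valid partial labellings.

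For the negative half, I would use a one-line pigeonhole argument. Each regime $\{0,\ldots,a\}$ and $\{(n-2)a+b,\ldots,(n-1)a+b\}$ is an integer interval of width exactly $a$, containing $a+1$ consecutive integers. If all three claw edges received labels from a single such interval while being pairwise separated by at least $a$, then the two extremes alone would already be at least $2a$ apart, contradicting the width being only $a$. Hence no such single-regime assignment can occur in any valid $L(a,b)$-edge-labelling.

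The main (minor) obstacle is really just the bookkeeping in the positive half: one needs to match the claw-local choice of labels with the extension provided by Lemma~\ref{lem:beyond-2-variable} so that the construction continues to respect the distance-$2$ constraints across the interface with each variable gadget. Since Lemma~\ref{lem:beyond-2-variable} already certifies that every label in the declared regime is available on a top pendant edge, no further adjustment to the variable-side colouring is needed beyond invoking it, and the proof is essentially complete.
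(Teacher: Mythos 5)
Your negative half is correct and is essentially the argument the paper leaves implicit: three pairwise-adjacent claw edges need pairwise gaps of at least $a$, which is impossible inside an interval of width $a$. You are also right that the three claw edges impose only the distance-$1$ constraint on one another (there is no path of length $3$ in $G$ starting with one claw edge and ending with another, so no distance-$2$ constraint arises among them).

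The gap is in the positive half, at exactly the point you wave away. Forming the claw creates \emph{new} distance-$2$ pairs: each claw edge $e_i$ lies at distance $2$ (through the new clause vertex and then through $e_j$) from the inner star edge of variable gadget $j$ that is adjacent to $e_j$, for each $j\neq i$. Lemma~\ref{lem:beyond-2-variable} certifies only that the chosen pendant labels extend to a valid labelling \emph{of each variable gadget in isolation}; it says nothing about whether the inner star edges adjacent to the clause-pendant edges can be chosen compatibly with the \emph{other} gadgets' claw edges under these new cross-gadget constraints, so "no further adjustment to the variable-side colouring is needed" is not justified. The paper handles this explicitly (Figure~\ref{fig:beyond-2-clause}): for the triple $0,a,(n-1)a+b$ it picks the adjacent inner edges to be $b+a$, $b$ and $(n-1)a$ respectively --- note the two "true" gadgets must use \emph{different} inner-edge labels --- and then checks all six cross differences are at least $b$, using both $(n-3)a\geq b$ and $b\geq 2a$ (e.g.\ $a$ versus $(n-1)a$ needs $(n-2)a\geq b$, and $(n-1)a+b$ versus $b+a$ needs $(n-2)a\geq b$). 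Your proof needs this verification, or something equivalent; without it the claim that the two named triples "are in a valid labelling" of the whole construction is not established.
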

We are now ready to prove Theorem~\ref{thm:beyond-2}.
\begin{proof}[Proof of Theorem~\ref{thm:beyond-2}.]	
We reduce from (monotone) NAE-3-SAT. Choose $n$ such that \textcolor{black}{$(n-3)a\geq b$}. Let $\Phi$ be an instance of NAE-3-SAT involving $N$ occurrences of (not necessarily distinct) variables and $m$ clauses. Let us explain how to build an instance $G$ for {\sc $L(a,b)$-Edge-$(n-1)a+b+1$-Labelling}. Each particular variable may only appear at most $N$ times, so for each variable we take a copy of the variable gadget which is $N$ extended $n$-stars chained together. Each particular instance of the variable belongs to one of the free (top) pendant edges of the variable gadget. For each clause of $\Phi$ we use a $3$-star to unite an instance of these free (top) pendant edges from the corresponding variable gadgets. Thus, we add a single vertex for each clause, but no new edges (they already existed in the variable gadgets). We claim that $\Phi$ is a yes-instance of NAE-3-SAT if and only if $G$ is a yes-instance of {\sc $L(a,b)$-Edge-$(n-1)a+b+1$-Labelling}.

(Forwards.) Take a satisfying assignment for $\Phi$. Let the range $\{0,\ldots,a\}$ represent true and the range $\{(n-2)a+b,\ldots,(n-1)a+b\}$ represent false. This gives a valid labelling of the inner vertices in the extended $n$-stars, as exemplified in Figure~\ref{fig:beyond-2}. In each clause, either there are two instances of true and one of false; or the converse. Let us explain the case where the first two variable instances are true and the third is false (the general case can easily be garnered from this). Colour the (top) pendant edge associated with the first variable as $0$, the second variable $a$ and the third variable $(n-1)a+b$. Plainly these can be consistently united in a claw by the new vertex that appeared in the clause gadget. We draw the situation in Figure~\ref{fig:beyond-2-clause} to demonstrate that this will not introduce problems at distance $2$. Thus, we can see this is a valid $L(a,b)$-edge-$(n-1)a+b+1$-labelling of $G$.

(Backwards.) From a valid $L(a,b)$-edge-$(n-1)a+b+1$-labelling of $G$, we infer an assignment $\Phi$ by reading, in the variable gadget, the  range $\{0,\ldots,a\}$ as true and the range $\{(n-2)a+b,\ldots,(n-1)a+b\}$ as false. The consistent valuation of each variable follows from Lemma~\ref{lem:beyond-2-variable} and the fact that it is in fact not-all-equal follows from Lemma~\ref{lem:beyond-2-clause}. 
\end{proof}

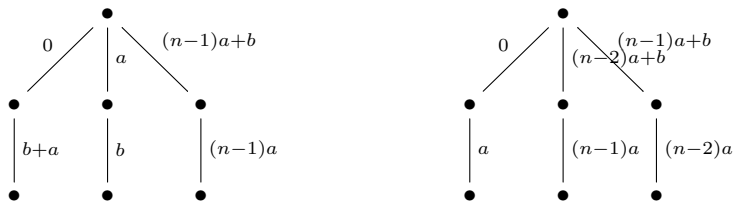
\begin{figure}
$
\xymatrix{
& \bullet  \ar@{-}[dl]_{0} \ar@{-}[d]^{a}  \ar@{-}[dr]^{(n-1)a+b}& \\
\bullet \ar@{-}[d]^{b+a} & \bullet \ar@{-}[d]^{b} & \bullet \ar@{-}[d]^{(n-1)a}\\
\bullet & \bullet & \bullet \\
}
$
\hspace{2cm}
$
\xymatrix{
& \bullet  \ar@{-}[dl]_{0} \ar@{-}[d]^{(n-2)a+b}  \ar@{-}[dr]^{(n-1)a+b}& \\
\bullet \ar@{-}[d]^{a} & \bullet \ar@{-}[d]^{(n-1)a} & \bullet \ar@{-}[d]^{(n-2)a}\\
\bullet & \bullet & \bullet \\
}
$
\caption{The clause gadget and its interface with the variable gadgets (where we must consider distance $2$ constraints). Both possible evaluations for not-all-equal are depicted. Note the difference $(n-2)a+b - (n-1)a = b-a >a$.}
\label{fig:beyond-2-clause}
\end{figure}

\section{Case \boldmath{$1< \frac{b}{a} \leq 2$}}
\label{sec:1-to-2}	

In this section we prove the following result.

	\begin{theorem}
		If $1<\frac ba \leq 2$, the problem {\sc $L(a,b)$-Edge-$(5a+1)$-Labelling} is \NP-complete.
		\label{thm:1-to-2}
	\end{theorem}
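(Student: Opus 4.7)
The plan is to reduce from (monotone) NAE-$3$-SAT. The variable gadget will be built around one or more extended $4$-stars (chained together when a variable needs to appear in many clauses) with pendant $5$-stars hung at some of their leaves to rigidify the pendant labels; the clause gadget will be a claw uniting three output pendant edges from three different variable gadgets at a fresh vertex.

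The first step is to classify the $L(a,b)$-edge-$(5a+1)$-labellings of a single extended $4$-star. Its four inner edges meet at the central vertex and must therefore be pairwise $\geq a$ apart, so they carry four $a$-spaced labels in $\{0,\ldots,5a\}$. Combined with the distance-$2$ constraints (strong since $b > a$) from each pendant edge to the three non-adjacent inner edges, a short case analysis---using the reflective symmetry $c \mapsto 5a - c$---should pin the workable configurations down to essentially two: a \emph{high} regime, in which the pendant edges are forced into a narrow window at the top of $\{0,\ldots,5a\}$, and a mirror \emph{low} regime at the bottom. Each such window has width at most $2a - b + 1 \leq a$.

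Attaching a pendant $5$-star at a leaf turns that leaf into a vertex where six pairwise-adjacent edges meet (the five star edges and the pendant edge they share), and in a palette of size $5a + 1$ the only six labels pairwise $\geq a$ apart are $\{0, a, 2a, 3a, 4a, 5a\}$. Intersecting this rigid set with the narrow regime window pins each such pendant edge to a single admissible label per regime---say $0$ in the low regime and $5a$ in the high regime---which we will read as the boolean value of the underlying variable. The clause gadget is then the claw, whose three edges must be pairwise $\geq a$ apart; three labels all drawn from the same regime (of width $\leq a$) cannot be pairwise $\geq a$ apart, so a valid claw labelling forces a not-all-equal triple, and conversely any NAE triple extends to a valid claw labelling since $|5a - 0| = 5a \geq a$.

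The main obstacle will be the distance-$2$ bookkeeping. Each pendant $5$-star introduces a rigid $a$-spaced sextuple at distance $2$ from the three non-adjacent inner edges of its extended $4$-star (via paths of length three through the pendant edge), and, when extended $4$-stars are chained to accommodate variables appearing in many clauses, the regime must be propagated consistently along the chain using the non-output pendant edges as connectors. Carrying out these checks once for each of the two symmetric regimes---and confirming that the fresh clause vertex does not introduce new distance-$2$ conflicts between pendants of different variable gadgets---constitutes the bulk of the work, but should be routine once the regimes have been pinned down.
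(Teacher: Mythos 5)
Your overall architecture (reduction from monotone NAE-$3$-SAT, chained extended $4$-stars with pendant $5$-stars, a claw as clause gadget) matches the paper's, but two of your key steps fail. First, the unconditional claim that a case analysis pins the extended $4$-star down to a high regime and a low regime is false: the bare extended $4$-star admits valid labellings whose pendant edges sit in a middle band. For instance, with $a=2$, $b=3$, $k=11$, take inner edges $0,2,8,10$; the pendant edges adjacent to $0,2,8,10$ can then be coloured $5,5,6,5$ respectively, and all distance-$1$ and distance-$2$ constraints hold. This is precisely why the corresponding lemma must be stated conditionally (\emph{if} some pendant edge is $0$ or $5a$, \emph{then} all pendant edges lie in $\{0,\ldots,a\}$ or $\{4a,\ldots,5a\}$), and why the $5$-star is needed to create that hypothesis in the first place rather than merely to refine an already-established regime.

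Second, and fatally, your way of attaching the $5$-star does not work. If a leaf $w$ of the extended $4$-star becomes the centre of a $5$-star, the six pairwise-adjacent edges at $w$ must receive exactly $\{0,a,2a,3a,4a,5a\}$, as you say; but the five new star edges are each at distance $2$ from the inner edge $cv$ of the extended $4$-star (via the path through $w$ and $v$), so $cv$ must differ by at least $b$ from five of the six multiples of $a$. Since $b>a$, every label $x\in\{0,\ldots,5a\}$ satisfies $|x-ja|<b$ for at least two multiples $ja$ (write $x=ja+r$ with $0\le r<a$; then both $r<b$ and $a-r<b$), and only one of those two can be the label of the adjacent pendant edge $vw$. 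Hence $cv$ has no admissible label and your gadget is uncolourable for every input, so the reduction collapses. The fix is the one the paper uses: hang the $5$-star one step further away, so that the pendant edge of the extended $4$-star is adjacent to exactly one edge of the $5$-star and at distance $2$ from the other four; that weaker interaction forces the pendant edge to be $0$ or $5a$ without over-constraining, and then the conditional regime lemma applies. You would also need to recheck your window widths: the regimes are $\{0,\ldots,a\}$ and $\{4a,\ldots,5a\}$, of span exactly $a$, and the clause argument needs two same-regime edges at the claw to be realisable at distance $a$ (e.g.\ $0$ and $a$), which your claimed width bound $2a-b+1\le a$ would rule out.
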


\noindent
We proceed by a reduction from (monotone) NAE-3-SAT.
This case is relatively simple as the variable gadget is built from a series of extended $4$-stars chained together, where each has a pendant $5$-star to enforce some benign property. We will use colours from the set $\{0,\ldots,5a\}$. 
	
	\begin{lemma}
	Let $1<\frac ba \leq 2$. In any valid $L(a,b)$-edge-$(5a+1)$-labelling of the extended $4$-star, if one pendant edge is coloured $0$ then all pendant edges are coloured in the interval $\{0,\ldots,a\}$; and if one pendant edge is coloured $5a$ then all pendant edge are coloured in the interval $\{4a,\ldots,5a\}$.
	\label{lem:1-to-2}
	\end{lemma}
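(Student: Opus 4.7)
The plan is to analyze the label constraints systematically around the centre of the extended $4$-star. Denote the inner edges (incident to the centre) by $e_1, \ldots, e_4$ and the pendant edges by $f_1, \ldots, f_4$, with $f_i$ adjacent to $e_i$. In this graph, inner edges are pairwise at distance~$1$; each pair $e_i, f_i$ is at distance~$1$; each pair $e_i, f_j$ with $i \neq j$ is at distance~$2$; and pendant edges are pairwise at distance~$3$. I will only treat the case $l(f_1) = 0$, since the statement about a pendant edge coloured $5a$ will follow from the colour-reversing symmetry $l \mapsto 5a - l$, which preserves all $L(a,b)$ constraints.

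Setting $x_i := l(e_i)$, the hypothesis $l(f_1) = 0$ together with the distance-$1$ and distance-$2$ constraints gives $x_1 \geq a$ and $x_j \geq b$ for $j \neq 1$. Combining this with pairwise distance-$a$ among the inner edges and the global bound $x_i \leq 5a$, the sorted values $X_1 < X_2 < X_3 < X_4$ must satisfy $X_1 \in [a, 2a]$, $X_4 \in [4a, 5a]$, and each consecutive gap lies in $[a, 2a]$. Next, for each $j \in \{2, 3, 4\}$ I would describe the allowed region of $y_j := l(f_j)$: it is the complement in $[0, 5a]$ of four forbidden integer intervals around the $X_k$'s, of radius $b-1$ (when $X_k \neq x_j$) or $a-1$ (when $X_k = x_j$). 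Using $b \geq a + 1$ and $X_{k+1} - X_k \leq 2a$, consecutive forbidden intervals always meet, so their union is a single interval and the allowed region reduces to $[0, X_1 - r_1] \cup [X_4 + r_4, 5a]$ for some $r_1, r_4 \in \{a, b\}$. A short calculation then shows the lower part is contained in $\{0, \ldots, a\}$ (because $X_1 \leq 2a$ and $r_1 \geq a$), whereas the upper part is empty unless $r_4 = a$ and $X_4 = 4a$, in which case it reduces to the single value $\{5a\}$.

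The main obstacle is ruling out this exceptional case $y_j = 5a$. Supposing it, I would deduce $x_j = X_4 = 4a$, which together with $X_1 \geq a$ and $X_{k+1} - X_k \geq a$ forces the four inner labels to be exactly $\{a, 2a, 3a, 4a\}$; since $x_i \geq b > a$ for $i \neq 1$, this further forces $x_1 = a$. I would then apply the same allowed-region analysis to some other pendant edge $f_k$ with $k \in \{2,3,4\} \setminus \{j\}$: now $x_k \in \{2a, 3a\}$, so both the upper part (because $X_4 = 4a \neq x_k$ gives $r_4 = b$ and $X_4 + b > 5a$) and the lower part (because $X_1 = a \neq x_k$ gives $r_1 = b$ and $X_1 - b < 0$) are empty. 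Hence $f_k$ has no valid label, a contradiction, so $y_j \leq a$. The analogous upper-end claim follows by the symmetry described at the start.
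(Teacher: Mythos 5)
Your proof is correct, and it takes a genuinely different route from the paper's. The paper supposes one pendant edge is labelled $0$ and another is labelled $l'\notin\{0,\ldots,a\}$, then argues that the four inner labels must form one of the three patterns $a,3a,4a,5a$; $a,2a,4a,5a$; $a,2a,3a,5a$, with $l'$ equal to the omitted multiple of $a$, which violates a distance-$2$ constraint since $b>a$. You instead fix the sorted inner labels $X_1<X_2<X_3<X_4$ and describe each pendant edge's admissible set as the complement of a union of forbidden intervals; showing that this union has no holes gives the admissible set $[0,X_1-r_1]\cup[X_4+r_4,5a]\subseteq\{0,\ldots,a\}\cup\{5a\}$, and you then eliminate the value $5a$ by exhibiting a third pendant edge with empty admissible set. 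That extra step is not redundant: the labelling with pendant edges $0$ and $5a$ and inner edges $a,2a,3a,4a$ satisfies every constraint among those six edges (all the relevant distance-$2$ gaps are at least $2a\geq b$), so it can only be refuted by examining the remaining two pendant edges, which is exactly what you do. The paper's proof instead asserts that the inner labels cannot all lie strictly below $l'$, a claim that fails precisely when $l'=5a$; your argument covers this configuration where the paper's, as written, does not. One cosmetic point: ``consecutive forbidden intervals always meet'' should rather be ``leave no uncovered integer between them'' (when $X_{k+1}-X_k=2a$ and the radii are $a-1$ and $b-1$ with $b=a+1$, the intervals are merely adjacent), but their union is still an interval, which is all you need.
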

	\begin{proof}
	Suppose some pendant edge is coloured by $0$ and another pendant is coloured by $l' \notin \{0,\ldots,a\}$. There are four inner edges of the star that are at distance $1$ or $2$ from these, and one another \textcolor{black}{(indeed, they are at distance $1$ from one another)}. If $l'<2a$, then at least $2a$ labels are ruled out, which does not leave enough possibilities for the inner edges to be labelled in (at best) $\{2a+1,\ldots,5a\}$. If $l'\geq 2a$, then it is not possible to use labels for the inner edges that are all strictly above $l'$. It is also not possible to use labels for the inner edges that are all strictly below \textcolor{black}{$l'$}. In both cases, at least $2a$ labels are ruled out. Thus the labels, read in ascending order, must start no lower than $a$ and have a jump of $2a$ at some point. It follows they are one of: $a,3a,4a,5a$; or $a,2a,4a,5a$; or $a,2a,3a,5a$. This implies that $l'$ is itself a multiple of $a$ (whichever one was omitted in the given sequence). But now, since $b>a$, there must be a violation of a distance $2$ constraint from $l'$. 
	\end{proof}
\noindent \textcolor{black}{Let us remark that the colourings as restricted in Lemma~\ref{lem:1-to-2} are achievable, and we will use them in the sequel.}

We would like to chain extended $4$-stars together to build our variable gadgets, where the pendant edges represent variables (and enter into clause gadgets) and we interpret one of the regimes $\{0,\ldots,a\}$ and $\{4a,\ldots,5a\}$ as true, and the other as false. However, the extended $4$-star can be validly $L(a,b)$-edge-($5a+1$)-labelled in other ways that we did not yet consider. We can only use Lemma~\ref{lem:1-to-2} if we can force one pendant edge in each extended $4$-star to be either $0$ or $5a$. Fortunately, this is straightforward: take a $5$-star and add a new edge to one of the edges of the $5$-star creating a path of length $2$ from the centre of the star to the furthest leaf. This new edge can only be coloured $0$ or $5a$. In Figure~\ref{fig:1-to-2} we show how to chain together copies of the extended $4$-star, together with pendant $5$-star gadgets at the bottom, to produce many copies of exactly one of the regimes $\{0,\ldots,a\}$ and $\{4a,\ldots,5a\}$. Note that the manner in which we attach the pendant $5$-star only produces a valid $L(a,b)$-edge-($5a+1$)-labelling because $2a \geq b$ (otherwise some distance $2$ constraints would fail). So long as precisely one pendant edge per extended $4$-star is used to encode a variable, then each encoding can realise all labels within each of these regimes, and again this can be seen by considering the pendant edges drawn top-most in Figure~\ref{fig:1-to-2}, which can all be coloured anywhere in $\{4a,\ldots,5a\}$. Let us recap, a \emph{variable gadget} (to be used for a variable that appears in an instance of NAE-3-SAT $m$ times) is built from chaining together $m$ extended $4$-stars, each with a pendant $5$-star, exactly as is depicted in Figure~\ref{fig:1-to-2} for $m=3$.
		\begin{figure}
\vspace*{-0.58cm}
		\begin{center}
\[
\xymatrix{
& &  \bullet  \ar@{-}[d]^{\{4a,\ldots,5a\}} & & & \bullet  \ar@{-}[d]^{\{4a,\ldots,5a\}} & & & \bullet  \ar@{-}[d]^{\{4a,\ldots,5a\}} & & & \\
& &  \bullet  \ar@{-}[d]^{3a} & & & \bullet  \ar@{-}[d]^{3a} & & & \bullet  \ar@{-}[d]^{3a} & & & \\
\bullet  \ar@{-}[r]^{5a} & \bullet  \ar@{-}[r]^{2a} & \bullet  \ar@{-}[r]^{0} 
 & \bullet  \ar@{-}[r]^{5a} & \bullet  \ar@{-}[r]^{2a} & \bullet  \ar@{-}[r]^{0} & \bullet  \ar@{-}[r]^{5a} & \bullet  \ar@{-}[r]^{2a} & \bullet  \ar@{-}[r]^{0} & \bullet  \ar@{-}[r]^{5a} & \bullet \\
& &  \bullet  \ar@{-}[u]_{a} & & & \bullet  \ar@{-}[u]_{a} & & & \bullet  \ar@{-}[u]_{a} & & & \\
& &  \bullet  \ar@{-}[u]_{5a} & & & \bullet  \ar@{-}[u]_{5a} & & & \bullet  \ar@{-}[u]_{5a} & & & \\
& \bullet \ar@{-}[r]_{0} &  \bullet  \ar@{-}[u]_{4a} & \bullet \ar@{-}[l]^{a} &  \bullet \ar@{-}[r]_{0} &  \bullet  \ar@{-}[u]_{4a} & \bullet \ar@{-}[l]^{a} &  \bullet \ar@{-}[r]_{0} &  \bullet  \ar@{-}[u]_{4a} & \bullet \ar@{-}[l]^{a} & \\
& \bullet \ar@{-}[ur]_{2a} &  & \bullet \ar@{-}[ul]^{3a} & \bullet \ar@{-}[ur]_{2a} &  & \bullet \ar@{-}[ul]^{3a} & \bullet \ar@{-}[ur]_{2a} &  & \bullet \ar@{-}[ul]^{3a} & \\
}
\]		

		\end{center}
\vspace*{-0.3cm}
		\caption{Three extended $4$-stars chained together, each with a pendant $5$-star below, to form a variable gadget for Theorem~\ref{thm:1-to-2}. The pendant edges drawn on the top will be involved in clauses gadget and each of these three edges can be coloured with anything from $\{4a,\ldots,5a\}$. If the top pendant edge is coloured $5a$ it may be necessary that the inner star edge below is coloured not $3a$ but $2a$ (\mbox{cf.} Figure~\ref{fig:1-to-2-clause}). This is fine, the chaining construction works when swapping $2a$ and $3a$.}
		\label{fig:1-to-2}
\vspace*{-0.4cm}
		\end{figure}
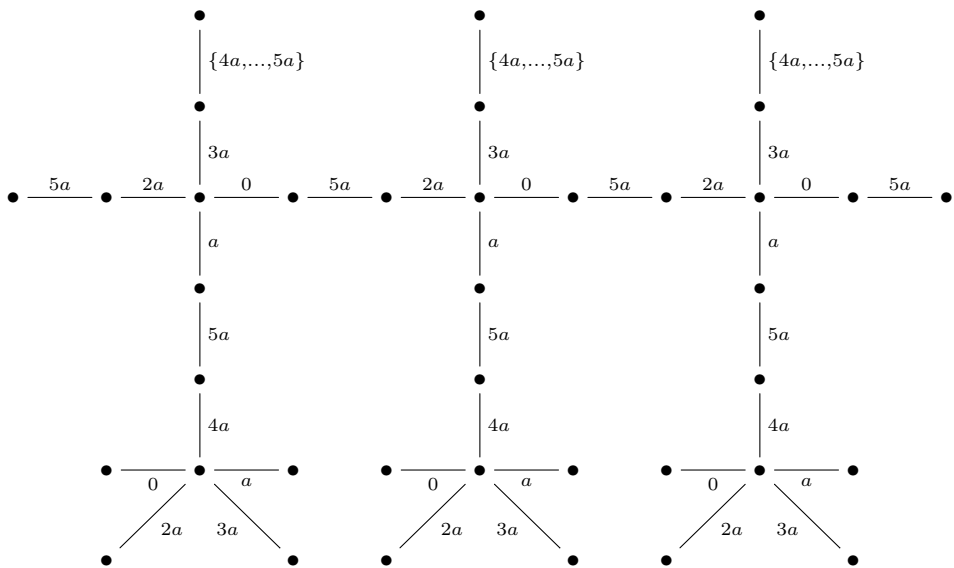
		The following is clear from our construction. The designation \emph{top} is with reference to the drawing in Figure~\ref{fig:1-to-2}. 
In Figure~\ref{fig:1-to-2}, the case drawn corresponds to $\{4a,\ldots,5a\}$, where the case $\{0,\ldots,a\}$ is symmetric.	
\begin{lemma}
Any valid $L(a,b)$-edge-$(5a+1)$-labelling of a variable gadget is such that the top pendant edges are all coloured from precisely one of the sets $\{0,\ldots,a\}$ and $\{4a,\ldots,5a\}$. Moreover, any colouring of the top pendant edges from one of these sets is valid.
\label{lem:1-to-2-variable}
\end{lemma}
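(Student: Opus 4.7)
The plan is to combine the rigidity of the pendant $5$-star gadget, Lemma~\ref{lem:1-to-2}, and the specific labels along the chain to force the regime conclusion, and then to read off a valid labelling from Figure~\ref{fig:1-to-2} for the ``moreover'' part.

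First I would re-establish that the extra edge of the pendant $5$-star gadget must be coloured $0$ or $5a$: the five edges of the star all meet at its centre, so they are pairwise at distance~$1$ and carry labels in $\{0,\ldots,5a\}$ with pairwise gaps at least $a$, which forces the minimum and maximum of these five labels to be at most $a$ and at least $4a$, respectively; the extra edge is at distance $1$ from one star-edge and at distance $2$ from the other four, and a short case-analysis on which star-edge the extra edge is adjacent to, combined with $b>a$, eliminates all candidates except $0$ and $5a$. Since this extra edge is identified with a pendant edge of an extended $4$-star, Lemma~\ref{lem:1-to-2} then pins all four pendant edges of that $4$-star into exactly one of the two regimes $\{0,\ldots,a\}$ and $\{4a,\ldots,5a\}$.

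Next I would argue that the chaining forces a single regime across the whole variable gadget. Two consecutive extended $4$-stars are joined along a path whose inner star-edges carry, according to Figure~\ref{fig:1-to-2}, the labels $0$ on one side and $2a$ on the other, with the remaining two inner branches of each centre using $a$ and $3a$. Performing the four-way inspection (picking a regime independently on each side) shows that only the two matched configurations extend to a valid $L(a,b)$-labelling: in a mixed configuration either the distance-$1$ constraint between the two connecting inner star-edges (whose labels must differ by at least $a$, yet are already pinned to values differing by exactly $a$) is violated, or a distance-$2$ constraint of magnitude at least $b>a$ between an inner star-edge on one side and a pendant edge on the other is violated. Iterating this propagation along the chain yields the first assertion of the lemma.

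For the ``moreover'' part I would exhibit the explicit labelling of Figure~\ref{fig:1-to-2} (and its mirror image under $x\mapsto 5a-x$ for the opposite regime) and verify that each top pendant edge is at distance $1$ only from an inner star-edge labelled $3a$ and at distance $2$ only from inner star-edges labelled in $\{0,a,2a\}$; since $b\leq 2a$, every label in $\{4a,\ldots,5a\}$ is compatible with these constraints, so the top pendant edges are indeed free within their regime. The local modification mentioned in the figure caption (swapping $2a$ and $3a$ at a given extended $4$-star when its top pendant must be exactly $5a$) handles the extreme subcase, and one checks that this swap preserves the chaining labels on either side. The main obstacle I expect is the junction analysis in the second step: there are enough labels in play that the enumeration must be carried out carefully, tracking both the distance-$1$ and distance-$2$ constraints across the connecting branches; by contrast, the $5$-star rigidity relies on a standard tight-counting argument and the existence direction is routine verification against Figure~\ref{fig:1-to-2}.
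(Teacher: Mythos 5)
Your overall skeleton is the paper's: the paper gives no formal proof of Lemma~\ref{lem:1-to-2-variable} (it declares it ``clear from our construction''), but the intended argument is exactly your first, second and fourth steps --- the pendant $5$-star forces one pendant edge of each extended $4$-star to be exactly $0$ or $5a$, Lemma~\ref{lem:1-to-2} then confines all four pendant edges of that star to a single regime, and the explicit labelling of Figure~\ref{fig:1-to-2} (with its reflection $x\mapsto 5a-x$) witnesses the ``moreover'' part. Your $5$-star rigidity argument and your freeness check for the top pendant edges are both sound.

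The flaw is in your third step, the propagation along the chain. The two ``connecting inner star-edges'' are not adjacent --- they are separated by the shared pendant edge, so they are at distance $2$, not $1$ --- and nothing in Lemma~\ref{lem:1-to-2} pins them ``to values differing by exactly $a$'': that lemma constrains only the pendant edges. So neither of the two violations you propose for a mixed configuration is actually established by what you have proved. Fortunately the propagation needs no distance-constraint analysis at all: in the chaining of Figure~\ref{fig:1-to-2}, consecutive extended $4$-stars \emph{share} a pendant edge (the horizontal edge labelled $5a$ between two centres is simultaneously the right pendant edge of one star and the left pendant edge of the next). Since each star is individually forced by its own pendant $5$-star into one of the two regimes, the shared edge must lie in both stars' regimes, and $\{0,\ldots,a\}\cap\{4a,\ldots,5a\}=\emptyset$. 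Replacing your junction inspection by this one-line observation completes the proof.
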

The clause gadget will be nothing more than a $3$-star (a claw) which is formed from a new vertex uniting three (top) pendant edges from their respective variable gadgets. The following is clear.
\begin{lemma}
A clause gadget is in a 
valid 
$L(a,b)$-edge-$(5a+1)$-labelling in the case where two of its edges are coloured $0,a$ and the third $5a$; or two of its edges are coloured $4a,5a$ and the third $0$. If all three edges come from only one of the regimes $\{0,\ldots,a\}$ and $\{4a,\ldots,5a\}$, it can not be in a valid $L(a,b)$-edge-$(5a+1)$-labelling.
\label{lem:1-to-2-clause}
\end{lemma}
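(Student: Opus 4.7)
The plan is to split the lemma into its impossibility and validity halves, each of which reduces to short local checks. At the centre of the claw the three edges are pairwise adjacent, so their labels must be pairwise at distance at least~$a$, and by Lemma~\ref{lem:1-to-2-variable} each lies in $\{0,\ldots,a\}$ or in $\{4a,\ldots,5a\}$.

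For the impossibility half I would run a pigeonhole argument inside a single width-$a$ interval: any three values $x_1<x_2<x_3$ pairwise at distance at least~$a$ force $x_3-x_1\geq 2a$, which cannot fit in $\{0,\ldots,a\}$, and symmetrically not in $\{4a,\ldots,5a\}$. So no claw colouring with all three edges drawn from a single regime is possible.

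For the validity half I would handle the two proposed patterns $\{0,a,5a\}$ and $\{0,4a,5a\}$ in parallel: the pairwise gaps at the claw centre are $a,4a,5a$ in both cases, so the distance-$1$ constraints are met. For distance-$2$ the only new neighbours of a claw edge are the remaining inner star edges of its own variable gadget, reached via the degree-$2$ subdivision vertex and the star centre; I would extend each colouring into the incident variable gadgets using the default labelling of Figure~\ref{fig:1-to-2}, applying the $2a\leftrightarrow 3a$ swap on the vertical inner edge whenever the top pendant is coloured $0$ or $5a$, exactly as indicated in Figure~\ref{fig:1-to-2-clause}.

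The main point I expect to actually verify is that this $2a\leftrightarrow 3a$ swap does not break any constraint elsewhere in the variable gadget. Using $b\leq 2a$, the slack between the remaining inner star labels and the labels of the pendant $5$-star below is enough to absorb the swap, and the chain in Figure~\ref{fig:1-to-2} was laid out symmetrically so the same argument propagates through successive extended $4$-stars; this is the only non-immediate step, and once it is checked the lemma follows.
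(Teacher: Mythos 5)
Your overall strategy is the same as the paper's, which offers no written proof at all for this lemma (it is stated as ``clear'') and relies on Figures~\ref{fig:1-to-2} and~\ref{fig:1-to-2-clause}; your impossibility half is exactly right: three pairwise adjacent edges need pairwise gaps of at least $a$, which forces a spread of at least $2a$ and so cannot fit inside either width-$a$ regime. The distance-$1$ checks at the claw centre for the patterns $\{0,a,5a\}$ and $\{0,4a,5a\}$ are also correct.

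There is, however, one concrete error in the validity half: you claim that the only new distance-$2$ neighbours of a claw edge are the remaining inner star edges of its \emph{own} extended $4$-star. Those pairs are not new --- the pendant edge was already at distance $2$ from the other three inner edges via the star centre before the claw was formed, and they are already accounted for by Lemma~\ref{lem:1-to-2-variable}. The pairs that genuinely appear only when the three leaves are identified at the claw centre are between a claw edge and the inner star edges lying below the \emph{other two} claw edges (a path of length $3$ through the claw centre and a neighbouring subdivision vertex). These are precisely the pairs drawn in Figure~\ref{fig:1-to-2-clause}, and they are the one place where the hypothesis $\frac{b}{a}\leq 2$ is used tightly: for instance the claw edge labelled $0$ sits at distance $2$ from the inner edge labelled $2a$ below the claw edge labelled $a$, giving a gap of exactly $2a\geq b$. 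Because you defer to Figure~\ref{fig:1-to-2-clause} ``exactly as indicated'', the colouring you construct is the correct one and the lemma does hold; but as written, your sentence identifying the new constraints would lead a reader to skip the only non-routine verification, so it should be corrected and those cross-gadget gaps checked explicitly.
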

We are now ready to prove Theorem~\ref{thm:1-to-2}.
\begin{proof}[Proof of Theorem~\ref{thm:1-to-2}.]	
We reduce from (monotone) NAE-3-SAT. Let $\Phi$ be an instance of NAE-3-SAT involving $n$ occurrences of (not necessarily distinct) variables and $m$ clauses. Let us explain how to build an instance $G$ for {\sc $L(a,b)$-Edge-$(5a+1)$-Labelling}. Each particular variable may only appear at most $n$ times, so for each variable we take a copy of the variable gadget which is $n$ extended $4$-stars, each with a pendant $5$-star, chained together. Each particular instance of the variable belongs to one of the free (top) pendant edges of the variable gadget. For each clause of $\Phi$ we use a $3$-star to unite an instance of these free (top) pendant edges from the corresponding variable gadgets. Thus, we add a single vertex for each clause, but no new edges (they already existed in the variable gadgets). We claim that $\Phi$ is a yes-instance of NAE-3-SAT if and only if $G$ is a yes-instance of {\sc $L(a,b)$-Edge-$(5a+1)$-Labelling}.

(Forwards.) Take a satisfying assignment for $\Phi$. Let the range $\{0,\ldots,a\}$ represent true and the range $\{4a,\ldots,5a\}$ represent false. This gives a valid labelling of the inner edges in the extended $4$-stars, as exemplified in Figure~\ref{fig:1-to-2}. In each clause, either there are two instances of true and one of false; or the converse. Let us explain the case where the first two variable instances are true and the third is false (the general case can easily be garnered from this). Colour the (top) pendant edge associated with the first variable as $0$, the second variable $a$ and the third variable $5a$. Plainly these can be consistently united in a claw by the new vertex that appeared in the clause gadget. We draw the situation in Figure~\ref{fig:1-to-2-clause} to demonstrate that this will not introduce problems at distance $2$. Thus, we can see this is a valid $L(a,b)$-edge-$(5a+1)$-labelling of $G$.

(Backwards.) From a valid $L(a,b)$-edge-$(5a+1)$-labelling of $G$, we infer an assignment $\Phi$ by reading, in the variable gadget, the  range $\{0,\ldots,a\}$ as true and the range $\{4a,\ldots,5a\}$ as false. The consistent valuation of each variable follows from Lemma~\ref{lem:1-to-2-variable} and the fact that it is in fact not-all-equal follows from Lemma~\ref{lem:1-to-2-clause}. \qed
\end{proof}

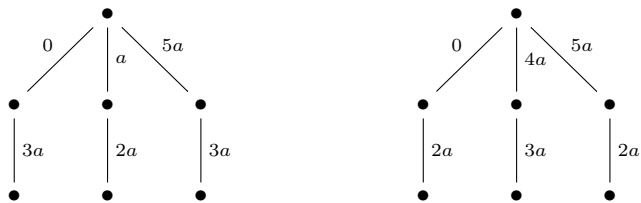
\begin{figure}
$
\xymatrix{
& \bullet  \ar@{-}[dl]_{0} \ar@{-}[d]^{a}  \ar@{-}[dr]^{5a}& \\
\bullet \ar@{-}[d]^{3a} & \bullet \ar@{-}[d]^{2a} & \bullet \ar@{-}[d]^{3a}\\
\bullet & \bullet & \bullet \\
}
$
\hspace{2cm}
$
\xymatrix{
& \bullet  \ar@{-}[dl]_{0} \ar@{-}[d]^{4a}  \ar@{-}[dr]^{5a}& \\
\bullet \ar@{-}[d]^{2a} & \bullet \ar@{-}[d]^{3a} & \bullet \ar@{-}[d]^{2a}\\
\bullet & \bullet & \bullet \\
}
$
\caption{The clause gadget and its interface with the variable gadgets (where we must consider distance $2$ constraints). Both possible evaluations for not-all-equal are depicted.}
\label{fig:1-to-2-clause}
\end{figure}


		\section{Case \boldmath{$\frac 23< \frac{b}{a} < 1$}}
		\label{sec:3-col}

In this section we prove the following result.

		\begin{theorem}
		If $\frac 23<\frac ba < 1$, then the problem {\sc $L(a,b)$-Edge-$(3a+b+1)$-Labelling} is \NP-complete.
		\label{thm:3-col}
	\end{theorem}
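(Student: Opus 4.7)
The plan is to reduce from 3-COL, as announced in Table~\ref{tbl:main}, using the label set $\{0,1,\ldots,3a+b\}$. The condition $\tfrac{2}{3}<\tfrac{b}{a}<1$ gives us just enough room in this interval to identify three ``colour regimes'' $R_0, R_1, R_2$, pairwise separated by at least $a$ (so labels in different regimes can safely sit on adjacent edges), while each regime is a window of width strictly less than $a$; in fact, since $\tfrac{a+b}{3}<b$ in this range, each regime has width strictly less than $b$ as well, meaning that labels within the same regime are only compatible on edges at distance at least $3$. The three regimes will correspond to the three colours of 3-COL. A natural choice, up to symmetry, places the three regimes around the anchor labels $0$, some middle value $\ell\in[a,2a+b]$, and $3a+b$, since these are exactly the triples of labels at pairwise distance $\geq a$ that are compatible with three pairwise-adjacent edges (a triangle).

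As in Sections~\ref{sec:1-to-2} and~\ref{a-2}, I would build the variable gadget by chaining together copies of a small \emph{core}---most naturally an extended $3$-star, or an extended star equipped with a triangle in the spirit of the variable gadget of Section~\ref{sec:0-1}---whose three central inner edges are forced by the pairwise-adjacency constraints to take one label from each regime. I would then prove a lemma analogous to Lemma~\ref{lem:beyond-2} asserting that, in every valid $L(a,b)$-edge-$(3a+b+1)$-labelling of the core, the output pendant edges all lie in a single regime $R_i$. Chaining many cores along their shared inner edges (as in Figure~\ref{fig:beyond-2}) then yields a variable gadget with arbitrarily many output pendant edges, all locked into the same regime, giving a statement analogous to Lemma~\ref{lem:beyond-2-variable}.

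The clause gadget for an edge $uv$ of the 3-COL instance is a minimal joining device (a single new vertex, or a short path identifying one output edge of the variable gadget for $u$ with one output edge of the variable gadget for $v$) that forces these two output edges to be adjacent, hence in distinct regimes, which in turn forces $u$ and $v$ to receive distinct colours. A lemma analogous to Lemma~\ref{lem:beyond-2-clause} will state this formally. With these three lemmas in place, the correctness of the reduction in both directions follows the familiar template of earlier sections: the forward direction reads a proper $3$-colouring of $G$ as a choice of regime on every variable gadget and extends it along the prescribed labelling of internal edges, while the backward direction reads off the regime of each variable gadget from any valid $L(a,b)$-edge-$(3a+b+1)$-labelling of the constructed graph.

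The main obstacle is the variable-gadget lemma: showing that no ``spurious'' core labelling exists outside the three intended regimes. In the two-regime cases of Sections~\ref{sec:1-to-2} and~\ref{a-2}, the analogous argument already required a careful case analysis; here, with three regimes and the tight inequality $\tfrac{2}{3}<\tfrac{b}{a}<1$, several more subcases arise. The combined bounds will be used delicately: the lower bound $b>\tfrac{2a}{3}$ is what prevents intermediate labels between two consecutive regimes from forming a spurious fourth regime (this is the place where the ratio $\tfrac{2}{3}$ is sharp), while the upper bound $b<a$ is what keeps the three regimes comfortably separated and keeps the triangle-style adjacency constraints rigid enough to pin each inner edge to exactly one regime.
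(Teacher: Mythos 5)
Your overall strategy --- reduce from 3-COL, force three label regimes on the pendant edges of a chained star gadget, and use a single new vertex as the clause gadget so that two pendant edges become adjacent and hence lie in distinct regimes --- is the paper's strategy. But the step you yourself flag as ``the main obstacle'' is genuinely missing, and your proposed cores would not deliver it. An extended $3$-star does not work: its three pairwise-adjacent inner edges need only span $2a$ inside $\{0,\ldots,3a+b\}$, leaving $a+b$ of slack, and this slack leaks into the pendant edges. Concretely, with inner labels $0,a,3a+b$ the pendant edge attached to the inner edge $0$ may take any label in $\{a+b,\ldots,3a\}$, an interval that straddles two of the intended regimes and the gap between them, so no single-regime lemma can hold for this core. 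The paper's core is the extended $4$-star: its four pairwise-adjacent inner edges $c_1<c_2<c_3<c_4$ satisfy $3a\le c_4-c_1\le 3a+b$, every pendant label lies strictly between $c_1$ and $c_4$, and a counting argument shows some gap $c_{i+1}-c_i$ is at least $a+b$ (this is exactly where $b/a>\tfrac23$ is used, via $a+4b>3a+b$); hence the inner labels are $0$ and $3a+b$ together with two of $a,a+b,2a,2a+b$, and the three possible positions of the big gap yield precisely the three pendant regimes of Lemma~\ref{lem:3-col}. Without this rigidity argument (or an equivalent one for some other core) the variable gadget has spurious labellings and the backward direction of the reduction fails.

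The second problem is that the regimes you posit do not exist. The regimes actually forced are $\{b,\ldots,a\}$, $\{a+b,\ldots,2a\}$ and $\{2a+b,\ldots,3a\}$: consecutive regimes are separated by only $b<a$, and none is anchored at $0$ or $3a+b$ (those labels are reserved for the extreme inner edges). So your assertion that labels from different regimes ``can safely sit on adjacent edges'' is false in general ($a$ and $a+b$ lie in different regimes but differ by $b<a$), and the forward direction cannot place arbitrary representatives on the two edges of a clause gadget. One must commit to the specific representatives $b$, $a+b$, $2a+b$ (which do pairwise differ by at least $a$) and then verify the distance-$2$ constraints where the clause vertex meets the inner edges of the two variable gadgets; in the paper this needs the freedom, noted in the caption of Figure~\ref{fig:3-col}, to swap the inner labels $a+b$ and $3a+b$. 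Your backward direction survives as stated, since each regime has width $a-b<a$ and so two pendant edges in the same regime can never be adjacent.
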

	
\begin{figure}
\begin{center}
	\includegraphics[width=1.03\linewidth]{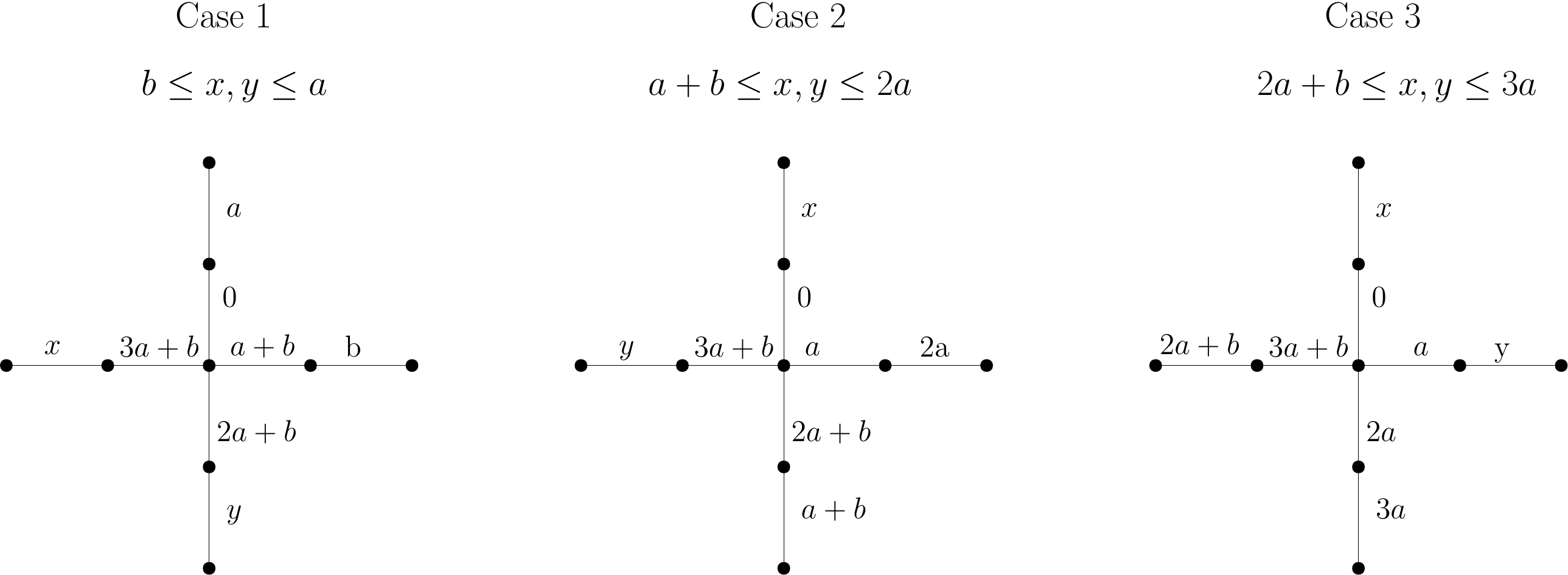}
\end{center}
\caption{The regimes of Theorem~\ref{thm:3-col}.}
\label{fig:3-col-reg}
\end{figure}
	
	The regimes of the following lemma are drawn in Figure~\ref{fig:3-col-reg}. 
	\begin{lemma}
	Let $1<\frac{a}{b}< \frac{3}2$. In an $L(a,b)$-edge-$(3a+b+1)$-labelling $c$ of the extended 4-star, there are three regimes for the pendant edges. The first is $\{b,\ldots,a\}$, the second is $\{2a+b,\ldots,3a\}$, and the third is $\{a+b,\ldots,2a\}$.
		\label{lem:3-col}
	\end{lemma}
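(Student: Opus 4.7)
The plan is to show, by case analysis on the label $\ell$ of an arbitrary pendant edge, that $\ell$ cannot lie outside the union of the three regimes. The four inner star edges share the central vertex, so they are pairwise adjacent and their labels $i_1<i_2<i_3<i_4$ differ pairwise by at least $a$; with $i_1\geq 0$ and $i_4\leq 3a+b$, there is a total ``slack'' of at most $b$. Any pendant $p$ attached at the inner edge of label $i_k$ must satisfy $|c(p)-i_k|\geq a$ and $|c(p)-i_j|\geq b$ for $j\neq k$. The complement of the three regimes in $\{0,\ldots,3a+b\}$ decomposes as $[0,b-1]\cup[a+1,a+b-1]\cup[2a+1,2a+b-1]\cup[3a+1,3a+b]$. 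Since the involution $\ell\mapsto 3a+b-\ell$ is a symmetry of the labelling constraints and pairs $[0,b-1]$ with $[3a+1,3a+b]$ and $[a+1,a+b-1]$ with $[2a+1,2a+b-1]$ (while preserving the set of three regimes), I only need to rule out $\ell\in[0,b-1]$ and $\ell\in[a+1,a+b-1]$.

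For $\ell\in[0,b-1]$, I will argue that the adjacent inner label satisfies $i_k\geq\ell+a$ (the other option is negative) and each non-adjacent inner label satisfies $i_j\geq\ell+b$, so $i_1\geq\ell+b$. Combined with $i_4-i_1\geq 3a$ and $i_4\leq 3a+b$ this forces $\ell=0$ and $\{i_1,i_2,i_3,i_4\}=\{b,a+b,2a+b,3a+b\}$. Then the pendant $p'$ adjacent to the inner labelled $b$ must be at distance $\geq a$ from $b$ and $\geq b$ from each of $a+b,2a+b,3a+b$; chasing these constraints pushes $c(p')\geq 2a+2b$ on the one hand and $c(p')\leq 3a$ on the other, which is inconsistent since $2b>a$ (a consequence of the hypothesis $b>2a/3$).

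For $\ell\in[a+1,a+b-1]$, the adjacent inner $i_k$ lies in $[0,b-1]\cup[2a+1,3a+b]$ and each non-adjacent inner $i_j$ lies in $[0,a-1]\cup[a+b+1,3a+b]$. I plan to split on which of $i_1,\ldots,i_4$ the index $k$ refers to. If $i_k\in[0,b-1]$, then necessarily $i_k=i_1$ (no higher inner can be below $a$); then $i_2\geq a$ forces $i_2\geq a+b+1$, whence $i_4\geq 3a+b+1$, exceeding the palette. If $i_k\geq 2a+1$, the pairwise gap $\geq a$ together with $i_4\leq 3a+b$ rules out $i_k\in\{i_1,i_2\}$, leaving $i_k\in\{i_3,i_4\}$; in either sub-case, tracing from $i_4\leq 3a+b$ downward with gaps $\geq a$, and using that the band $(a-1,a+b+1)$ is unavailable to non-adjacent inner labels, forces $i_1<0$, contradicting $i_1\geq 0$.

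The main obstacle will be organising the bookkeeping in the second case cleanly, where several sub-cases on the position and zone of $i_k$ must be handled. The essential observation making every sub-case collapse is that pairwise gaps of at least $a$ among four inner labels in a window of length only $3a+b$ leave a slack of just $b$, which is insufficient to both separate them from the wide forbidden band carved out around $\ell$ and keep all four in $[0,3a+b]$; the strict inequality $a>b$ and its consequence $2b>a$ are what make every candidate configuration fail.
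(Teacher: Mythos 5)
Your case analysis is sound as far as it goes: the reflection $\ell\mapsto 3a+b-\ell$ is indeed a symmetry of the constraints and pairs the forbidden intervals as you claim, and each of the two remaining cases is correctly eliminated (in the first, the chain $c(p')\geq a+b\Rightarrow c(p')\geq a+2b\Rightarrow c(p')\geq 2a+2b$ against $c(p')\leq 3a$ contradicts $2b>a$; in the second, the sub-cases on the position of $i_k$ all force $i_1<0$ or $i_4>3a+b$). However, what you prove is only that every pendant label lies in the \emph{union} $\{b,\ldots,a\}\cup\{a+b,\ldots,2a\}\cup\{2a+b,\ldots,3a\}$. The actual content of the lemma --- the reason it speaks of three \emph{regimes}, and what Lemma~\ref{lem:3-col-variable} and the chaining of extended $4$-stars into a variable gadget rely on --- is that in any single valid labelling all four pendant edges of the star lie in the \emph{same} one of the three intervals. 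Without that per-star coherence a regime cannot propagate along the chain, and the gadget does not encode one of three colours.

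The paper obtains the stronger statement by classifying the inner labels $c_1<c_2<c_3<c_4$ rather than arguing pendant by pendant: it first shows every pendant lies strictly between $c_1$ and $c_4$, then that exactly one consecutive gap $c_{i+1}-c_i$ is at least $a+b$ (using $4b+a>3a+b$, i.e.\ $b>\nicefrac{2a}{3}$), which forces the inner labels to be one of three explicit quadruples (e.g.\ $0,a+b,2a+b,3a+b$ when the large gap comes first). The distance-$2$ exclusion zones of width $2b>a$ around the inner labels then leave exactly one of the three intervals available to all four pendants simultaneously. Your argument never classifies the valid inner configurations, so it cannot rule out, say, one pendant at $a$ and another at $a+b$ on the same star. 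To close the gap you would need to add this classification step (your bookkeeping on the $i_j$ already contains most of the ingredients) or otherwise prove directly that two pendants of one star cannot lie in different intervals.
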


	\begin{proof}
		In a valid $L(a,b)$-edge-$(3a+b+1)$-labelling,
		we note $c_1<c_2<c_3<c_4$ the colours of the 4 edges in the middle of the extended $4$-star, and $l_1,l_2,l_3,l_4$ the colours of the pendant edges such that $l_i$ is the colour of the pendant edge connected to the edge of colour $c_i$.

\medskip
\noindent \textbf{Claim 1}. For all $i$, $c_1 < l_i < c_4.$
\medskip

	We only have to prove one inequality, as the other one is obtained by symmetry.
	If $l_i \leq c_1$ (bearing in mind also $b<a$), we have: $$3a+b\geq c_4-l_i=(c_1-l_i)+(c_2-c_1)+(c_3-c_2)+(c_4-c_3)\geq 3a+b.$$
	So $(c_1,c_2,c_3,c_4)=(b,a+b,2a+b,3a+b)$, but $a>b$ so there is no possible value for $l_1$, which is not possible.
	So $c_1<l_i$, and by symmetry $l_i<c_4$.

\medskip
\noindent \textbf{Claim 2}. There exists $i \in \{1,2,3\}$ such that $c_{i+1}-c_{i} \geq a+b$.
\medskip

	We suppose the contrary. We have proved $c_1<l_2,l_3<c_4$.
	If $l_2<c_2$, then $c_2-c_1=c_2-l_2+l_2-c_1\geq a+b$, 
	impossible.
	If $c_2<l_2<c_3$, then $c_3-c_2=c_3-l_2+l_2-c_2\geq a+b$, impossible.
	So $c_3<l_2<c_4$.
	 Symmetrically, we obtain $c_1<l_3<c_2$.
	So $c_1 < l_3 < c_2 < c_3< l_2 < c_4$, and we get:
	%
	$c_4-c_1 
	\geq (l_3-c_1)+(c_2-l_3)+(c_3-c_2)+(l_2-c_3)+(c_4-l_2) 
	\geq 4b+a  
	>3a+b$, {which is not possible}.

Now we are in a position to derive the lemma, with the three regimes coming from the three possibilities of Claim 2. If $i=1$, then the inner edges of the star are $0,a+b,2a+b,3a+b$ and the pendant edges come from  $\{b,\ldots,a\}$. If $i=2$, then the  inner edges of the star are $0,a,2a+b,3a+b$ and the pendant edges come from $\{a+b,\ldots,2a\}$. If $i=3$, then the  inner edges of the star are $0,a,a+b,3a+b$ and the pendant edges come from $\{2a+b,\ldots,3a\}$. 
	\end{proof}
The \emph{variable gadget}  may be taken as a series of extended $4$-stars chained together. In the following, the ``top'' pendant edges refer to one of the two free pendant edges in each extended $4$-star (not involved in the chaining together). The following is a simple consequence of Lemma~\ref{lem:3-col} and is depicted in Figure~\ref{fig:3-col}.
\begin{lemma}
Any valid $L(a,b)$-edge-$(3a+b+1)$-labelling of a variable gadget is such that the top pendant edges are all coloured from precisely one of the sets $\{b,\ldots,a\}$, $\{a+b,\ldots,2a\}$ or $\{2a+b,\ldots,3a\}$. Moreover, any colouring of the top pendant edges from one of these sets is valid.
\label{lem:3-col-variable}
\end{lemma}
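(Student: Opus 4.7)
The plan is to apply Lemma~\ref{lem:3-col} separately to each extended $4$-star in the chain and then show that the whole chain must sit in a single common regime. Within any one extended $4$-star, Lemma~\ref{lem:3-col} immediately forces all four of its pendant edges into exactly one of the three intervals $\{b,\ldots,a\}$, $\{a+b,\ldots,2a\}$, or $\{2a+b,\ldots,3a\}$; in particular, the two ``top'' pendants of that star share a regime.

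Next I would propagate the regime along the chain. Consecutive extended $4$-stars are joined by identifying an edge (or short path) that is simultaneously part of both stars' neighbourhoods. Since such a shared edge has a single colour that must belong to the regime of both stars it participates in, and since the three regimes $\{b,\ldots,a\}$, $\{a+b,\ldots,2a\}$, $\{2a+b,\ldots,3a\}$ are pairwise disjoint, adjacent stars are locked into the same regime. Iterating along the chain yields the first assertion of the lemma.

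For the ``moreover'' direction, I would exhibit, for each of the three regimes, an explicit valid $L(a,b)$-edge-$(3a+b+1)$-labelling of the entire variable gadget in which the top pendant edges are additionally free to take any value in the regime. Within each extended $4$-star I use the inner-edge colourings recovered in the proof of Lemma~\ref{lem:3-col}: namely $(0,a+b,2a+b,3a+b)$ for regime $\{b,\ldots,a\}$, $(0,a,2a+b,3a+b)$ for regime $\{a+b,\ldots,2a\}$, and $(0,a,a+b,3a+b)$ for regime $\{2a+b,\ldots,3a\}$. Since every top pendant is at distance at most $2$ only from the inner edges of its own star, the analysis in Lemma~\ref{lem:3-col} already guarantees that any value in the regime is admissible independently of the other top pendants.

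The main obstacle I anticipate is the chaining-consistency check: verifying that for each of the three regimes one can choose the inner-edge labels of adjacent stars so that their shared edge (and the distance-$2$ constraints crossing the join) are satisfied. Because all three inner-edge patterns listed above contain both $0$ and $3a+b$ (the extreme colours), one can always arrange the join by aligning a ``$0$'' inner edge of one star with a ``$3a+b$'' inner edge of the next, analogously to the chaining described for Theorem~\ref{thm:1-to-2}; the distance-$2$ constraints at the join then reduce to checks already handled inside one of the two stars. Once this is done, both directions of the lemma follow.
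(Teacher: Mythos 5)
Your overall strategy is exactly the one the paper intends: the paper gives no written proof of Lemma~\ref{lem:3-col-variable} beyond declaring it ``a simple consequence of Lemma~\ref{lem:3-col}'' and pointing at Figure~\ref{fig:3-col}, and your two-step argument --- apply Lemma~\ref{lem:3-col} to each extended $4$-star, then propagate the regime along the chain because consecutive stars share a pendant edge whose colour must lie in both stars' regimes, while the three regimes are pairwise disjoint intervals --- is the right way to make the first (rigidity) half precise. That half of your proof is correct.

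The ``moreover'' half contains a genuine, though repairable, gap: it is not true that ``any value in the regime is admissible independently'' at every pendant position of a star. With inner edges $0,a+b,2a+b,3a+b$ (regime $\{b,\ldots,a\}$), the pendant hanging off the inner edge $0$ is forced to be exactly $a$ (adjacency gives $l\geq a$; distance~$2$ to $a+b$ gives $l\leq a$ or $l\geq a+2b$, and the latter is excluded by the remaining inner edges because $2b>a$), and the pendant hanging off $a+b$ is forced to be exactly $b$; only the two pendants attached to $2a+b$ and $3a+b$ range over all of $\{b,\ldots,a\}$. The same phenomenon occurs in the other two regimes: in each case two pendant positions are pinned to the two endpoints of the interval and only the other two are free. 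Hence, to realise an arbitrary value of the regime on a designated top pendant, you must also be allowed to choose \emph{which} inner edge that pendant is attached to --- this is precisely the ``swapping inner edges $a+b$ and $3a+b$ wherever necessary'' remark in the caption of Figure~\ref{fig:3-col}, and your proof needs to make that choice explicitly (both for the top pendants and for the shared chaining pendants). A second, smaller issue: your inner-edge pattern $(0,a,a+b,3a+b)$ for the regime $\{2a+b,\ldots,3a\}$ (copied from a typo in the proof of Lemma~\ref{lem:3-col}) is not a legal colouring, since the mutually adjacent inner edges $a$ and $a+b$ differ by only $b<a$; the pattern forced by the gap analysis $c_2-c_1\geq a$, $c_3-c_2\geq a$, $c_4-c_3\geq a+b$ summing to the full span $3a+b$ is $(0,a,2a,3a+b)$. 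With these two corrections your chaining check goes through and the lemma follows.
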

	\begin{figure}
		\begin{center}
\[
\xymatrix{
& &  \bullet  \ar@{-}[d]^{b} & & & \bullet  \ar@{-}[d]^{b} & & & \bullet  \ar@{-}[d]^{b} & & & \\
& &  \bullet  \ar@{-}[d]^{a+b} & & & \bullet  \ar@{-}[d]^{a+b} & & & \bullet  \ar@{-}[d]^{a+b} & & & \\
\bullet  \ar@{-}[r]^{\textcolor{black}{a}} & \bullet  \ar@{-}[r]^{0} & \bullet  \ar@{-}[r]^{2a+b} 
 & \bullet  \ar@{-}[r]^{\textcolor{black}{a}} & \bullet  \ar@{-}[r]^{0} & \bullet  \ar@{-}[r]^{2a+b} & \bullet  \ar@{-}[r]^{\textcolor{black}{a}} & \bullet  \ar@{-}[r]^{0} & \bullet  \ar@{-}[r]^{2a+b} & \bullet  \ar@{-}[r]^{\textcolor{black}{a}} & \bullet \\
& &  \bullet  \ar@{-}[u]_{3a+b} & & & \bullet  \ar@{-}[u]_{3a+b} & & & \bullet  \ar@{-}[u]_{3a+b} & & & \\
& &  \bullet  \ar@{-}[u]_{b} & & & \bullet  \ar@{-}[u]_{b} & & & \bullet  \ar@{-}[u]_{b} & & & \\
}
\]		

		\end{center}
		\caption{Three extended $4$-stars chained together, to form a variable gadget for Theorem~\ref{thm:3-col}. The pendant edges drawn on the top will be involved in clauses gadget. Suppose the top pendant edges are coloured $b$ (as is drawn). In order to fulfill distance $2$ constraints in the clause gadget, we may need the inner star vertices adjacent to them to be coloured not always $a+b$ (for example, if that pendant edge $b$ is adjacent in a clause gadget to another edge coloured $a+b$). This is fine, the chaining construction works when swapping inner edges $a+b$ and $\textcolor{black}{3a+b}$ wherever necessary.}
		\label{fig:3-col}
		\end{figure}
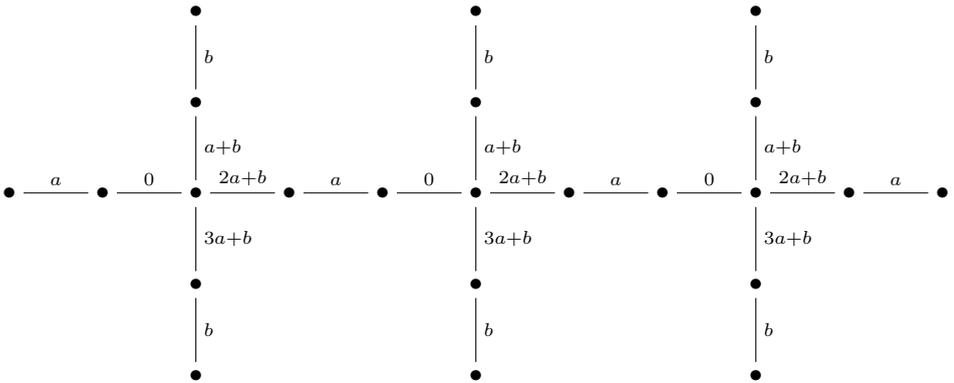
The clause gadget will be nothing more than a $2$-star (a path) which is formed from a new vertex uniting two (top) pendant edges from their respective variable gadgets. The following is clear.
\begin{lemma}
A clause gadget is in a valid $L(a,b)$-edge-$(3a+b+1)$-labelling in the case where its edges are coloured distinctly. If they are coloured the same, then it can not be in a valid $L(a,b)$-edge-$(3a+b+1)$-labelling. 
\label{lem:3-col-clause}
\end{lemma}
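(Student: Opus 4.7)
The plan is to handle the two directions of the claim separately by analysing the $2$-star $e_{1}e_{2}$ together with the constraints inherited from the adjoining variable gadgets, the three admissible regimes being $R_{1}=\{b,\ldots,a\}$, $R_{2}=\{a+b,\ldots,2a\}$, $R_{3}=\{2a+b,\ldots,3a\}$ from Lemma~\ref{lem:3-col-variable}.

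For the ``same regime $\Rightarrow$ invalid'' direction, I would note that each regime is an interval of length $a-b$, which is strictly less than $a$ since $b<a$. Two labels drawn from a single regime therefore differ by less than $a$, violating the adjacency constraint on the two edges of the clause $2$-star. This half is essentially a one-line arithmetic check.

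For the converse I would exhibit, for each pair of distinct regimes $\{R_{i},R_{j}\}$, concrete labels satisfying the adjacency constraint: for $\{R_{1},R_{2}\}$ take $b$ and $2a$ (difference $2a-b\geq a$); for $\{R_{2},R_{3}\}$ take $a+b$ and $3a$ (same difference); for $\{R_{1},R_{3}\}$ any choice works because the minimum inter-regime gap is $(2a+b)-a=a+b>a$. Lemma~\ref{lem:3-col-variable} ensures that these specific labels on the top pendant edges are realisable inside the variable gadgets.

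The one remaining point, which I expect to be the main obstacle, is the distance-$2$ constraint between each clause-gadget edge and the inner star edges of the \emph{opposite} variable gadget: those inner neighbours must differ by at least $b$ from the other clause-gadget label. As indicated in the caption of Figure~\ref{fig:3-col}, the inner star edge adjacent to a top pendant edge admits two alternative colours (namely $a+b$ and $3a+b$ in the $R_{1}$ configuration, with analogous swaps for $R_{2}$ and $R_{3}$), and these two alternatives differ by $2a$. Since the forbidden window about any fixed label has width $2b<2a$, at least one alternative falls outside it, so the distance-$2$ constraint can always be met by choosing the inner swap appropriately. Once this $2a>2b$ observation is in place, the remaining case analysis over the three regime pairs reduces to routine bookkeeping.
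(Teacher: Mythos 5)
Your proof is correct and fills in exactly the argument the paper leaves implicit: the paper merely asserts the lemma is ``clear'', with the same-regime direction resting on the interval width $a-b<a$ and the distance-$2$ issue delegated to the swap described in the caption of Figure~\ref{fig:3-col}, which is precisely how you argue. The only cosmetic point is that the gap of $2a$ between the two admissible inner-edge colours is specific to the regime $\{b,\ldots,a\}$; for the other two regimes the pair of alternatives is different (e.g.\ $0$ and $3a+b$), but each pair is still more than $2b$ apart, so your window argument goes through unchanged.
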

We are now ready to prove Theorem~\ref{thm:3-col}.

\begin{proof}[Proof of Theorem~\ref{thm:3-col}.]	
We reduce from 3-COL. Let $G$ be an instance of 3-COL involving $n$ vertices and $m$ edges. Let us explain how to build an instance $G'$ for {\sc $L(a,b)$-Edge-$(3a+b+1)$-Labelling}. Each particular vertex may only appear in at most $m$ edges ($m$ is an upper ground on its degree), so for each vertex we take a copy of the variable gadget which is $m$ extended $4$-stars chained together. Each particular instance of the vertex belongs to one of the free (top) pendant edges of the variable gadget. For each edge of $G$ we use a $2$-star to unite an instance of these free (top) pendant edges from the corresponding two variable gadgets. Thus, we add a single vertex for each edge of $G$, but no new edges in $G'$ (they already existed in the variable gadgets). We claim that $G$ is a yes-instance of 3-COL if and only if $G'$ is a yes-instance of {\sc $L(a,b)$-Edge-$(3a+b+1)$-Labelling}.

(Forwards.) Take a proper $3$-colouring of $G$ and induce these pendant edge labels on the corresponding variable gadgets according to the three regimes of Lemma~\ref{lem:3-col}. For example, map colours $1$, $2$, $3$ to $b, a+b, 2a+b$. Plainly distinct pendant edge labels can be consistently united in a $2$-claw by the new vertex that appeared in the clause gadget. Thus, we can see this is a valid $L(a,b)$-edge-$(3a+b+1)$-labelling of $G'$.

(Backwards.) From a valid $L(a,b)$-edge-$(3a+b+1)$-labelling of $G'$, we infer a $3$-colouring of $G$ by reading the pendant edge labels from the variable gadget of the corresponding vertex and mapping these to their corresponding regime. The consistent valuation of each variable follows from Lemma~\ref{lem:3-col-variable} and the fact that it is proper (not-all-equal) follows from Lemma~\ref{lem:3-col-clause}. 
\end{proof}

\section{Case \boldmath{$\frac{b}{a} = \frac{2}{3}$}}
\label{sec:1-in-3}
 In light of Lemma~\ref{lem:gcd}, it suffices to find $k$ so that {\sc $L(3,2)$-Edge-$k$-Labelling} is \NP-hard.

	\begin{theorem}
		The problem {\sc $L(3,2)$-Edge-$12$-Labelling} problem is \NP-complete.
		\label{thm:1-in-3}
	\end{theorem}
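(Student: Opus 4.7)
The plan is to reduce from 1-in-3-SAT, following the gadget-based template of the earlier sections. Fixing $a=3$, $b=2$ and the label set $\{0,\ldots,11\}$, I would first characterise the $L(3,2)$-edge-12-labellings of the extended 4-star, which is the natural atomic piece. Its four inner edges are pairwise adjacent and so carry four distinct labels pairwise differing by at least $3$; since the resulting span is at most $11$, a short case analysis pins down the inner labels, up to translation and order-reversal, to $\{0,3,6,9\}$, $\{0,3,6,11\}$, $\{0,3,8,11\}$ or $\{0,5,8,11\}$. For each of these I would determine the labels admissible on a pendant edge, using that a pendant edge is adjacent to one inner edge and at distance $2$ from the other three, while pendant edges among themselves are at distance $3$ and hence mutually unconstrained. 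The aim is to isolate two ``truth regimes'' for the pendant labels that will serve as true and false.

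Once the atomic analysis is complete, I would chain extended 4-stars together, exactly as in Sections~\ref{sec:1-to-2} and~\ref{sec:3-col}, to form a variable gadget whose designated top pendant edges are forced into one common regime while every label of that regime remains individually realisable, yielding a variable-consistency lemma analogous to Lemma~\ref{lem:3-col-variable}. The main obstacle is then the clause gadget: unlike the NAE-3-SAT and 3-COL reductions, the 1-in-3-SAT constraint is asymmetric in true/false since $(T,F,F)$ is allowed but $(T,T,F)$ and $(F,F,F)$ are not, so a naive symmetric claw cannot work. A natural attempt is to enrich the claw with additional forcing edges whose labels are pinned by the rigid distance-$2$ constraints, so that exactly one of the three incident top pendant edges is compelled to lie in the true regime; this may require a small tree or cycle attached to the central vertex, carefully tuned so that its set of valid labellings projects exactly onto the 1-in-3 relation on the three pendant labels.

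Because the constraints are rigid and numerical, I expect the construction and verification of the clause gadget to be the delicate step, and this is exactly the sort of exhaustive combinatorial check one would offload to a computer search; this case is a natural candidate for the computer-assisted proof alluded to in Section~\ref{s-intro}. Granted the two gadget lemmas, the rest is routine: for a 1-in-3-SAT instance $\Phi$ with $N$ literal occurrences, attach one $N$-top-pendant variable gadget per variable and one clause gadget per clause. A satisfying 1-in-3 assignment translates into a valid labelling by colouring each variable gadget in its truth regime and extending through the clause gadgets; conversely, from any valid labelling of the constructed graph, the variable lemma reads off consistent truth values and the clause lemma forces the 1-in-3 condition in each clause. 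Membership in \NP{} is immediate, as the labelling itself is a polynomial-size certificate.
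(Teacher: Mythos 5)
Your overall architecture (reduce from 1-in-3-SAT, analyse the extended $4$-star, chain copies into a variable gadget, then attach clause gadgets) matches the paper, but there is a genuine gap at the decisive step: you never construct the clause gadget, and the reasoning by which you defer it is based on a misconception. You argue that because the 1-in-3 relation is asymmetric in true/false, ``a naive symmetric claw cannot work'' and that one must enrich the claw with extra forcing structure to be found by computer search. In fact the paper's clause gadget \emph{is} a plain claw. The asymmetry is carried entirely by the shape of the two pendant-edge regimes, which your sketch stops short of identifying: the computer-verified dictionary of Lemma~\ref{lem:1-in-3} forces the top pendant edges of a variable gadget to lie either all in $\{5,6\}$ (read as true) or all in $\{2,3,8,9\}$ (read as false). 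Three edges meeting at the claw's centre must pairwise differ by at least $3$; since $5$ and $6$ differ by $1$, at most one edge can be true, and since no three labels of $\{2,3,8,9\}$ are pairwise $3$ apart, at least one edge must be true. The only admissible triples take one label from $\{5,6\}$, one from $\{2,3\}$ and one from $\{8,9\}$ --- exactly the 1-in-3 relation, with the split of the false regime into two distant clusters doing the work you wanted extra gadgetry for. Without pinning down these regimes you cannot see this, and the clause gadget you promise is left as an unconstructed hope; that is the heart of the reduction, so the proof is incomplete as it stands.

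Two smaller points. First, the paper's computer assistance is confined to the extended-$4$-star analysis (the dictionary of which pendant labels are mutually compatible), not to the clause gadget; your enumeration of the inner-edge label sets is also not exhaustive as written (for instance $\{0,3,7,11\}$ and the span-$10$ configurations are missing), though this would be absorbed by the same exhaustive check. Second, once the regimes $\{5,6\}$ and $\{2,3,8,9\}$ are established, the forward direction of the reduction needs the observation that the two false literals of a satisfied clause must be realised on \emph{opposite} sides of the false regime (one near $2$, one near $9$), which the variable gadget must therefore permit; the paper verifies this explicitly in its Figure~\ref{fig:1-in-3}, and your proposal would need the analogous check.
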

We use the colours $\{0,\ldots,11\}$. The following can be verified by hand; we used a computer.\footnotemark
\footnotetext{The Python program for checking this can be found at \texttt{https://perso.ens-lyon.fr/gaetan.berthe/edge-labelling/labelling-code.py}. Use, e.g.: \texttt{extended\_four\_star=[(0,1),(0,2),(0,3),(0,4),(1,5),(2,6),(3,7),(4,8)]} with \texttt{plotPoss(extended\_four\_star, 3, 2, 12,  \{(1,5):[2]\})} to find $2: \{2,3,9\}$.}
		\begin{lemma}
			In a valid $L(3,2)$-edge-$12$-labelling of the extended 4-star, the possible labels of the three other pendant edges after one label is fixed are given in the following dictionary: 
			\begin{align*}
				2:& \{2, 3, 9\}\\3:& \{2, 3\}\\ 5:& \{5, 6\}\\6:& \{5, 6\}\\ 8:& \{8, 9\}\\ 9:& \{2, 8, 9\}
			\end{align*}
			\label{lem:1-in-3}
		\end{lemma}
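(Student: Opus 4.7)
The plan is a finite exhaustive case analysis, split into an enumeration of the inner labellings followed by an independent analysis of each pendant. Label the four inner edges of the extended $4$-star $e_1,\ldots,e_4$ (all incident to the central vertex) and the four pendant edges $f_1,\ldots,f_4$, with $f_i$ adjacent to $e_i$. I would first note that the distance-$1$ edge pairs are $\{e_i,e_j\}$ for $i\neq j$ and $\{e_i,f_i\}$, the distance-$2$ pairs are $\{e_i,f_j\}$ for $i\neq j$, and two pendants sit at distance $3$. Hence an $L(3,2)$-edge-$12$-labelling is exactly a choice of values $c_i$ and $l_i$ in $\{0,\ldots,11\}$ such that the $c_i$ pairwise differ by at least $3$, each $l_i$ differs from $c_i$ by at least $3$ and from every $c_j$ ($j\neq i$) by at least $2$, and \emph{no constraint whatsoever is imposed between two pendants $l_i,l_j$}. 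This last observation is the whole reason the lemma is a clean finite check.

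The first step is to enumerate the admissible inner labellings. Ordering $c_1<c_2<c_3<c_4$ and writing $d_i=c_{i+1}-c_i-3\geq 0$, the constraint $c_4-c_1\leq 11$ forces $d_1+d_2+d_3\leq 2$, yielding only fifteen candidate quadruples, essentially halved by the symmetry $l\mapsto 11-l$. For each candidate $(c_1,\ldots,c_4)$ and each $i$, I would then compute
\[
L_i=\bigl\{\,l\in\{0,\ldots,11\} : |l-c_i|\geq 3 \text{ and } |l-c_j|\geq 2 \text{ for all } j\neq i\,\bigr\}.
\]
Because pendants impose no mutual constraints, $(c_1,\ldots,c_4)$ extends to a valid labelling iff every $L_i$ is non-empty, in which case the set of valid labellings is \emph{precisely} $L_1\times L_2\times L_3\times L_4$. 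Consequently, for each pendant value $l^\star$, the set of labels attainable on any other pendant equals $\bigcup\{\,L_j : (c_1,\ldots,c_4)\text{ extendible},\ l^\star\in L_i,\ j\neq i\,\}$.

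A direct tabulation (or the short Python script cited by the authors) discards most candidates because some $L_i$ is empty, for example all slack-$0$ quadruples $(0,3,6,9)$, $(1,4,7,10)$, $(2,5,8,11)$. Up to $l\mapsto 11-l$, the extendible quadruples reduce to $(0,3,8,11)$, $(0,5,8,11)$ and $(0,4,7,11)$: from $(0,3,8,11)$ one reads off the entries $5\mapsto\{5,6\}$ and $6\mapsto\{5,6\}$; from $(0,5,8,11)$ and its mirror $(0,3,6,11)$ one obtains $3\mapsto\{2,3\}$, $8\mapsto\{8,9\}$ and the partial contributions $\{2,3\}$ and $\{8,9\}$ to the keys $2$ and $9$; and from $(0,4,7,11)$ one obtains further contributions $\{2,9\}$ to both keys $2$ and $9$, extending them to $\{2,3,9\}$ and $\{2,8,9\}$ respectively. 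No pendant value outside $\{2,3,5,6,8,9\}$ ever appears. The main obstacle is merely the bookkeeping of this enumeration, which is precisely what the cited script automates; nothing subtle beyond the three steps above is required.
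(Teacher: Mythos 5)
Your proposal is correct and takes essentially the same approach as the paper, which simply asserts the lemma as a finite exhaustive check delegated to a computer; your write-up supplies the human-readable organisation of that check (distance classification, the slack bound $d_1+d_2+d_3\leq 2$ giving fifteen inner quadruples, and the product structure $L_1\times\cdots\times L_4$). I verified the enumeration: the only extendible quadruples are indeed $(0,3,8,11)$, $(0,5,8,11)$, $(0,3,6,11)$ and $(0,4,7,11)$, and aggregating their $L_i$ reproduces the stated dictionary exactly.
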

	 The \emph{variable gadget} may be taken as a series of extended $4$-stars chained together. In the following, the ``top'' pendant edges refer to one of the two free pendant edges in each extended $4$-star (not involved in the chaining together). 
\begin{lemma}
Any valid $L(3,2)$-edge-$12$-labelling of a variable gadget is such that the top pendant edges are all coloured from precisely one of the sets $\{5,6\}$ or $\{2,3,8,9\}$. Moreover, any colouring of the top pendant edges from one of these sets is valid.
\label{lem:1-in-3-variable}
\end{lemma}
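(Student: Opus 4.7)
The plan is to combine Lemma~\ref{lem:1-in-3} with a propagation argument along the chain of extended $4$-stars that comprises the variable gadget. First, I observe that the dictionary in Lemma~\ref{lem:1-in-3} splits the available labels into two classes that cannot mix. The entries $5 \to \{5,6\}$ and $6 \to \{5,6\}$ show that $5$ and $6$ can only coexist with one another on the pendants of a single extended $4$-star. Symmetrically, $2 \to \{2,3,9\}$, $3 \to \{2,3\}$, $8 \to \{8,9\}$, and $9 \to \{2,8,9\}$ all lie inside $\{2,3,8,9\}$. Since there is no dictionary entry connecting the two sets, within each extended $4$-star all four pendant edges must already be coloured from precisely one of $\{5,6\}$ or $\{2,3,8,9\}$.

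Next, two consecutive extended $4$-stars in the chain share a pendant edge, forcing both $4$-stars to sit in the same regime. Propagating this along the chain, all pendant edges of the variable gadget, and in particular all top pendant edges, must be coloured from a single common regime.

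For the \emph{moreover} direction, I would exhibit an explicit labelling of the remaining edges of the gadget for each regime and each choice of top colours, in the spirit of the construction drawn in Figure~\ref{fig:3-col} for Theorem~\ref{thm:3-col}. The $\{5,6\}$ regime is straightforward because $5$ and $6$ play fully interchangeable roles in Lemma~\ref{lem:1-in-3}. The main obstacle is the $\{2,3,8,9\}$ regime, whose dictionary is not uniformly symmetric; for instance $3$ and $8$ are not mutually compatible on a single $4$-star. The resolution is that the chaining pendants and inner star edges of each $4$-star can adapt to the top colour chosen there, rather than being fixed globally: since the chain has the structure of a path, these local choices can be made one $4$-star at a time, and the finite collection of $4$-star patterns needed is easily verified by the same computer program that underlies Lemma~\ref{lem:1-in-3}.
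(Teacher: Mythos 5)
Your argument for the first claim is sound and is essentially the paper's (the paper offers no written proof of this lemma, only Figure~\ref{fig:1-in-3}, but the intended reasoning is exactly yours): the dictionary of Lemma~\ref{lem:1-in-3} has keys only in $\{2,3,5,6,8,9\}$, its two blocks $\{5,6\}$ and $\{2,3,8,9\}$ are closed and disjoint, and consecutive extended $4$-stars in the chain share a pendant edge, so the regime propagates along the whole gadget.

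The gap is in your treatment of the \emph{moreover} direction. You correctly flag that the $\{2,3,8,9\}$ regime is not uniformly symmetric, but your proposed resolution --- that the chaining pendants of each $4$-star can ``adapt'' locally, one star at a time along the path --- does not work, and in fact the claim as literally stated is false for that regime. If two adjacent stars have top pendants $3$ and $8$, then by the dictionary every other pendant of the first star lies in $\{2,3\}$ and every other pendant of the second lies in $\{8,9\}$; their shared chaining pendant would have to lie in $\{2,3\}\cap\{8,9\}=\emptyset$. No local choice of inner or chaining labels can rescue this, so ``any colouring of the top pendant edges from $\{2,3,8,9\}$'' is not achievable. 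What is true, and what the reduction actually uses, is that every top pendant can independently be given any label from $\{2,9\}$ (or, in the other regime, from $\{5,6\}$), which is what Figure~\ref{fig:1-in-3} exhibits: the shared pendants can then always be taken in $\{2,9\}$, and $2$ and $9$ are mutually compatible by the dictionary. You should restrict the ``moreover'' claim to those labels (as the paper implicitly does) rather than assert it for all of $\{2,3,8,9\}$, and then verify the finitely many adjacent-top combinations $\{2,2\},\{2,9\},\{9,9\}$ by exhibiting the explicit star labellings.
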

	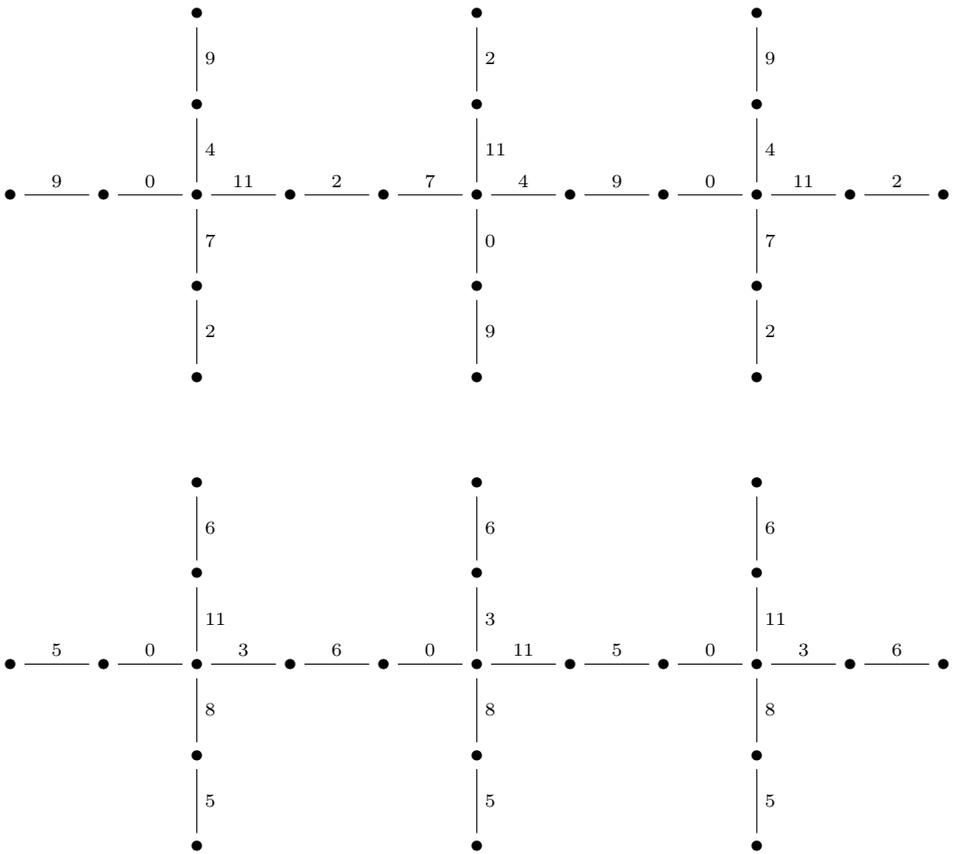
\begin{figure}
		\begin{center}
\[
\xymatrix{
& &  \bullet  \ar@{-}[d]^{9} & & & \bullet  \ar@{-}[d]^{2} & & & \bullet  \ar@{-}[d]^{9} & & & \\
& &  \bullet  \ar@{-}[d]^{4} & & & \bullet  \ar@{-}[d]^{11} & & & \bullet  \ar@{-}[d]^{4} & & & \\
\bullet  \ar@{-}[r]^{9} & \bullet  \ar@{-}[r]^{0} & \bullet  \ar@{-}[r]^{11} 
 & \bullet  \ar@{-}[r]^{2} & \bullet  \ar@{-}[r]^{7} & \bullet  \ar@{-}[r]^{4} & \bullet  \ar@{-}[r]^{9} & \bullet  \ar@{-}[r]^{0} & \bullet  \ar@{-}[r]^{11} & \bullet  \ar@{-}[r]^{2} & \bullet \\
& &  \bullet  \ar@{-}[u]_{7} & & & \bullet  \ar@{-}[u]_{0} & & & \bullet  \ar@{-}[u]_{7} & & & \\
& &  \bullet  \ar@{-}[u]_{2} & & & \bullet  \ar@{-}[u]_{9} & & & \bullet  \ar@{-}[u]_{2} & & & \\
}
\]		

\[
\xymatrix{
& &  \bullet  \ar@{-}[d]^{6} & & & \bullet  \ar@{-}[d]^{6} & & & \bullet  \ar@{-}[d]^{6} & & & \\
& &  \bullet  \ar@{-}[d]^{11} & & & \bullet  \ar@{-}[d]^{3} & & & \bullet  \ar@{-}[d]^{11} & & & \\
\bullet  \ar@{-}[r]^{5} & \bullet  \ar@{-}[r]^{0} & \bullet  \ar@{-}[r]^{3} 
 & \bullet  \ar@{-}[r]^{6} & \bullet  \ar@{-}[r]^{0} & \bullet  \ar@{-}[r]^{11} & \bullet  \ar@{-}[r]^{5} & \bullet  \ar@{-}[r]^{0} & \bullet  \ar@{-}[r]^{3} & \bullet  \ar@{-}[r]^{6} & \bullet \\
& &  \bullet  \ar@{-}[u]_{8} & & & \bullet  \ar@{-}[u]_{8} & & & \bullet  \ar@{-}[u]_{8} & & & \\
& &  \bullet  \ar@{-}[u]_{5} & & & \bullet  \ar@{-}[u]_{5} & & & \bullet  \ar@{-}[u]_{5} & & & \\
}
\]		
		\end{center}
		\caption{Three extended $4$-stars chained together, to form a variable gadget for Theorem~\ref{thm:1-in-3}. The pendant edges drawn on the top will be involved in clauses gadget. \textcolor{black}{We show in the upper drawing how both sides of the regime representing false can be achieved ($2$ and $9$). We show in the lower drawing how it works with $\{5,6\}$}.}
		\label{fig:1-in-3}
		\end{figure}
The clause gadget will be nothing more than a $3$-star, which is formed from a new vertex uniting three (top) pendant edges from their respective variable gadgets. This is drawn in Figure~\ref{fig:1-in-3-clause}. The following is clear.
\begin{lemma}
A clause gadget is in a valid $L(3,2)$-edge-$12$-labelling precisely in the case where its edges are coloured one from $\{5,6\}$ and two from $\{2,3,8,9\}$. 
\label{lem:1-in-3-clause}
\end{lemma}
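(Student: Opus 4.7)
The plan is a compact case analysis anchored by Lemma~\ref{lem:1-in-3-variable}. By that lemma, every one of the three edges of the clause gadget (a claw) must receive a colour from $\{5,6\} \cup \{2,3,8,9\}$, and moreover each of the two sets is individually realisable at the interface of the corresponding variable gadget. Since the three edges of the clause gadget all meet at its central vertex, they are pairwise adjacent, and so their labels must pairwise differ by at least $a=3$.

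For the necessary direction, I enumerate how the three edges split between the two sets. Any two labels drawn from $\{5,6\}$ differ by at most $1$, so at most one edge can come from $\{5,6\}$. On the other hand, no three elements of $\{2,3,8,9\}$ are pairwise at distance $\geq 3$: the only pairs in this set with difference $\geq 3$ are $\{2,8\}, \{2,9\}, \{3,8\}, \{3,9\}$, and none extends to a triple. The only surviving split is therefore exactly one edge from $\{5,6\}$ and exactly two from $\{2,3,8,9\}$.

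For the sufficient direction, I list the admissible triples (up to permutation): $(5,2,8)$, $(5,2,9)$, $(6,2,9)$, $(6,3,9)$. Each has pairwise differences at least $3$, so the distance-$1$ constraints inside the claw are met. It then remains to couple each triple to a compatible labelling of the three incident variable gadgets so that the distance-$2$ constraints across the interface are satisfied. The flexibility already established in Lemma~\ref{lem:1-in-3-variable}, together with the choices of inner-star colours visible in Figure~\ref{fig:1-in-3}, makes this a routine check; the interface diagram in Figure~\ref{fig:1-in-3-clause} is meant to display the representative computation.

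I expect the main obstacle to be purely bookkeeping: for each of the four admissible triples one must pick inner-star labels at each of the three incident variable gadgets so as to simultaneously avoid distance-$2$ clashes, both with the other two clause-gadget edges and with the other inner-star edges of the same variable gadget. This is finite and mechanical, and the proof finishes by exhibiting a single compatible choice per triple, using the symmetry between the two ``halves'' $\{2,3\}$ and $\{8,9\}$ of the false regime and the freedom (already exploited in Figure~\ref{fig:1-in-3}) to swap between them along the chain of extended $4$-stars.
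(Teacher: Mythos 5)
Your proposal is correct and follows the same route the paper intends: the paper offers no written proof of this lemma (it is prefaced only by ``The following is clear'' and illustrated by Figure~\ref{fig:1-in-3-clause}), and your argument makes that explicit --- the three claw edges are pairwise adjacent, so their labels must pairwise differ by at least $3$, which rules out two labels from $\{5,6\}$ and also rules out three labels from $\{2,3,8,9\}$. Your enumeration of the admissible triples $(5,2,8)$, $(5,2,9)$, $(6,2,9)$, $(6,3,9)$ in fact sharpens the lemma's wording, since not every choice of one label from $\{5,6\}$ and two from $\{2,3,8,9\}$ is valid (e.g.\ $5,3,9$ fails because $|5-3|<3$); and since the reduction only needs one realisable triple per clause (the paper uses $(2,9,5)$ as in Figure~\ref{fig:1-in-3-clause}), your deferred interface check is no less complete than the paper's.
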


	\begin{proof}[Proof of Theorem~\ref{thm:1-in-3}.]
We reduce from (monotone) 1-in-3-SAT. Let $\Phi$ be an instance of 1-in-3-SAT involving $n$ occurrences of (not necessarily distinct) variables and $m$ clauses. Let us explain how to build an instance $G$ for {\sc $L(3,2)$-Edge-$12$-Labelling}. Each particular variable may only appear at most $n$ times, so for each variable we take a copy of the variable gadget which is $n$ extended $4$-stars chained together. Each particular instance of the variable belongs to one of the free (top) pendant edges of the variable gadget. For each clause of $\Phi$ we use a $3$-star to unite an instance of these free (top) pendant edges from the corresponding variable gadgets. Thus, we add a single vertex for each clause, but no new edges (they already existed in the variable gadgets). We claim that $\Phi$ is a yes-instance of 1-in-3-SAT if and only if $G$ is a yes-instance of {\sc $L(3,2)$-Edge-$12$-Labelling}.

(Forwards.) Take a satisfying assignment for $\Phi$. Let the range $\{5,6\}$ represent true and the range $\{2,3,8,9\}$ represent false. In particular, every clause has two false and one should be chosen as (e.g.) $2$ and the other $9$. Thus, where a variable is false, some of top pendant edges are labelled $2$ and others $9$ (and this is shown in Figure~\ref{fig:1-in-3}). In each clause, we will have (say) $2,9,5$. Plainly these can be consistently united in a claw by the new vertex that appeared in the clause gadget. We draw the situation in Figure~\ref{fig:1-in-3} to demonstrate that this will not introduce problems at distance $2$. Thus, we can see this is a valid $L(3,2)$-edge-$12$-labelling of $G$.

(Backwards.) From a valid $L(3,2)$-edge-$12$-labelling of $G$, we infer an assignment $\Phi$ by reading, in the variable gadget,  range $\{5,6\}$ as true and the range $\{2,3,8,9\}$ as false. The consistent valuation of each variable follows from Lemma~\ref{lem:1-in-3-variable} and the fact that it is in fact not-all-equal follows from Lemma~\ref{lem:1-in-3-clause}. 
\end{proof}
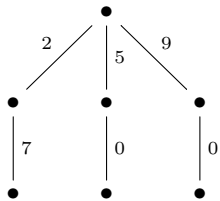
\begin{figure}
$
\xymatrix{
& \bullet  \ar@{-}[dl]_{2} \ar@{-}[d]^{5}  \ar@{-}[dr]^{9}& \\
\bullet \ar@{-}[d]^{7} & \bullet \ar@{-}[d]^{0} & \bullet \ar@{-}[d]^{0}\\
\bullet & \bullet & \bullet \\
}
$

\caption{The clause gadget and its interface with the variable gadgets (where we must consider distance $2$ constraints).}
\label{fig:1-in-3-clause}
\vspace*{-0.5cm}
\end{figure}

\section{Case \boldmath{$\frac 12< \frac{b}{a} < \frac{2}{3}$}}
\label{sec:2-in-4}	
 
	\begin{theorem}
		If $\frac 12<\frac ba < \frac 23$, then the problem  {\sc $L(a,b)$-Edge-$(4b+a+1)$-Labelling} is \NP-complete.
		\label{thm:2-in-4}
	\end{theorem}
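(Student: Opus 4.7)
The plan is to reduce from monotone $2$-in-$4$-SAT, following the template established in Sections~\ref{sec:3-col} and~\ref{sec:1-in-3}. We will use the colour set $\{0,\ldots,4b+a\}$ and design a star-like variable gadget that forces each ``top'' pendant edge to lie in one of two regimes $T$ (encoding \emph{true}) and $F$ (encoding \emph{false}).

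The first and most delicate step will be a regime lemma for a suitable base gadget (we expect the extended $4$-star, possibly with a minor decoration, in analogy with Section~\ref{sec:3-col}). Notice that in the range $\frac12<\frac{b}{a}<\frac23$ the inner edges of an extended $4$-star must be pairwise separated by at least $a$, so their range is at least $3a$, while the available range is only $4b+a$; hence the slack $4b-2a$ is strictly positive yet strictly less than $\frac{2a}{3}$. The plan is to perform a case analysis on where this slack is placed among the inner-edge gaps, in the spirit of the proof of Lemma~\ref{lem:3-col}, to conclude that every pendant edge must belong to one of precisely two subsets $T,F\subseteq\{0,\ldots,4b+a\}$, each so narrow that at most two of its values can appear pairwise at distance $\geq a$.

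Second, the variable gadget will be built by chaining many copies of the base gadget together, exactly as in Figures~\ref{fig:3-col} and~\ref{fig:1-in-3}. A straightforward propagation argument will then show that all exposed (top) pendant edges of the variable gadget must lie in a common regime ($T$ or $F$), and that any assignment of colours from the chosen regime to those pendant edges can be realised.

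Third, the clause gadget will simply be a $4$-star: a fresh vertex identifying four top pendant edges belonging to the four variables of a $2$-in-$4$ clause. Since these four edges are pairwise adjacent, their colours differ pairwise by at least $a$. By the ``at most two values'' property of $T$ and $F$, neither regime alone can host three of them, so any valid labelling is forced to use exactly two colours from $T$ and two from $F$. A direct check, analogous to Figure~\ref{fig:1-in-3-clause}, will confirm that such $2$--$2$ distributions can be extended consistently to the inner edges of the four adjacent variable gadgets, so that the distance-$2$ constraints at the clause are satisfied. With these three lemmas in hand, the overall reduction follows the template of the proofs of Theorems~\ref{thm:3-col} and~\ref{thm:1-in-3}: a formula $\Phi$ is a yes-instance of $2$-in-$4$-SAT if and only if the associated graph admits a valid $L(a,b)$-edge-$(4b+a+1)$-labelling, reading $T$ as \emph{true} and $F$ as \emph{false}. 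The main obstacle will be pinning down the exact regimes $T$ and $F$: the tight interval $\frac12<\frac{b}{a}<\frac23$ leaves very little room in the base gadget, so one must enumerate the feasible slack placements carefully and rule out any spurious ``third regime'' (which would push the reduction toward $3$-COL rather than $2$-in-$4$-SAT).
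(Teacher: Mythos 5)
Your high-level plan (reduce from monotone $2$-in-$4$-SAT over the palette $\{0,\ldots,4b+a\}$) matches the paper, but the gadget design rests on an assumption that fails in exactly this parameter range, and that failure is what makes the case hard. You hope to prove a ``regime lemma'' for the extended $4$-star giving two regimes $T$ and $F$ for its pendant edges, as in Sections~\ref{sec:1-to-2}--\ref{sec:1-in-3}. In fact, when $\frac 12<\frac ba<\frac 23$ and $k=4b+a+1$, the extended $4$-star is \emph{rigid}: the inner edges are forced to be $0,2b,2b+a,4b+a$ and the four pendant edges are forced to be exactly two copies of $b$ and two copies of $3b+a$ (Lemma~\ref{lem:main-2-in-4}). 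There is no binary choice left with which to encode a truth value, and chaining such stars propagates nothing, so your second step (all top pendants of a chained variable gadget lie in a common regime, with free choice within it) cannot be established from this base gadget. The paper instead builds the variable gadget from chained $10$-cycles with an end gadget and ``variable protrusions'' (Figures~\ref{fig:2-in-4-var} and~\ref{fig:2-in-4-var2}, Lemma~\ref{lem:summing-up-2-in-4}), which force the protruding edges to be all $0$ or all $4b+a$.

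Your clause gadget also does not work as stated. With the truth values that actually survive (essentially the singletons $0$ and $4b+a$), a plain $4$-star uniting four variable edges cannot carry two edges of the same colour at all, since its edges are pairwise adjacent and must differ by at least $a$; so it cannot realise the intended ``two true, two false'' configuration. The paper turns the rigidity of Lemma~\ref{lem:main-2-in-4} to its advantage in the opposite role: the clause gadget is a \emph{triply} extended $4$-star whose core forces the $b,b,3b+a,3b+a$ pattern on its four branches, and length-$4$ paths translate this into exactly two variables at $0$ and two at $4b+a$, with an extra argument (Lemma~\ref{lem:2-in-4-clause-bis}) ruling out spurious path colourings. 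The missing idea, then, is the rigidity of the extended $4$-star in this regime and the consequent need for a structurally different variable gadget; without it the reduction does not go through.
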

This is probably the most involved case in terms of the sophistication of the proof.	We need some lemmas before we can specify our gadgets.
		\begin{lemma}
			If $0 < b < a$ and $\lambda < 3a+b$, with $k=\lambda + 1$, any edge $k$-labelling of the extended $4$-star must involve inner edge labels of $(0\leq) p<q<r<s(<k)$ so that both $q-p\geq 2b$ and $s-r\geq 2b$.
		\end{lemma}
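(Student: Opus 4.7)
The plan is to argue by contradiction via an obstruction on a single pendant edge, chosen cunningly as the one attached to the inner edge labelled $r$. Denote the four inner edges of the extended $4$-star by their labels $p<q<r<s$; they all share the centre and are therefore pairwise adjacent, so $q-p,r-q,s-r\geq a$ and hence $s-p\geq 3a$. Combined with $p\geq 0$ and $s\leq\lambda\leq 3a+b-1$, this squeezes $p\leq b-1$ and $s\geq \lambda-b+1$. Moreover, the three inner gaps sum to at most $3a+b-1$ while each is at least $a$, so no single gap can reach $a+b$; in particular $r-q\leq a+b-1$ and $s-r\leq a+b-1$.

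To obtain $q-p\geq 2b$, suppose for contradiction that $q-p\leq 2b-1$, and let $l_r$ denote the label of the pendant edge attached to the inner edge coloured $r$. The $L(a,b)$ constraints forbid $l_r$ from the four integer intervals $F_p:=[p-b+1,p+b-1]$, $F_q:=[q-b+1,q+b-1]$, $F_r:=[r-a+1,r+a-1]$ and $F_s:=[s-b+1,s+b-1]$. The hypothesis $q-p\leq 2b-1$ gives $F_p\cap F_q\neq\emptyset$, the bound $r-q\leq a+b-1$ gives $F_q\cap F_r\neq\emptyset$, and the bound $s-r\leq a+b-1$ gives $F_r\cap F_s\neq\emptyset$, so the union is the single integer interval $[p-b+1,s+b-1]$. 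Since $p<b$ yields $p-b+1\leq 0$ and $s>\lambda-b$ yields $s+b-1\geq \lambda$, this single interval already covers $\{0,\ldots,\lambda\}$, leaving no legal colour for $l_r$, a contradiction. The symmetric inequality $s-r\geq 2b$ then follows by replaying the same argument after the valid colour-reversing involution $c\mapsto\lambda-c$, which produces another valid labelling and interchanges the two outer gaps.

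The main obstacle, and the reason for singling out the pendant adjacent to $r$ rather than to $p$, $q$ or $s$, is that only this choice places the wide $a$-neighbourhood in the middle of the chain of forbidden intervals where it is needed to bridge the two outer $b$-neighbourhoods under the sole hypothesis $\lambda<3a+b$; any other pendant locates the wide zone at an endpoint, after which the chain of overlaps can fail to close on the far side (using only $\lambda<3a+b$) and the labelling can survive.
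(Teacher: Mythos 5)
Your proof is correct and follows essentially the same route as the paper's: you analyse the pendant edge attached to the inner edge labelled $r$ and show that the constraints imposed by $p$, $q$, $r$, $s$ leave it a legal label only in the gap between $p$ and $q$, forcing $q-p\geq 2b$, with the second inequality obtained via the reflection $c\mapsto\lambda-c$ (the paper instead reruns the argument on the pendant at $q$, which is the same thing). One nitpick: when $q-p=2b-1$ (and similarly for the gaps of size exactly $a+b-1$) the consecutive forbidden intervals are adjacent rather than overlapping, so the claimed nonempty intersections can fail, but their union is still a contiguous block of integers and the covering argument goes through unchanged.
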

		\begin{proof}
			The assumption $\lambda < 3a+b$ forces: $\lambda-s<b$, $p<b$, $q-p,r-q,s-r<a+b$. Consider colouring the edge beside that edge which is coloured by $r$. This can't be coloured by anything other than something between $p$ and $q$, forcing $q-p\geq 2b$. Similarly, consider colouring the edge beside that edge which is coloured by $q$. This can't be coloured by anything other than something between $r$ and $s$, forcing $s-r\geq 2b$. \qed
		\end{proof}
		\begin{corollary}
			Let $a\leq 2b$. The minimal $k$ so that the extended $4$-star gadget can be edge $k$-labelled is $4b+a+1$.
		\end{corollary}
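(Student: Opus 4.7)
The plan is to prove the corollary by establishing the matching lower and upper bounds $k \geq 4b+a+1$ and $k \leq 4b+a+1$ on the minimum.

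First, for the lower bound, I would invoke the preceding lemma. Suppose for contradiction that a valid edge $k$-labelling of the extended $4$-star exists with $k \leq 4b+a$. Set $\lambda := k-1 \leq 4b+a-1$. In this section's regime $\tfrac{b}{a} < \tfrac{2}{3}$ we have $3b < 2a$, whence $\lambda < 3a+b$ and the preceding lemma applies. Writing the four inner-edge labels in increasing order as $p<q<r<s$, the lemma yields $q-p \geq 2b$ and $s-r \geq 2b$; combined with the trivial $r-q \geq a$ (the four inner edges are pairwise adjacent at the central vertex), this forces $s - p \geq 4b+a > \lambda$, a contradiction.

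Next, for the upper bound, I would exhibit an explicit valid $L(a,b)$-edge-$(4b+a+1)$-labelling. The plan is to assign the four inner edges the labels $0,\ 2b,\ 2b+a,\ 4b+a$ (the consecutive differences are $2b, a, 2b$, all at least $a$ since $a \leq 2b$, so the pairwise distance-$1$ constraints among inner edges hold) and the four respective pendants the labels $3b+a,\ 3b+a,\ b,\ b$. Then I would verify case-by-case that every adjacent pair of edges differs by at least $a$ and every pair at distance $2$ by at least $b$; the involution $z \mapsto (4b+a)-z$ of the construction halves the work, and pendants of distinct arms lie at edge-distance $3$ and so carry no constraint.

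The hard part will be the upper-bound construction. Natural-looking pendant candidates such as $b+a$ or $3b$ fail because $|(b+a) - 2b| = a-b$ and $|3b - (2b+a)| = a-b$ are strictly below $b$ in the regime $a < 2b$; the symmetric pair $(3b+a, 3b+a, b, b)$ is essentially forced once the inner labels are fixed as above. The lower bound, by contrast, is almost immediate once one checks the regime-dependent inequality $3b \leq 2a$ required for the lemma's hypothesis $\lambda < 3a+b$.
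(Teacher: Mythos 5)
Your proposal is correct and follows essentially the same route as the paper: the lower bound comes from the preceding lemma (which gives $q-p\geq 2b$, $s-r\geq 2b$, and $r-q\geq a$, hence $s-p\geq 4b+a$), and the upper bound is the same explicit labelling with inner edges $(0,2b,2b+a,4b+a)$ and pendants $3b+a,3b+a,b,b$. You are in fact slightly more careful than the paper in checking that the regime $\frac{b}{a}<\frac{2}{3}$ guarantees $\lambda<3a+b$ so that the lemma applies, and in verifying the distance-$2$ constraints of the construction.
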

		\begin{proof}
		We know it is at least $4b+a+1$ from the previous lemma. Further, the colouring alluded to in the previous proof extends to a valid colouring. Set labels $(p,q,r,s)$ to $(0,2b,2b+a,4b+a)$. Then, the edges next to $p$ and $q$ can be coloured $3b+a$, and the edges next to $r$ and $s$ can be coloured $b$. 
		\end{proof}
		\begin{lemma}
			Let $\frac 12<\frac ba< \frac 23$ and $k=4b+a+1$. The extended $4$-star gadget can be edge-$k$-labelled only such that two pendant edges are $b$ and the other two are $3b+a$.
			\label{lem:main-2-in-4}
		\end{lemma}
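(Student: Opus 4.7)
The plan is to first pin down the four inner-edge labels, then analyse each of the four pendant labels separately. By the preceding corollary and its underlying lemma, if $k = 4b+a+1$ then the four inner edges (which are pairwise adjacent through the centre of the star) receive labels $p<q<r<s$ satisfying $q-p\geq 2b$, $r-q\geq a$, and $s-r\geq 2b$; combined with $p\geq 0$ and $s\leq 4b+a$, this forces equality throughout, so the inner labels are necessarily $0$, $2b$, $2b+a$, $4b+a$ in some assignment to the four inner edges.

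I would next analyse, case by case, the pendant-edge label $\ell$ attached to an inner edge carrying label $x\in\{0,2b,2b+a,4b+a\}$, using that $\ell$ must satisfy $|\ell-x|\geq a$ (distance $1$) and $|\ell-y|\geq b$ for each of the three other inner labels $y$ (distance $2$), while $0\leq\ell\leq 4b+a$. For $x=0$, the constraint from $y=4b+a$ yields $\ell\leq 3b+a$, and the constraint from $y=2b$ combined with $\ell\geq a>b$ yields $\ell\geq 3b$; the constraint from $y=2b+a$ then rules out $\ell\in(b+a,3b+a)$, which covers all of $[3b,3b+a)$ because $a<2b$ gives $3b>b+a$. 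Hence $\ell=3b+a$. The case $x=4b+a$ is handled by the order-reversing symmetry $\ell\mapsto(4b+a)-\ell$, yielding $\ell=b$.

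For $x=2b$, the distance-$1$ constraint gives $\ell\leq 2b-a$ or $\ell\geq 2b+a$. The lower branch is empty because $a>b$ gives $2b-a<b$, whereas the distance-$2$ constraint from $y=0$ requires $\ell\geq b$. On the upper branch, the constraint from $y=2b+a$ forces $\ell\geq 3b+a$, which combined with $\ell\leq 3b+a$ from $y=4b+a$ yields $\ell=3b+a$. Symmetrically, $x=2b+a$ forces $\ell=b$.

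Gluing the four sub-results together, the two pendants attached to the inner edges labelled $0$ and $2b$ are both $3b+a$, and the two pendants attached to the inner edges labelled $2b+a$ and $4b+a$ are both $b$, so exactly two pendants are coloured $b$ and the other two are coloured $3b+a$. The work is largely bookkeeping; the only nontrivial algebraic ingredients are the two strict inequalities $b<a<2b$ coming from $\frac{1}{2}<\frac{b}{a}<\frac{2}{3}$, and the main place to be careful is checking that the ``lower'' branch of each distance-$1$ dichotomy is ruled out by the distance-$2$ constraints.
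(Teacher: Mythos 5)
Your proof is correct and takes essentially the same route as the paper's: pin down the inner labels $0,2b,2b+a,4b+a$ via the preceding lemma (together with the $\geq a$ gaps forced by adjacency at the centre), then determine each pendant label from the distance-$1$ and distance-$2$ constraints. The paper dispatches the pendant step in one terse sentence; your case analysis merely spells out the details it leaves implicit.
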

		\begin{proof}
			The inequality $\frac 12 \textcolor{black}{<} \frac ba$ proves it is a correct labelling.
			
			We have $\lambda=4b+a<3a+b$ so from the previous lemma we deduce the inner edge labels are $0,2b,2b+a,4b+a$. Adjacent to $2b+a$ must be $b$ and the same is true for $4b+a$. Adjacent to $2b$ must be $3b+a$ and the same is true of $0$. 
		\end{proof}
		
		Note that colours are in the set $\{0,\ldots,4b+a\}$. Below, in Figure~\ref{fig:2-in-4-var}, we give two gadgets for the variables, the \emph{end gadget} and the (basic part of the) \emph{variable gadget}. The variable gadget admits a number of edge-$(4b+a+1)$-labellings, but we want the only possibilities to be that drawn and one that swaps $3b+a$ and $b$. This we enforce by attaching an end gadget at the end (e.g. the left-hand end). For example, one may join it by adding new edges (in the present colouring of the end gadget, that would force the other colouring of the variable gadget). That is, we join the end gadget using the two edges drawn at the bottom below to the (basic part of the) variable gadget using the two edges drawn (say) to the left below. The join is accomplished by adding two new edges, one for each position. That is, one edge joins left and bottom, while the other edge joins right and top. In the variable gadget, the variables will extend from the $10$-cycles, but this is possible only on one side.
		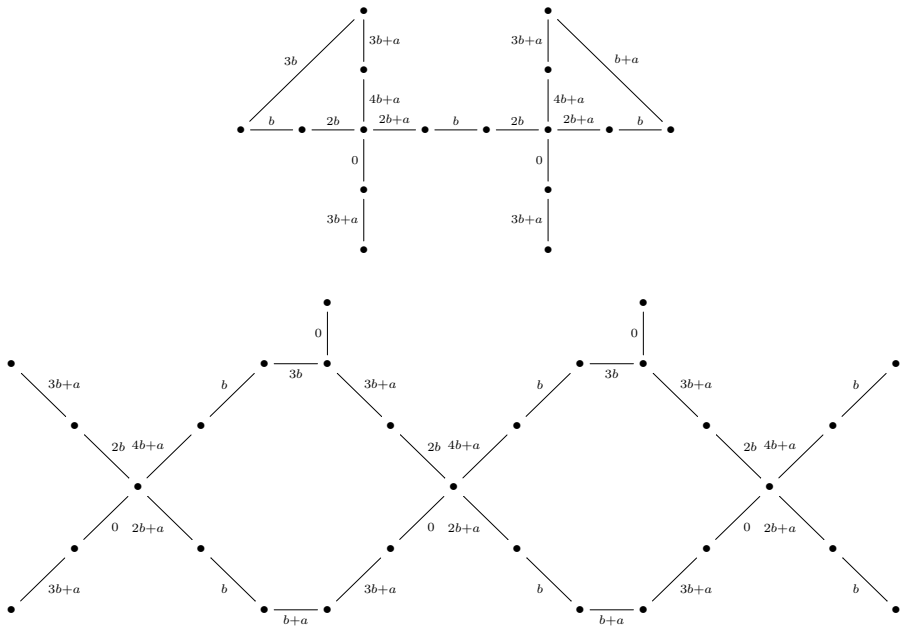
\begin{figure}[h]
\vspace*{-0.9cm}	
		\begin{center}
		\[
		\resizebox{6cm}{!}{
			\xymatrix{
				& & \bullet \ar@{-}[d]^{3b+a}  \ar@{-}[ddll]_{\textcolor{black}{3b}}  & & & \bullet \ar@{-}[d]_{3b+a} \ar@{-}[ddrr]^{\textcolor{black}{b+a}}& &  \\
				& & \bullet \ar@{-}[d]^{4b+a}  & & & \bullet \ar@{-}[d]^{4b+a} & & \\
				\bullet \ar@{-}[r]^{b} & \bullet \ar@{-}[r]^{2b} & \bullet \ar@{-}[r]^{2b+a} & \bullet \ar@{-}[r]^{b} & \bullet \ar@{-}[r]^{2b} & \bullet \ar@{-}[r]^{2b+a} & \bullet \ar@{-}[r]^{b} & \bullet \\ 
				& & \bullet \ar@{-}[u]^{0}  & & & \bullet \ar@{-}[u]^{0} & & \\
				& & \bullet \ar@{-}[u]^{3b+a}  & & & \bullet \ar@{-}[u]^{3b+a} & & \\
			}
		}
		\]
\[
\resizebox{12cm}{!}{\textcolor{black}{
			\xymatrix{
				& & & &  &  \bullet \ar@{-}[d]_{0} & & & & &  \bullet \ar@{-}[d]_{0} & & & & \\
				\bullet \ar@{-}[rd]^{3b+a} & & & & \bullet \ar@{-}[ld]_{b}  \ar@{-}[r]_{\textcolor{black}{3b}} & \bullet \ar@{-}[rd]^{3b+a} & & & & \bullet \ar@{-}[ld]_{b} \ar@{-}[r]_{\textcolor{black}{3b}} & \bullet \ar@{-}[rd]^{3b+a} & & & & \bullet \ar@{-}[ld]_{b}  \\ 
				& \bullet \ar@{-}[rd]^{2b} & &  \bullet \ar@{-}[ld]_{4b+a} &  & & \bullet \ar@{-}[rd]^{2b} & &  \bullet \ar@{-}[ld]_{4b+a} & & & \bullet \ar@{-}[rd]^{2b} & & \bullet \ar@{-}[ld]_{4b+a}  \\
				& & \bullet & & & & &  \bullet & & & & & \bullet & & \\
				& \bullet \ar@{-}[ru]_{0} & &  \bullet \ar@{-}[lu]^{2b+a} &  & & \bullet \ar@{-}[ru]_{0}  & & \bullet \ar@{-}[lu]^{2b+a}  & & & \bullet \ar@{-}[ru]_{0} & & \bullet \ar@{-}[lu]^{2b+a} \\
				\bullet \ar@{-}[ru]_{3b+a} & & & & \bullet \ar@{-}[lu]^{b} \ar@{-}[r]_{b+a} & \bullet \ar@{-}[ru]_{3b+a} & & & & \bullet \ar@{-}[lu]^{b} \ar@{-}[r]_{b+a} & \bullet \ar@{-}[ru]_{3b+a} & & & & \bullet \ar@{-}[lu]^{b}  \\ 
			}
}
}
\]
		\end{center}
		\caption{End gadget (above) and basic part of variable gadget (below).}
		\label{fig:2-in-4-var}
\vspace*{-0.1cm}	
		\end{figure}
		We now meet, in Figure~\ref{fig:2-in-4-var2}, a full variable gadget drawn with a variable protrusion, in this case \textcolor{black}{built from two $0$ edges (the symmetric form gives two $4b+a$ edges)}.
		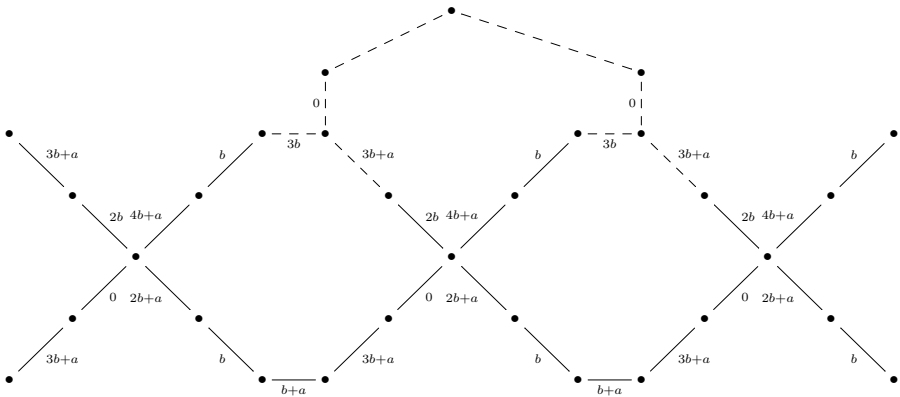
\begin{figure}
		\begin{center}
\[
\resizebox{12cm}{!}{\textcolor{black}{
			\xymatrix{
& & & & & & & \bullet \ar@{--}[dll]_{} \ar@{--}[drrr]_{} \\
				& & & &  &  \bullet \ar@{--}[d]_{0}
				& & & &  &  \bullet \ar@{--}[d]_{0} & & & \\ 
				\bullet \ar@{-}[rd]^{3b+a} & & & & \bullet \ar@{-}[ld]_{b}  \ar@{--}[r]_{\textcolor{black}{3b}} & \bullet \ar@{--}[rd]^{3b+a} & & & & \bullet \ar@{-}[ld]_{b} \ar@{--}[r]_{\textcolor{black}{3b}} & \bullet \ar@{--}[rd]^{3b+a} & & & & \bullet \ar@{-}[ld]_{b}  \\ 
				& \bullet \ar@{-}[rd]^{2b} & &  \bullet \ar@{-}[ld]_{4b+a} &  & & \bullet \ar@{-}[rd]^{2b} & &  \bullet \ar@{-}[ld]_{4b+a} & & & \bullet \ar@{-}[rd]^{2b} & & \bullet \ar@{-}[ld]_{4b+a}  \\
				& & \bullet & & & & &  \bullet & & & & & \bullet & & \\
				& \bullet \ar@{-}[ru]_{0} & &  \bullet \ar@{-}[lu]^{2b+a} &  & & \bullet \ar@{-}[ru]_{0}  & & \bullet \ar@{-}[lu]^{2b+a}  & & & \bullet \ar@{-}[ru]_{0} & & \bullet \ar@{-}[lu]^{2b+a} \\
				\bullet \ar@{-}[ru]_{3b+a} & & & & \bullet \ar@{-}[lu]^{b} \ar@{-}[r]_{b+a} & \bullet \ar@{-}[ru]_{3b+a} & & & & \bullet \ar@{-}[lu]^{b} \ar@{-}[r]_{b+a} & \bullet \ar@{-}[ru]_{3b+a} & & & & \bullet \ar@{-}[lu]^{b}  \\ 
			}
}
}
\]
		\end{center}
		\caption{\textcolor{black}{A full variable gadget drawn with a variable protrusion. Note that each variable protrusion, as the gadget repeats, must be of the same kind. This is demonstrated in Figure~\ref{fig:new-gaetan} where it is shown that the alternative colouring is impossible. The dashed lines in the present drawing also appear in our depiction of the clause gadget in Figures~\ref{fig:2-in-4-clause} and \ref{fig:2-in-4-clause-bis}.}}
		\label{fig:2-in-4-var2}
		\end{figure}

\begin{figure}
\[
		\resizebox{10cm}{!}{\textcolor{black}{
			\xymatrix{
& &				& &  &  \bullet \ar@{-}[d]_{\mathrm{Impossible}} & & & \\
\bullet \ar@{-}[rd]^{3b+a}	 & &				& & \bullet \ar@{-}[ld]_{b}  \ar@{-}[r]_{b+a} & \bullet \ar@{-}[rd]^{3b+a} & &\\ 
& \bullet \ar@{-}[rd]^{0}	 &			 &  \bullet \ar@{-}[ld]_{2b+a} &  & & \bullet \ar@{-}[rd]^{0} &    \\
& &			 \bullet & & & & &  \bullet \\
& \bullet \ar@{-}[ru]_{2b}	&			 &  \bullet \ar@{-}[lu]^{4b+a} &  & & \bullet \ar@{-}[ru]_{2b}  & \\
\bullet \ar@{-}[ru]_{3b+a}	 & & 				& & \bullet \ar@{-}[lu]^{b} \ar@{-}[r]_{3b} & \bullet \ar@{-}[ru]_{3b+a} & & \\ 
			}
		}
}
		\]
		\caption{\textcolor{black}{Demonstration that the variable protrusions are determined once the left-hand leaves of the first extended $4$-star are chosen (remember they are ultimately made equal by the end gadget).}}
		\label{fig:new-gaetan}
\end{figure}
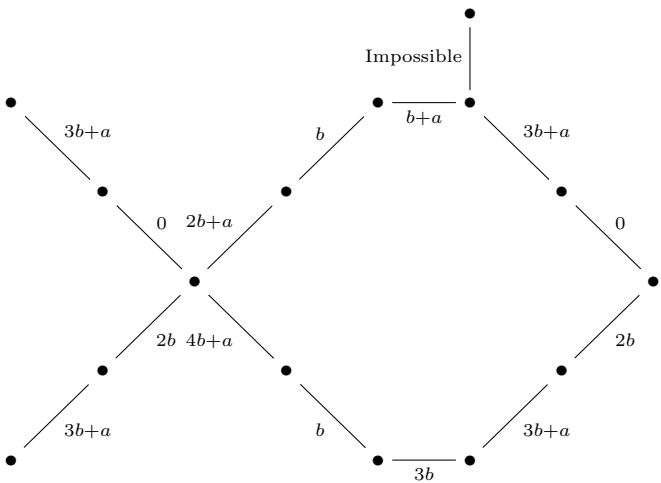

Summing up, we derive the following lemma.
\begin{lemma}
In a full variable gadget complete with an end gadget, any valid 
edge-$(4b+a+1)$-labelling has the property that the pendant edges from the \textcolor{black}{basic part of the variable gadgets, which form the vertical edges in the full variable protrusion, are either all $0$ or are all $4b+a$}.
\label{lem:summing-up-2-in-4}
\end{lemma}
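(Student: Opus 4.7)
The plan is to combine the local rigidity of each extended $4$-star in the chain, which is already established by Lemma \ref{lem:main-2-in-4}, with a propagation argument along the chain, and then use the end gadget to pin down a single global choice. First I would invoke Lemma \ref{lem:main-2-in-4} separately on each extended $4$-star appearing in the basic part of the variable gadget: inside each such star, two of the four pendant edges must receive label $b$ and the other two must receive label $3b+a$, while the inner edges must be labelled $0, 2b, 2b+a, 4b+a$ with the label $b$ adjacent to inner $2b+a$ or $4b+a$ and the label $3b+a$ adjacent to inner $0$ or $2b$.

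Next I would exploit the way adjacent extended $4$-stars are chained in the basic part. The pendant edges of consecutive stars are joined to pendant edges of the next through short connectors which, as Figure \ref{fig:2-in-4-var2} shows, must carry the fixed labels $3b$ and $b+a$; these connector labels, together with the distance-$1$ and distance-$2$ constraints they induce with the inner edges of both adjacent stars, leave only one consistent continuation at each junction. I would then carry out the small case analysis depicted in Figure \ref{fig:new-gaetan}: once the left-hand leaves of the first extended $4$-star are fixed, the only consistent extension through the connectors determines both the orientation of the inner edges of the next star and the type of its vertical protrusion. In particular, this rules out the "twisted" extension in which the protrusion type flips from $0$ to $4b+a$ (or vice versa) between consecutive stars, which is the conceptual heart of the argument.

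Finally, the end gadget is a single extended $4$-star whose labelling is again fixed by Lemma \ref{lem:main-2-in-4} up to the symmetry swapping $b$ and $3b+a$, and the two new edges that join it to the basic part play exactly the role of interior connectors, so they select one of the two possible colourings of the adjacent basic star. Propagating along the chain then shows that every vertical pendant edge of every protrusion receives either label $0$ throughout or label $4b+a$ throughout. The main obstacle, I expect, will be the second step: one has to verify simultaneously the distance-$1$ and distance-$2$ constraints across each junction and exhaust all choices of where the two labels $b$ and the two labels $3b+a$ can sit on the pendant edges of each star, ruling out every twisted extension. Figure \ref{fig:new-gaetan} encodes exactly this local check, and I would formalise it into a single \emph{propagation lemma} (one junction at a time) before assembling the global statement.
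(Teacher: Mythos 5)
Your proposal follows essentially the same route as the paper, which in fact writes no proof of this lemma beyond the phrase ``summing up'': it relies on Lemma~\ref{lem:main-2-in-4} for the local rigidity of each extended $4$-star, on the junction case analysis of Figure~\ref{fig:new-gaetan} to rule out a twisted continuation of the chain, and on the end gadget to fix the global choice --- exactly the three ingredients you identify. The only small imprecision is your assertion that the connector edges are forced to carry the ``fixed'' labels $3b$ and $b+a$ in a particular arrangement; Figures~\ref{fig:2-in-4-var2} and~\ref{fig:new-gaetan} show both orders arising depending on the orientation of the adjacent stars, but since you explicitly defer this to the junction-by-junction propagation lemma you propose to formalise, the plan is sound.
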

The clause gadget \textcolor{black}{is derived from} an extended $4$-star, whose properties we \textcolor{black}{gave} already in Lemma~\ref{lem:main-2-in-4}. \textcolor{black}{Specifically, we extend the paths in the extended $4$-star from length two to length four where they join the top node from a variable protrusion. Let us call this a triply extended $4$-star. This is drawn in Figure~\ref{fig:2-in-4-clause}, where we also show the interface with the variable gadgets, together with a valid colouring.}

\begin{proof}[Proof of Theorem~\ref{thm:2-in-4}.]	
We reduce from (monotone) 2-in-4-SAT. Let $\Phi$ be an instance of 2-in-4-SAT involving $n$ occurrences of (not necessarily distinct) variables and $m$ clauses. Let us explain how to build an instance $G$ for {\sc $L(a,b)$-Edge-$(4b+a+1)$-Labelling}. Each particular variable only appears at most $n$ times, so for each variable we take a full variable gadget with $n$ \textcolor{black}{variable} protrusions. Each particular instance of the variable belongs to \textcolor{black}{the top vertex of a variable protrusion} (one of these is drawn in Figure~\ref{fig:2-in-4-var2}, but none appears in Figure~\ref{fig:2-in-4-var}). For each clause of $\Phi$ we use a \textcolor{black}{triply} extended $4$-star to unite some instance of these \textcolor{black}{top vertices of the variable protrusions} from the corresponding full variable gadgets. We claim that $\Phi$ is a yes-instance of 2-in-4-SAT if and only if $G$ is a yes-instance of {\sc $L(a,b)$-Edge-$(4b+a+1)$-Labelling}.

(Forwards.) Take a satisfying assignment for $\Phi$. Let \textcolor{black}{$0$} represent true and \textcolor{black}{$4b+a$} represent false. 
Then, every clause has two true and two false \textcolor{black}{variables and} these can be consistently united in an triply extended 4-star as in Figure~\ref{fig:2-in-4-clause}. 
This is a valid $L(a,b)$-edge-$(4b+a+1)$-labelling of $G$.

(Backwards.) From a valid $L(a,b)$-edge-$(4b+a+1)$-labelling of $G$, we infer an assignment $\Phi$ by reading, in the full variable gadget, \textcolor{black}{$0$} as true and \textcolor{black}{$4b+a$} as false. The consistent valuation of each variable follows from Lemma~\ref{lem:summing-up-2-in-4} and the fact that it is 2-in-4 follows from Lemma~\ref{lem:main-2-in-4}, \textcolor{black}{bearing in mind the impossibility of colouring a path in the clause gadget as in Lemma~\ref{lem:2-in-4-clause-bis} and Figure~\ref{fig:2-in-4-clause-bis}}.
\end{proof}
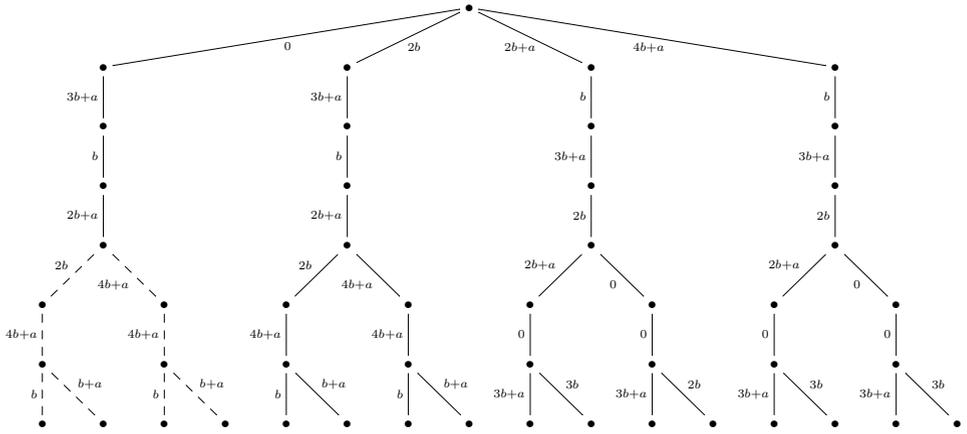
\begin{figure}
\[
\resizebox{13.5cm}{!}{\textcolor{black}{
\xymatrix{
& & & & & & & \bullet \ar@{-}[dllllll]^{0} \ar@{-}[dll]^{2b} \ar@{-}[drr]_{2b+a} \ar@{-}[drrrrrr]_{4b+a}\\
& \bullet \ar@{-}[d]_{3b+a} &  & & & \bullet \ar@{-}[d]_{3b+a}  & &  & & \bullet \ar@{-}[d]_{b}   & & & & \bullet \ar@{-}[d]_{b} & & \\
& \bullet \ar@{-}[d]_{b} &  & & & \bullet \ar@{-}[d]_{b}  & &  & & \bullet \ar@{-}[d]_{3b+a}   & & & & \bullet \ar@{-}[d]_{3b+a} & & \\
& \bullet \ar@{-}[d]_{2b+a} &  & & & \bullet \ar@{-}[d]_{2b+a}  & &  & & \bullet \ar@{-}[d]_{2b}   & & & & \bullet \ar@{-}[d]_{2b} & & \\
& \bullet \ar@{--}[dl]_{2b} \ar@{--}[dr]_{4b+a} &  & & & \bullet \ar@{-}[dl]_{2b} \ar@{-}[dr]_{4b+a}  & & & & \bullet \ar@{-}[dl]_{2b+a} \ar@{-}[dr]_{0}  & & & & \bullet \ar@{-}[dl]_{2b+a} \ar@{-}[dr]_{0}  & & \\
\bullet \ar@{--}[d]_{4b+a}  & & \bullet \ar@{--}[d]_{4b+a} & & \bullet \ar@{-}[d]_{4b+a}  & & \bullet \ar@{-}[d]_{4b+a}  & & \bullet \ar@{-}[d]_{0} & & \bullet \ar@{-}[d]_{0}  & & \bullet \ar@{-}[d]_{0} & & \bullet \ar@{-}[d]_{0}  & & \\
\bullet \ar@{--}[d]_b \ar@{--}[dr]^{b+a} & & \bullet \ar@{--}[d]_b \ar@{--}[dr]^{b+a} & & \bullet \ar@{-}[d]_b \ar@{-}[dr]^{b+a} & & \bullet \ar@{-}[d]_b \ar@{-}[dr]^{b+a} & & \bullet \ar@{-}[d]_{3b+a} \ar@{-}[dr]^{3b} & & \bullet \ar@{-}[d]_{3b+a} \ar@{-}[dr]^{2b} & & \bullet \ar@{-}[d]_{3b+a} \ar@{-}[dr]^{3b} & & \bullet \ar@{-}[d]_{3b+a} \ar@{-}[dr]^{3b} & & \\
\bullet & \bullet & \bullet & \bullet & \bullet & \bullet & \bullet & \bullet & \bullet & \bullet & \bullet & \bullet & \bullet & \bullet & \bullet & \bullet\\
}
}
}
\]
\caption{\textcolor{black}{The clause gadget and its interface with the variable gadgets (where we must consider distance $2$ constraints). In the first of the four variables, on the left-hand branch, we show in dashed lines the corresponding variable protrusion.}}
\label{fig:2-in-4-clause}
\end{figure}			

\begin{figure}
	\[
\textcolor{black}{
		\xymatrix{
			& &  \bullet \ar@{-}[dl]^{} \\
			& \bullet \ar@{-}[d]_{b} &  & & &  \\
			& \bullet \ar@{-}[d]_{x \geq b+a} &  & & &  \\
			& \bullet \ar@{-}[d]_{y\in\{2a,\ldots,4b\}} &  & & &  \\
			& \bullet \ar@{-}[dl]_{z\in\{a,\ldots,2b\}} \ar@{-}[dr]^{4b+a} &  & & &  \\
			\bullet \ar@{-}[d]_{0}  & & \bullet \ar@{-}[d]_{0} \\
			\bullet \ar@{-}[d]_{3b+a} \ar@{-}[dr]^{3b} & & \bullet \ar@{-}[d]_{3b+a} \ar@{-}[dr]^{3b} & \\
			\bullet & \bullet & \bullet & \bullet & \\
		}
		}
		\]
\caption{\textcolor{black}{An impossible colouring on a path in a clause gadget that shows (together with the valid colouring of Figure~\ref{fig:2-in-4-clause}) that the clause gadget enforces 2-in-4-SAT.}}
\label{fig:2-in-4-clause-bis}
\end{figure}
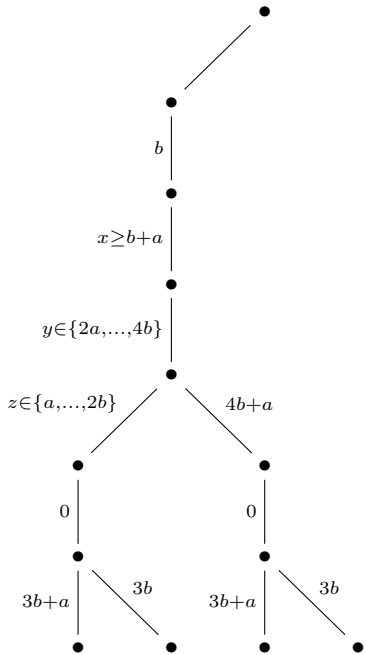			
{\color{black}
\begin{lemma}
The colouring depicted in Figure~\ref{fig:2-in-4-clause-bis} cannot be completed from the initial colouring of the second to top edge as $b$ and the lower six edges as $0$ (above) and $3b+a$ and $3b$ (below).
\label{lem:2-in-4-clause-bis}
\end{lemma}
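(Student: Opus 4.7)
The plan is to read off the constraints that the fixed labels in Figure~\ref{fig:2-in-4-clause-bis} impose on the three free edges $x$, $y$, $z$ of the path in the clause gadget (and on the top unlabelled edge), and then to show that the regime $\tfrac{1}{2}<\tfrac{b}{a}<\tfrac{2}{3}$ makes the resulting intervals mutually inconsistent. Throughout I will use only the fact that every pair of adjacent edges differs by at least $a$ and every pair at distance $2$ by at least $b$, together with the colour bound $4b+a$.

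First I would pin down $x$: adjacency with the $b$-edge forces $x\geq b+a$ (since $b<a$ in this regime), while the distance-$2$ constraint with the $4b+a$-edge, coming through the split vertex, forces $x\leq 3b$. Hence $x\in[b+a,3b]$, which is non-empty because $a\leq 2b$. The same kind of analysis on $z$ gives $z\in[a,2b]$: its adjacency to the lower $0$-edge forces $z\geq a$, while distance-$2$ with the $3b$ and $3b+a$ edges below (through the lower $0$-edge) pushes $z\leq 2b$. The edge $y$ is the delicate one: its adjacencies to $z$ and to $4b+a$ give $y\leq 4b$ and $|y-z|\geq a$, while the distance-$2$ constraint with the $b$-edge through $x$ rules out the alternative $y\leq z-a\leq 2b-a$; consequently $y\geq z+a\geq 2a$, so $y\in[2a,4b]$.

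The closing step would be to impose the adjacency $|y-x|\geq a$ and split into two cases. If $y\geq x+a$, then $y\geq b+2a$, which together with $y\leq 4b$ forces $2a\leq 3b$, i.e.\ $\tfrac{b}{a}\geq\tfrac{2}{3}$, contradicting the upper bound of the regime. If instead $y\leq x-a$, then $y\leq 3b-a$, which with $y\geq 2a$ gives $3a\leq 3b$, i.e.\ $a\leq b$, contradicting the fact that $b<a$ in this regime. Either case being impossible, no completion to the assumed partial colouring can exist.

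The main obstacle is not algebraic but combinatorial: one must keep careful track of which already-coloured edges sit at distance $1$ and which at distance $2$ from each of $x$, $y$, $z$ in the triply extended $4$-star, and in particular notice that the two $3b/(3b+a)$ pairs in the lower branches do constrain $z$ via the intermediate $0$-edges. Once those distances are laid out, both strict inequalities of $\tfrac{1}{2}<\tfrac{b}{a}<\tfrac{2}{3}$ are used exactly once, one in each case of the final dichotomy, so the contradiction is tight and uses the full strength of the hypothesis.
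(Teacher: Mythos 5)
Your argument follows essentially the same route as the paper's: both split on the order of $x$ and $y$ (equivalently $y\ge x+a$ versus $y\le x-a$) and close each case by interval arithmetic against the regime inequality $3b<2a$. Two of your intermediate derivations need repair, though neither breaks the proof. First, the distance-$2$ constraint between $x$ and the $4b+a$ edge gives $|x-(4b+a)|\ge b$, hence $x\le 3b+a$ (the alternative $x\ge 5b+a$ exceeds the palette), not $x\le 3b$ as you claim; your Case B still closes because $y\le x-a\le 3b$ together with $y\ge 2a$ contradicts $3b<2a$, whereas the paper closes this case by applying the constraint $4b+a-x\ge b$ directly to $x\ge y+a\ge 3a$. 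Second, in deriving $y\ge z+a$ you dismiss the alternative $y\le z-a$ using only the distance-$2$ constraint with the $b$ edge; that constraint forces $y\ge 2b$ or $y\le 0$, and $y=0$ is compatible with $y\le z-a$ since $z\ge a$, so you must also invoke the distance-$2$ constraint between $y$ and the two $0$ edges below, which forces $y\ge b>0$. With those two patches the proof is correct; your Case A (comparing $y\ge b+2a$ with $y\le 4b$) is a marginally more direct contradiction than the paper's Case 1, which routes through the $x$--$z$ distance-$2$ constraint, but the overall structure is the same.
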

\begin{proof}
    Case 1 : $x\leq y$. \\
    As $x$ and $y$ are neighbours, we have $x\leq y-a\leq 4b-a$. So the distance between $x$ and $z$ is $\leq 4b-a-a=4b-2a<b$.
    
    This is not possible as we need the distance between $x$ and $z$ to be $\geq b$.\\
    
    Case 2 : $y\leq x$. \\
    As $x$ and $y$ are neighbours, we have $x\geq y+a\geq 3a$. So $4b+a-x\leq 4b-2a=4b-2a<b$.
    
    This is not possible as we need $4b+a-x \geq b$.
\end{proof}
}

\section{Case \boldmath{$0\textcolor{black}{<} \frac{b}{a} \leq \frac{1}{2}$}}\label{a-last}

We follow the exposition of \cite{KM18}, which addresses the $L(2,1)$-edge-$7$-labelling problem. With permission we have used (an adaptation) of their diagrams. Note that in \cite{KM18}, they would call the problem we address the $L(a,b)$-edge-\textbf{3a}-labelling problem as, in their exposition $3a$ refers to the set $\{0,\ldots,3a\}$.

\begin{theorem}
The $L(a,b)$-edge-$(3a+1)$-labelling problem, for $a \geq  2b$, is \NP-hard.
\label{thm:tomas}
\end{theorem}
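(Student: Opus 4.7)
The plan is to reduce from (monotone) NAE-3-SAT, following the approach of \cite{KM18} for $L(2,1)$-edge-$7$-labelling and generalising their construction to arbitrary $a \geq 2b$. The first step is to observe the rigidity at degree-4 vertices: in any valid $L(a,b)$-edge-$(3a+1)$-labelling, the four edge labels at a degree-4 vertex must be pairwise at distance at least $a$ and all lie in $\{0,\ldots,3a\}$, so they are forced to be exactly $\{0,a,2a,3a\}$. The slack $a-2b\geq 0$ between the adjacency constraint $a$ and twice the distance-2 constraint $b$ is precisely what will be exploited: distance-2 edges can sit anywhere in an interval of width $a-2b$ around the forced anchors $0,a,2a,3a$.

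Next I would define the variable gadget by chaining extended 4-stars together, as in the earlier sections of the paper. For the $L(2,1)$ instance of \cite{KM18}, two designated ``regimes'' appear on the top pendant edges, namely $\{0,1\}$ and $\{5,6\}$. The natural generalisation is $\{0,\ldots,b\}$ (read as \emph{true}) and $\{3a-b,\ldots,3a\}$ (read as \emph{false}). A case analysis on the inner labels of each extended 4-star, which must be $\{0,a,2a,3a\}$ by the rigidity observation, rules out any ``mixed'' labelling within a single gadget: once one top pendant edge is fixed in one regime, the anchors at the adjacent centres propagate down the chain, and the condition $a\geq 2b$ ensures every distance-2 constraint is satisfiable. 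I would then state the analogues of Lemmas~\ref{lem:1-to-2} and \ref{lem:1-to-2-variable}: any valid labelling of a variable gadget has its top pendant edges all coloured from one of $\{0,\ldots,b\}$, $\{3a-b,\ldots,3a\}$, and any such colouring extends.

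The clause gadget is a 3-star, as in Section~\ref{a-2}, uniting three top pendant edges from three variable gadgets at a fresh centre vertex. An analogue of Lemma~\ref{lem:beyond-2-clause} is needed: three edges meeting at a common vertex cannot all sit in $\{0,\ldots,b\}$ nor all in $\{3a-b,\ldots,3a\}$ (since their pairwise distances must be $\geq a$ and the width of each regime is only $b\leq a/2$), but any ``two-and-one'' split admits labels $0,a,3a$ or $3a,2a,0$ which can be extended into the neighbouring variable gadgets because of the $a-2b$ slack on the inner anchors. With these two lemmas, the reduction is routine: forwards, a satisfying NAE assignment induces a labelling by applying the prescribed regime inside each variable gadget and splicing in the ``two-and-one'' clause labelling; backwards, the regime choices read off the variable gadgets yield a well-defined NAE-satisfying assignment. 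The main obstacle is the bookkeeping verification that each chaining step and each clause interface respects the distance-2 constraint, which is exactly what the assumption $a\geq 2b$ buys us, and this is where the proof essentially reuses the case analysis of \cite{KM18} without any numerical coincidence specific to $(a,b)=(2,1)$.
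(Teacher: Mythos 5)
Your reduction template (chained extended $4$-stars as variable gadgets, a $3$-star as clause gadget) has the right shape for the cases $\nicefrac{b}{a}>\nicefrac{1}{2}$, but it breaks down precisely in the regime $a\geq 2b$ that this theorem covers, and the key lemma you postulate is false. In an extended $4$-star the four inner edges are indeed forced to $\{0,a,2a,3a\}$, but a pendant edge is adjacent to only one inner edge (separation $\geq a$) and at distance $2$ from the other three (separation $\geq b$ only), while two distinct pendant edges are at distance $3$ from one another and hence mutually unconstrained. A short computation then shows that the pendant edge attached to the inner edge labelled $a$ is forced into $\{2a+b,\ldots,3a-b\}$ while the one attached to $2a$ is forced into $\{b,\ldots,a-b\}$; so in \emph{every} valid labelling the four pendants are split between the low and high ends of the spectrum and can never all lie in a single one of your regimes. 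Moreover no pendant edge can carry a label in $\{0,\ldots,b-1\}$ or $\{3a-b+1,\ldots,3a\}$ at all, so the regimes $\{0,\ldots,b\}$ and $\{3a-b,\ldots,3a\}$ are essentially unpopulatable; and the $3$-star clause gadget cannot be labelled from them either, since its three edges need pairwise separation $\geq a$, each regime has width $b<a$ and so contributes at most one label, and your suggested triple $0,a,3a$ uses the label $a$, which lies in neither regime. The slack $a-2b\geq 0$ is exactly what destroys the propagation you rely on: when $b\leq a/2$ the distance-$2$ constraint is too weak for an extended $4$-star to copy a value from one pendant edge to another.

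This is why the paper abandons the extended-$4$-star template here and instead follows \cite{KM18}: the truth value is encoded not by an interval containing a pendant label but by \emph{which three} of the forced labels $\{0,a,2a,3a\}$ occur on the free edges at the top of an (unsubdivided) $4$-star, namely $\{0,a,2a\}$ versus $\{a,2a,3a\}$, and consistency along the variable gadget is enforced by cycles whose two ends cannot be closed up under the unwanted labellings (Lemma~\ref{lem:tomas}), with a bespoke clause gadget analysed by the case distinctions of Lemma~\ref{lem:tomas2}. To repair your argument you would need to replace both gadgets with ones of this kind; no choice of interval regimes on the pendant edges of extended $4$-stars can work once $a\geq 2b$.
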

\begin{proof}
By reduction from (monotone) NAE-3-SAT using the gadgets and properties detailed in Lemmas~\ref{lem:tomas} and \ref{lem:tomas2}, below. 
\end{proof}
For $1\leq k \leq 3$, we define the sets $\circlenum{k}=\llbracket (k-1)a + b, ka - b \rrbracket$. The edges of a $4$-star have to be coloured $0,a,2a,3a$, in any valid $L(a,b)$-edge-$(3a+1)$-labelling. Then any neighbouring edge of the star has to be in one these sets, of the form $\circlenum{k}$. These properties we will now use without further comment.

A variable is represented by the \emph{variable gadget} of Figure~\ref{fig:KM-variable}.
\begin{figure}
\imgw{eng-promena_3a_a_b-eps-converted-to}{1}\imgw{eng-klauzule_3a_a_b-eps-converted-to}{0.8}
\caption{Variable gadget (adapted from \cite{KM18}).}
\label{fig:KM-variable}
\end{figure}

\begin{lemma}
In any valid $L(a,b)$-edge-$(3a+1)$-labelling of the variable gadget, the three edges free in the top of a $4$-star at the top of a repeatable section must be coloured (in all repeatable and initial parts) by either $\{a,2a,3a\}$ or $\{0,a,2a\}$.
\label{lem:tomas}
\end{lemma}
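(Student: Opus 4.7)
The plan is to derive the lemma from two local constraints at each $4$-star and then propagate them along the chain.

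First I would record precisely which colours are available for an edge $e$ adjacent to a star edge of colour $ja$, for $j\in\{0,1,2,3\}$. Combining the adjacency constraint $|c(e)-ja|\ge a$ with the distance-$2$ constraints $|c(e)-j'a|\ge b$ for $j'\ne j$, and using $a\ge 2b$ to keep the intervals $\circlenum{1},\circlenum{2},\circlenum{3}$ disjoint, yields the rules: if $ja=0$ then $c(e)\in\circlenum{2}\cup\circlenum{3}$; if $ja=a$ then $c(e)\in\circlenum{3}$; if $ja=2a$ then $c(e)\in\circlenum{1}$; if $ja=3a$ then $c(e)\in\circlenum{1}\cup\circlenum{2}$. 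I would tabulate these rules and use them as the basic local constraint throughout.

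Next, fix the $4$-star at the top of a repeatable section. Its four edges are labelled bijectively by $\{0,a,2a,3a\}$, and three of them are the free top edges going into the clauses. I would show that the fourth (downward) star edge, which joins into the chain of the repeatable section, must carry colour $0$ or $3a$. Suppose for contradiction it carries $a$; by the rules above the adjoining subdivision edge below it is forced into $\circlenum{3}$, while the other three star edges of that same $4$-star (labelled $0,2a,3a$) force this same subdivision edge to lie in $\circlenum{2}\cup\circlenum{3}$, $\circlenum{1}$ and $\circlenum{1}\cup\circlenum{2}$ respectively via the distance-$2$ neighbours visible in Figure~\ref{fig:KM-variable}. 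Intersecting these (pairwise disjoint) sets yields a contradiction; the case of colour $2a$ is symmetric.

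Having forced the downward edge into $\{0,3a\}$, the three free top edges automatically carry the complementary triple, namely $\{a,2a,3a\}$ or $\{0,a,2a\}$. It remains to verify that this choice is synchronised across successive repeatable sections: if one $4$-star has its downward edge coloured $0$, the next $4$-star along the chain must also have its downward edge coloured $0$ (not $3a$). I would walk along the short path of subdivision edges joining two consecutive $4$-stars in the figure, apply the rules of the first step at each end, and rule out the mixed case by showing that the intermediate colours cannot simultaneously belong to the forced $\circlenum{k}$-sets at both endpoints. The initial part of the gadget then pins one of the two symmetric options globally.

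The main obstacle is the finite case analysis in the middle two paragraphs: although every step is local, one must systematically run through all labellings of the connecting path in Figure~\ref{fig:KM-variable}, repeatedly invoking disjointness of the intervals $\circlenum{k}$ (which is exactly what the hypothesis $a\ge 2b$ gives) in order to exclude every case except the two symmetric colourings advertised in the figure.
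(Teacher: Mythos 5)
Your opening step---the table of admissible colours $\circlenum{2}\cup\circlenum{3}$, $\circlenum{3}$, $\circlenum{1}$, $\circlenum{1}\cup\circlenum{2}$ for an edge \emph{adjacent} to a star edge coloured $0,a,2a,3a$ respectively---is correct and is exactly the local tool the paper also uses. But the contradiction you extract from it in the second paragraph is not valid. The subdivision edge hanging below the downward star edge is adjacent to only \emph{one} of the four star edges; from the other three it inherits only the distance-$2$ constraint ``differ by at least $b$'', and that constraint is already what carves $\circlenum{3}$ out of $\{2a,\ldots,3a\}$. The sets $\circlenum{2}\cup\circlenum{3}$, $\circlenum{1}$, $\circlenum{1}\cup\circlenum{2}$ you intersect with are the admissible sets for edges adjacent to the \emph{other} star edges, and they simply do not constrain this edge. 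So the only conclusion available is that the subdivision edge lies in $\circlenum{3}$, which is non-empty when $a\geq 2b$: there is no local contradiction, and in fact no purely local argument at a single $4$-star can force the downward edge into $\{0,3a\}$.

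This matters because ruling out the ``bad'' configuration (the relevant pair of cycle edges at a degree-$4$ vertex being $\{a,2a\}$, Cases I and II of the paper's table) is precisely the non-trivial part of the lemma, and the paper has to resort to a \emph{global} argument: the variable gadget contains a cycle, and if one starts from an $\{a,2a\}$ pair the forced labels propagate in one direction as the periodic pattern $(a,2a,\circlenum{1},3a,2a,\circlenum{1},3a,2a,\ldots)$ and in the other direction as $(2a,a,\circlenum{3},0,a,\circlenum{3},0,a,\ldots)$; these two one-way infinite patterns are incompatible, so they can never meet to close the cycle. Nothing in your proposal plays the role of this propagation-around-the-cycle step, and your synchronisation paragraph (which is where such an argument would have to live) is only a statement of intent. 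As written, the proof both relies on an incorrect deduction and omits the one idea---using the closed cycle in the gadget to kill the $\{a,2a\}$ cases---that the lemma actually turns on.
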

\begin{proof}
Let us consider various possibilities for the colouring of $\{e'_0,e_0\}$ and $\{e'_1,e_1\}$ (up to the order inverting map that takes $(0,\ldots,3a)$ to $(3a,\ldots,0)$). These are drawn in Table~\ref{tab:KM-variable} (essentially reproduced from \cite{KM18}) together (in some cases) with why they lead to contradiction. The Cases III-VI are straightforward.
\begin{table}[h]
	\begin{center}
		\begin{tabular}{|c||c|c|c|c|}\hline
			&	$e_0'$&$e_1'$&$e_2'$ & \\
		 &	$e_0$&$e_1$&$e_2$ &$e_3$ \\
			\hline\hline
		I.	&	$a$&$0$&in $\circlenum{2}$ & \\
		   &	$2a$&$3a$&in $\circlenum{1}$ &$a,2a$ impossible in the cycle\\\hline
			II.&	$a$&$3a$&in $\circlenum{2}$ & \\
			   &	$2a$&$0$&in $\circlenum{3}$ & $a,2a$ impossible in the cycle\\\hline
		III.&	$0$&$a$&impossible& \\
			&	$3a$&$2a$&---&--- \\\hline
			IV.&$0$&$a$&impossible& \\
			   &$2a$&$3a$&---&--- \\\hline
			V.&	$0$&$3a$&in $\circlenum{1}$ & \\
		   &	$a$&$2a$&impossible&--- \\\hline
			VI.&	$0$&$2a$&in $\circlenum{1}$ & \\
			&	$a$&$3a$&in $\circlenum{2}$ &$0$ \\\hline
		\end{tabular}
			\caption{Variable gadget table (adapted from \cite{KM18}).}
			\label{tab:KM-variable}
	\end{center}

\end{table}
For Cases I and II we need to argue why $\{e'_0,e_0\}$ \textcolor{black}{cannot} be $\{a,2a\}$. In this case, the cycle must continue (bearing in mind that every \textcolor{black}{edge with a vertex of degree $4$} must be from $\{0,a,2a,3a\}$) in a certain way. To the right it must continue: $(a,2a,\circlenum{1},3a,2a,\circlenum{1},3a,2a,\ldots)$. However, to the left it must continue: $(2a,a,\circlenum{3},0,a,\circlenum{3},0,a,\ldots)$. These paths can now never join together in a cycle. This rules out Cases I and II.

The remaining labellings, Case VI and its various symmetries, are possible and result in the claimed behaviour. 
\end{proof}

The clause is represented by the \emph{clause gadget} of Figure~\ref{fig:KM-clause}.

\begin{lemma}
Consider any valid $L(a,b)$-edge-$(3a+1)$-labelling of the clause gadget, such that the input parts of the variable gadgets satisfy the previous lemma. Two of the input variable gadget parts must come from one of the regimes $\{a,2a,3a\}$ or $\{0,a,2a\}$, and the other input part from the other regime. In particular, if all three input variable gadget parts come from only one of the regimes, then this can not be extended to a valid $L(a,b)$-edge-$(3a+1)$-labelling.
\label{lem:tomas2}
\end{lemma}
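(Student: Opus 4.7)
The plan is to argue via case analysis on the regimes assigned to the three input interface edges coming from the variable gadgets. By Lemma~\ref{lem:tomas}, each such edge carries a label in $\{a,2a,3a\}$ (interpreted as \emph{true}) or in $\{0,a,2a\}$ (interpreted as \emph{false}); I shall write $R_T$ and $R_F$ for these two regimes. The clause gadget of Figure~\ref{fig:KM-clause} is built around one or more $4$-stars, each of whose incident edges must take the four values $0,a,2a,3a$ exactly once, with the intervening path edges (those of degree two) lying in a single $\circlenum{k}=\llbracket(k-1)a+b,ka-b\rrbracket$, since $a\geq 2b$ leaves each such interval non-empty and well separated from $\{0,a,2a,3a\}$ by exactly $b$.

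\textbf{Feasibility under NAE.} I would first exhibit a valid labelling whenever two inputs come from $R_T$ and one from $R_F$ (the symmetric case is identical). Using the freedom to permute the assignment $\{0,a,2a,3a\}$ on the central $4$-star, I align the unique $R_F$-input with the star-edge labelled $0$ and the two $R_T$-inputs with star-edges labelled $2a$ and $3a$ (the label $a$ remaining on the fourth, ``internal'' edge). The intermediate subdivision edges are then chosen from the appropriate $\circlenum{k}$ on each branch, leaving a gap of at least $b$ to each neighbour; this is possible precisely because each $\circlenum{k}$ contains at least one integer when $a\geq 2b$.

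\textbf{Impossibility under uniformity.} The core of the argument is to rule out the case where all three inputs lie in a single regime, say $R_T=\{a,2a,3a\}$. I would trace the label propagation from the three input edges through their subdividing $\circlenum{k}$-edges into the central $4$-star. Since the three interface labels all avoid the value $0$, the edge of the central $4$-star labelled $0$ must be the unique ``fourth'' edge. But then $0$ is at distance $2$ from each of the three inputs via the subdivision edges, and a pigeonhole argument, together with the separation $b$ enforced by the inequality $a\geq 2b$, forces at least two of the three subdivision edges into the same $\circlenum{k}$, which then violates the distance-$2$ constraint at the centre. A symmetric argument eliminates the all-$R_F$ case.

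The main obstacle will be the uniformity direction: one must carefully verify, across all permutations of inputs and all choices for the subdivision labels, that no escape is possible. The argument reduces to showing that the three $\circlenum{k}$-edges adjacent to the central star simply cannot simultaneously respect the distance-$2$ constraints imposed by the star edges $0,a,2a,3a$ once the three inputs restrict to a three-element subset of $\{0,a,2a,3a\}$; this is exactly where the hypothesis $a\geq 2b$ is tight, since it is precisely this inequality that makes the forbidden neighbourhoods around each star edge large enough to force the contradiction.
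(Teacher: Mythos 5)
There is a genuine gap, and it sits exactly where the real work of this lemma lies. Your argument treats each variable as handing the clause gadget a single interface edge whose label lies in $R_T=\{a,2a,3a\}$ or $R_F=\{0,a,2a\}$, and then runs a pigeonhole on a central $4$-star. But these two regimes are not disjoint: they share $a$ and $2a$, and by Lemma~\ref{lem:tomas} the regime is a property of the \emph{set} of three labels on the three free edges of the variable's top $4$-star, not of any single edge. An interface edge labelled $a$ or $2a$ carries no information about which regime its variable is in, so no pigeonhole on the interface labels themselves can separate ``all three in one regime'' from a mixed assignment. The paper's proof supplies precisely the missing translation: the case analysis of Table~\ref{tab:KM-clause} (Cases I--IV), tracing labels through the clause gadget's internal chain $e_0,e_1,e_1',e_1'',e_2,e_2',e_3$, shows that the edge of the uniting $4$-star attached to a variable in regime $\{a,2a,3a\}$ can only be labelled $0$ or $2a$, while for regime $\{0,a,2a\}$ it can only be $a$ or $3a$ --- two \emph{disjoint} two-element sets. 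Only then does the counting work: the uniting $4$-star needs the four distinct labels $0,a,2a,3a$, so three edges confined to a two-element set is impossible, whereas a $2{+}1$ split is realisable. Note also that the clause gadget of Figure~\ref{fig:KM-clause} is not a bare extended $4$-star; it is exactly the longer internal paths that force this disjointness.

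A secondary but concrete error: your claimed contradiction in the uniformity case --- that two of the three subdivision edges landing in the same $\circlenum{k}$ ``violates the distance-$2$ constraint at the centre'' --- is false. Two pendant edges of an extended $4$-star are at distance $3$ from one another (the shortest path from one to the other uses four edges), so no constraint relates them and they may freely share a value. The only constraints on a pendant edge are distance $1$ to its own star edge and distance $2$ to the other three, and any label in $\llbracket (k-1)a+b,\, ka-b \rrbracket$ already satisfies all the distance-$2$ constraints against $\{0,a,2a,3a\}$. So even granting your model of the gadget, the final step produces no contradiction. To repair the proof you would need to work with the actual gadget and establish the $\{0,2a\}$ versus $\{a,3a\}$ dichotomy for the uniting edges, as the paper does.
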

\begin{proof}
Let us consider various possibilities for the colouring of $e_1$ and $\{e'_1,e''_1\}$ (up to the order inverting map that takes $(0,\ldots,3a)$ to $(3a,\ldots,0)$). These are drawn in Table~\ref{tab:KM-clause} (essentially reproduced from \cite{KM18}) together (in some cases) with why they lead to contradiction.

Cases III and IV show the valid possibilities. The three edges where the $4$-star unites the variable gadget repeatable parts has only two possibilities for each of the variable regimes $\{a,2a,3a\}$ or $\{0,a,2a\}$ (namely, $\{0,2a\}$ and $\{a,3a\}$, respectively). The claimed behaviour is clear. 
\end{proof}

\newpage
\begin{figure}
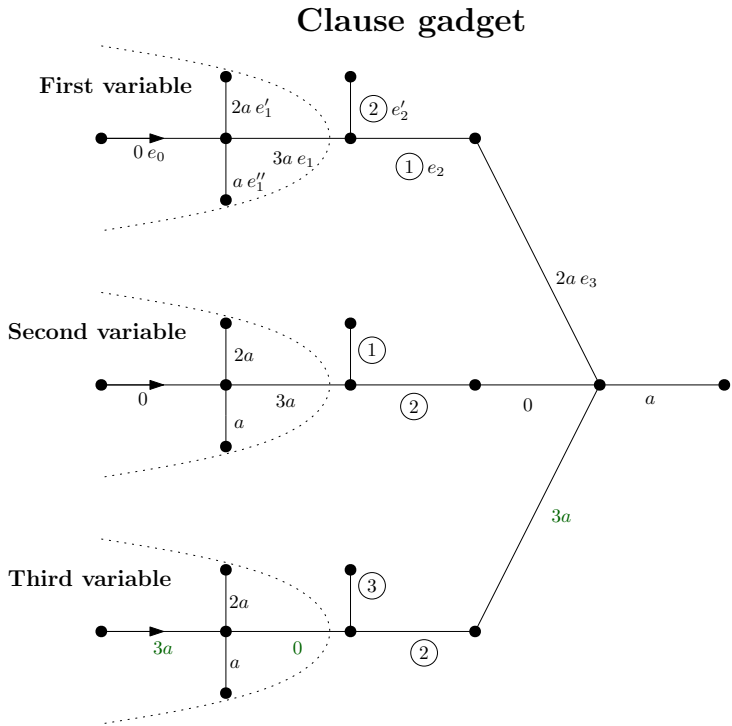

\imgw{eng-klauzule_3a_a_b-eps-converted-to}{0.8}
\caption{Clause gadget (adapted from \cite{KM18}).}
\label{fig:KM-clause}
\end{figure}
\begin{table}[h]
	\begin{center}
		\begin{tabular}{|c||c|c|c|c|}\hline
			&	\multirow{4}{*}{$e_0$}&$e_1''$& & \\
			&												 &$e_1'$& & \\
			&				&\multirow{2}{*}{$e_1$}&$e_2'$& \\
		 & &&$e_2$&$e_3$\\\hline 
			\hline
			I.&	\multirow{4}{*}{$0$}&$2a$ or $3a$& & \\
			  &											&$3a$ or $2a$& & \\
			  &			 &\multirow{2}{*}{$a$}&Both in $\circlenum{3}$. Impossible& \\
			  & && Both in $\circlenum{3}$. Impossible &---\\\hline 
			II.&	\multirow{4}{*}{$0$}&$a$ or $3a$& & \\
				  &										&$3a$ or $2a$& & \\
				  &		 &\multirow{2}{*}{$2a$}&Both in $\circlenum{1}$. Impossible& \\
			   & &&Both in \circlenum{1}. Impossible&---\\\hline 
			III.&	\multirow{4}{*}{$0$}&$2a$ or $a$& & \\
				   &										&$a$ or $2a$& & \\
					  &	 &\multirow{2}{*}{$3a$}&In $\circlenum{1}$& \\
				& &&In $\circlenum{2}$&$0$ or $\circlenum{3}$\\\hline 
			IV.&	\multirow{4}{*}{$0$}&$2a$ or $a$& & \\
			&										&$a$ or $2a$& & \\
			&	 &\multirow{2}{*}{$3a$}&In $\circlenum{2}$& \\
			& &&In $\circlenum{1}$&$2a$ or $\circlenum{3}$\\\hline
		\end{tabular}
	\end{center}
	\caption{Clause gadget table (adapted from \cite{KM18}).}
	\label{tab:KM-clause}
	\end{table}

\section{Final Remarks}

We give several directions for future work.
First, determining the boundary for $k$ between P and \NP-complete, in {\sc $L(p,q)$-Edge-$k$-Labelling}, for all $p,q$ is still open except if $(p,q)=(1,1)$ and $(p,q)=(2,1)$.
For $(p,q)=(1,1)$ it is known to be $4$ (it is in P for $k<4$ and is \NP-complete for $k\geq 4$) \cite{Ma02}; and for $(p,q)=(2,1)$ it is known to be $6$ (it is in P for $k<6$ and is \NP-complete for $k\geq 6$) \cite{KM18}.

A second open line of research concerns 
{\sc $L(p,q)$-Labelling} for classes of graphs that omit a single graph $H$ as an induced subgraph (such graphs are called {\it $H$-free}). A rich line of work in this vein includes \cite{BJMPS20}, where it is noted, for $k\geq 4$, that {\sc $L(1,1)$-$k$-Labelling} is in P over $H$-free graphs, when $H$ is a linear forest; for all other $H$ the problem remains \NP-complete. If $k$ is part of the input and $p=q=1$, the only remaining case is $H=P_1+P_4$~\cite{BJMPS21}.
Corollary~\ref{c-line} covers, for every $(p,q)\neq (0,0)$, the case where $H$ contains an induced claw (as every line graph is claw-free). 
For bipartite graphs, and thus for $H$-free graphs for all $H$ with an odd cycle, the result for {\sc $L(p,q)$-$k$-Labelling} is known from \cite{JKM09}, at least in the case $p>q$. 

As our final open problem, for $d\geq 1$, the complexity of {\sc $L(p,q)$-Labelling} on graphs of diameter at most~$d$ has, so far, only been determined for $a,b\in \{1,2\}$~\cite{BGMPS21}.


\end{document}